%% file: main.tex
\documentclass{article}

\usepackage[pagewise]{lineno}

\usepackage{pgfplots}
\pgfplotsset{compat=1.18}
\usepackage{fullpage}
\usepackage{hyperref}
\usepackage{graphicx} 
\usepackage[utf8]{inputenc}
\usepackage[english]{babel}
\usepackage{mathtools}
\usepackage{amsmath}
\usepackage{amsfonts}
\usepackage{amssymb}
\usepackage{amsthm}
\usepackage{graphicx}
\usepackage{xcolor}
\usepackage{float}
\usepackage{tabularx}
\usepackage[normalem]{ulem}
\usepackage{tikz}
\usepackage{comment}
\usepackage{todonotes}
\usepackage{booktabs}
\usepackage{multirow}
\usepackage[export]{adjustbox}
\usepackage{float}
\usepackage{caption}
\usepackage{subcaption}

\author{Tom Baumbach, Matias Bender}
\newtheorem{definition}{Definition}[section]
\newtheorem{lemma}[definition]{Lemma}
\newtheorem{example}[definition]{Example}
\newtheorem{theorem}[definition]{Theorem}
\newtheorem{remark}[definition]{Remark}
\newtheorem{corollary}[definition]{Corollary}
\newtheorem{proposition}[definition]{Proposition}

\theoremstyle{definition}
\theoremstyle{remark}

\usepackage{algorithm}
\usepackage{algpseudocode}

\usepackage{mathtools}
\usepackage[arrow,matrix,curve]{xy}

\newcommand{\A}{\mathcal{A}}
\newcommand{\NN}{\mathbb{N}}
\newcommand{\Lin}{\mathcal{L}}

\newcommand{\CC}{\mathbb{C}}
\newcommand{\K}{\mathcal{K}}

\newcommand{\B}{\mathcal{B}}
\newcommand{\Nt}{\mathcal{N}_t}
\newcommand{\Ht}{\mathcal{H}_t}

\newcommand{\tA}{\mathcal{A}}
\newcommand{\R}{\mathbb{R}}
\newcommand{\Z}{\mathbb{Z}}

\newcommand{\RA}{\mathbb{R}[\mathcal{A}]}
\newcommand{\lf}{\Lambda}

\newcommand{\Rx}{\mathbb{R}[x_1,\dots,x_n]}

\newcommand{\Lx}{\mathbb{R}[x_1^{\pm 1},\dots,x_n^{\pm 1}]}

\newcommand{\rr}[1]{{\sqrt[\R]{#1}}}

\newcommand{\rIRA}{\sqrt[\mathbb{R}]{I}}

\newcommand{\VL}{\mathcal{V}_{\RA}}

\newcommand{\lin}{\mathrm{lin}}
\renewcommand{\ker}{\mathrm{ker}}
\newcommand{\rk}{\mathrm{rank}}
\newcommand{\vect}{\mathrm{vec}}
\newcommand{\Hom}{\mathrm{Hom}}
\newcommand{\Spec}{\mathrm{Specm}}

\title{
  The Moment Method for Computing Real Points of Sparse Polynomial Systems
}
\author{Tom Baumbach\footnote{
Technische Universität Berlin, Institut für Mathematik, Sekr. MA4-1, Straße
des 17. Juni 136, 10623 Berlin, Germany \\
Email address: \texttt{baumbach@math.tu-berlin.de}} \and Mat\'ias
Bender\footnote{Inria \& CMAP, CNRS, École Polytechnique, Institut
  polytechnique de Paris, Palaiseau, France \\
Email address: \texttt{matias.bender@inria.fr}
}}
\date{}

\begin{document}

\maketitle


\begin{abstract}
One of the most important problems in computational algebra is the computation of real solutions of a polynomial system.
In \cite{lasserre2008semidefinite}, Lasserre, Laurent, and Rostalski introduced a numerical method to compute the real radical of an ideal defining a finite set of real points.
Their method is based on moment matrices and reduces the problem to solving a semidefinite program.

In this work, we build on this method and on the work of Laurent and
Mourrain \cite{laurent2009generalized} to exploit the sparsity of
the input polynomials.
For this purpose, we extend the theory of
moment matrices to the setting of affine toric varieties.
We prove that, for sparse polynomial systems, this
approach allows us to recover real points on affine toric varieties. Moreover, we show that a projection of the spectrahedron associated with the semidefinite program is, in fact, a polytope, and we propose a variation of the method to compute only one real solution of the system.

\end{abstract}

\input{Sections/Introduction}

\input{Sections/Preliminaries}

\input{Sections/Sparse_Extension}

\input{Sections/ComputingOneRoot}

\input{Sections/Experiments}

\paragraph{Acknowledgments}
The authors thank Philipp di Dio for useful explanations on the moment
problem and Lorenzo Baldi for his aid with \texttt{MomentPolynomialOpt.jl} \cite{MomentPolynomialOpt}.
The authors acknowledge the use of AI (ChatGPT 5.6 Sol) for assistance
with the drafting of the manuscript.
The final content, analysis and conclusions
remain the sole responsibility of the authors.

This research benefited from the support of the 
FMJH Program Gaspard Monge for optimization and operations 
research and their interactions with data science.
MB was partially supported by the PGMO grant SOAP and 
ANR JCJC PeACE (ANR-25-CE48-3760).

\bibliographystyle{plain}
\bibliography{Sections/biblio}

\newpage

\input{Sections/Appendix}


\end{document}

%% file: Sections/Introduction.tex
\section{Introduction}

Nonlinear algebra is the study of curved geometric objects through the
lens of algebra.
Drawing an analogy with linear algebra, polynomials replace linear
forms and varieties replace vector spaces.
These objects are much more expressive than linear ones, allowing us
to tackle more general problems.
The central computational problem in nonlinear algebra is how to solve
polynomial systems, that is, how to compute the common zeros of a
system of polynomial equations.
However, the meaning of solving depends on the context.
On the one hand, we might be interested in special kinds of solutions,
such as complex, real, positive, or rational solutions.
On the other hand, we might be interested in numerical approximations
of the solutions or in exact representations of them.
Which meaning is appropriate depends on the application, and there is
a broad amount of bibliography addressing these different problems.
We refer to \cite{dickenstein_solving_2005} and the references therein
for a broad introduction to different solving methods.

\paragraph{Real solving}
In this paper we are concerned with finding real solutions, and
especially positive ones.
This problem arises in several contexts such as chemical reaction networks, dynamical systems, kinematics, statistics, and game
theory; see for example \cite[Chapter~9]{sommese2005numerical} and references therein.
Arguably, this is among the hardest computational problems that one can hope to solve \emph{efficiently}.
Generally, to solve this problem, we consider its complex version and
then extract the real solutions from the complex ones.
This has several drawbacks. In particular, the number of complex
solutions can be much larger than the number of real, or positive, ones.
For a generic square dense system, that is, a system with the same
number of equations and unknowns, the number of isolated complex solutions,
counted with multiplicity, is given by B\'ezout's theorem as the product of
the degrees of the equations.
In contrast, the number of real solutions of random systems depends
strongly on the distribution of the coefficients. For instance, the
expected number of real roots grows as the square root of the number
of complex roots in some models and only logarithmically in others;
see \cite{edelman1995many} and the references therein.
Moreover, in several applications, such as some classes of chemical
reaction networks, the objective is only to compute one positive solution, and
structural arguments may guarantee that it is unique
\cite{feinberg1995existence}.
In such cases, computing all the real solutions performs substantially
more work than required.

From an algorithmic point of view, interval Newton methods, going
back to Krawczyk \cite{krawczyk1969newton}, and Smale's $\alpha$-theory
\cite{smale1986newton} provide techniques to certify approximate zeros.
In particular, modern implementations can certify that the exact zero
associated with a complex approximation is real
\cite{hauenstein2012alphacertified,breiding2023certifying}.
Related symbolic--numeric methods construct certified Hermite matrices from
approximate roots and use their signatures to certify the existence of a real
root near a given point \cite{akoglu2023certified}.
For systems depending on parameters, Hermite matrices can also describe
semialgebraic regions of the parameter space on which the number of real
solutions is constant \cite{le2022parametric}.
A recent alternative approach to this classification problem uses generic
rational univariate representations and Sturm--Habicht sequences
\cite{corniquel2026parametric}.

Let us also mention that homotopy techniques based on Viro's
patchworking method allow us to track only real solutions in certain
situations \cite{ergur2023polyhedral}, but their applicability is constrained to special families of systems.
In general, even when the real locus of a variety is finite, its
complex locus may have positive dimension, making methods that first compute
all complex solutions unsuitable.
Special techniques were developed to work around this issue, for
example critical point methods
\cite{bank2001polar,basu2006algorithms}, at the cost of auxiliary
computations over the complex numbers.
These methods are designed to recover the real locus and do not
directly reduce the computation when only one positive point is required.

Among the general methods for computing real solutions, three
families are particularly relevant here.
First, we can use quantifier elimination methods such as Cylindrical
Algebraic Decomposition (CAD) to find the solutions.
CAD is a very powerful and general method that can be used to solve
most real algebraic problems \cite{collins1976quantifier}.
However, because of its generality, its complexity is doubly
exponential in the number of variables, e.g., \cite[Chapter~11.4]{basu2006algorithms}, and in practice this method
can be too slow and consume too much memory for our applications.

Second, we have subdivision methods, e.g.,~\cite{sherbrooke1993computation,mourrain2009subdivision,mantzaflaris2011continued}, which rely on decomposing the
domain of interest into smaller regions and analyzing them
independently.
Starting from an initial box containing the solutions, the method
repeatedly subdivides the domain and applies local tests to determine
whether a given region can contain a solution.
These tests typically rely on interval arithmetic, Bernstein
representations, or variants of Descartes' rule, allowing one to
certify that a box contains no root or that it contains a unique one \cite{moore_interval_2009}.
If the test is inconclusive, the region is subdivided further.
Subdivision methods are global procedures: they can guarantee that all
solutions in a prescribed domain are found, in contrast with purely
local methods such as Newton's method.
Though powerful, these methods suffer when computing singular
solutions and are affected by the curse of dimensionality, since the
number of regions grows exponentially with the dimension.
They may benefit from the faster evaluation of sparse polynomials, but they do not otherwise exploit the algebraic structure induced by the supports.

Third, we have methods based on moments.
Introduced in \cite{lasserre2008semidefinite}, 
these methods allow us to approximate the real radical of an ideal,
that is, the ideal of every polynomial vanishing on the real locus,
when this real locus is finite.
From this real radical it is possible to approximate the real
solutions, even when the complex ones are infinite.
The idea is to construct a linear functional on
$\R[x_1,\dots,x_n]$ whose moment matrix has the real radical of the
input ideal as its kernel.
For this, one considers linear functionals annihilating the ideal and
selects the \emph{positive} ones, that is, those which are nonnegative on
squares.
This condition translates into the positivity of a moment matrix,
which must be positive semidefinite.
To obtain such an functional, the method constructs finite
  truncations of it.
  Each truncation is obtained from a point of maximal rank in a
  spectrahedron defined by these constraints, which can be
  approximated by semidefinite programming \cite{SDP}.
The kernel of the corresponding moment matrix then provides
approximations of equations defining the real radical
\cite{lasserre2008semidefinite,baldi2021computing}.
When the real solutions are finite, using the flat extension theorem
\cite{curto1998flat}, one can approximate the real
solutions by reducing the problem to an eigenvalue
problem \cite{lasserre2013moment}.
The available degree bounds for real radical computation are,
however, extremely pessimistic \cite{becker1993computation}.
In practice, the method can sometimes be applied successfully, but it
may lead to numerical inaccuracies and very large semidefinite
programs.
The objective of this work is to improve the situation in both
directions.

\paragraph{Structured systems}
As nonlinear algebra allows us to tackle harder problems, it also has
an important drawback: polynomials are much harder to compute with
than linear systems, which restricts their practical applicability.
For this reason, it is of central interest to develop algorithms that
exploit the inner structure of the problems of interest, in order to
make these methods applicable to larger systems.
One of the most common structures is sparsity: only a small subset of
the possible monomials appears in the input polynomials.
Over the complex numbers, this structure is exploited by sparse
elimination and resultant methods \cite{emiris1996sparse,bender2022toric}, sparse Gr\"obner
basis algorithms \cite{faugere2014sparse,bender2019grobner}, and polyhedral homotopy
continuation \cite{huber1995polyhedral}, as well as by geometric resolution
methods for sparse systems \cite{herrero2010computing}. See also
\cite{bender2022solving} and the references therein.
In general, these techniques are based on toric
geometry, which provides a geometric framework that relates sparse polynomial
systems to the combinatorics of their Newton polytopes
\cite{cox2024toric}.

However, as explained above, there may still be a large gap between
the numbers of complex and real, or positive, solutions.
For a generic square system, the number of isolated solutions in the
complex torus, counted with multiplicity, is the mixed volume of the Newton
polytopes of the equations by Bernstein's theorem
\cite{bernshtein_number_1975}.
For positive solutions, Khovanskii's fewnomial theory gives bounds in
terms of the number of monomials \cite{khovanskiui1991fewnomials}, and sharper
bounds for the expected number of real solutions are known in some random
models, e.g.,~\cite{burgisser2019number}.

To the best of our knowledge, the moment approach has not been
generalized to the setting of toric varieties in order to exploit
sparsity. The closest related result is the generalized flat extension
theorem of Laurent and Mourrain \cite{laurent2009generalized}. It
allows moment matrices to be indexed by sets of monomials connected to
one and therefore already reduces the set of monomials under
consideration. However, it is formulated in a polynomial ring and does
not take into account the relations among monomials in a semigroup
algebra. In this work, we incorporate these relations through the
toric coordinate ring and relate the moment construction directly to
the combinatorics of the supports of the input polynomials.

\paragraph*{Contributions}
In this paper, we extend the moment approach to the semigroup algebra
$\RA$ associated with a finite monomial support $\A$. Its real maximal spectrum
is the real part of an affine toric variety. We explain how its points are
related to solutions in the real torus and give a lattice condition under which
the two notions agree; see Section~\ref{sec:realtoricvar}, in particular
Proposition~\ref{prop:solsOnTorus}.

First, we characterize the finite-rank moment matrices over $\RA$ and
show that their kernels are real radical precisely when the corresponding
functional is a finite linear combination of evaluations
(Theorem~\ref{TheoremFiniteRankMomentMatrix}). We also show that the normalized
positive functionals annihilating the input ideal form the convex hull of the
evaluations at its real toric points (Theorem~\ref{thm:K_is_polytope}). We then
prove a sparse flat extension theorem
(Theorem~\ref{TheoremSparseFlatExtension}). In contrast with the generalized
flat extension theorem of Laurent and Mourrain, the truncation must also
contain generators of the toric ideal; this requirement is measured by the
constant $\rho_\A$; see \eqref{eq:rhoA}.

Second, we use these results to extend the real radical algorithm of
Lasserre, Laurent, and Rostalski to the toric setting. We prove that the kernels
obtained from maximal-rank points of the truncated spectrahedra eventually
generate the real radical in $\RA$
(Theorem~\ref{TheoremNtEqRealRadicalI}), give rank conditions which certify
this event (Theorem~\ref{TheoremNtEqualsRealRadicalWithRankCondition}), and
obtain a terminating algorithm for computing all the real toric points
(Algorithm~\ref{AlgorithmMomentApproachForLx} and
Theorem~\ref{TheoremAlgoLxTerminates}).

Third, we study how to compute only one real toric point. We prove
that, after projecting onto moment coordinates of fixed degree, the truncated
spectrahedra eventually coincide with the simplex whose vertices are the
truncated evaluations at the real toric points, even though the spectrahedra
themselves need not be polyhedra (Theorem~\ref{thm:KtIsPolytope}). This leads
to a randomized algorithm which terminates almost surely and returns every real
toric point with positive probability
(Algorithm~\ref{AlgorithmOneRoot} and
Theorem~\ref{thm:one-root-correctness}). We relate this probability to the
solid angle of the corresponding normal cone and discuss the case in which
there is a unique solution with nonnegative toric coordinates
(Proposition~\ref{prop:unique-nonnegative-solution}).

Finally, we compare the dense and sparse moment constructions, and
the computation of all solutions with that of one solution, on several
examples. The experiments indicate that the sparse construction produces
smaller semidefinite programs and can improve running time and numerical
robustness. For the one-solution variant, the observed improvement in accuracy
is more modest and depends on the semidefinite programming solver; see
Section~\ref{SectionComputationalResults}.

To make the our results self-contained, we give complete proofs
whenever the arguments must be adapted to the semigroup algebra, even
when they closely follow their classical polynomial-ring
counterparts. When an argument carries over without change, we instead
refer to the original treatments, principally those of Curto and
Fialkow \cite{curto1998flat}, Lasserre, Laurent, and Rostalski
\cite{lasserre2008semidefinite}, Laurent and Rostalski
\cite{laurent2012approach}, and Laurent and Mourrain
\cite{laurent2009generalized}.


%% file: Sections/Preliminaries.tex
\section{Preliminaries}
\label{Preliminaries}

This section establishes the algebraic framework and notation used throughout this paper. We assume familiarity with the fundamental theory of commutative rings and fields. Standard references for this material include \cite{AlgebraLang} and \cite{UsingAlgGeo}.

Let $R$ be a commutative ring with unity. For a subset $S \subseteq R$, we denote by $\langle S \rangle$ the ideal generated by $S$.
An ideal $\mathfrak{m} \subsetneq R$ is \emph{maximal} if there exists no ideal $J$ such that $\mathfrak{m} \subsetneq J \subsetneq R$. Equivalently, $\mathfrak{m}$ is maximal if and only if the quotient ring $R/\mathfrak{m}$ is a field. By Zorn's Lemma, every proper ideal is contained in a maximal ideal.
An ideal $I \subseteq R$ is called \emph{zero-dimensional} if the Krull dimension of the quotient ring $R/I$ is zero.
A central object of interest in this paper is the class of real radical ideals.
The \emph{real radical} of an ideal $I \subseteq R$, denoted by $\rIRA$, is defined as:
\begin{equation*}
    \rIRA \coloneqq \left\{ f \in R \ \middle|\ f^{2m} + \sum_{j} h_j^2 \in I \text{ for some } m \ge 1, h_j \in R \right\}.
\end{equation*}
An ideal $I$ is called \emph{real radical} if $\rIRA = I$.
Real radical ideals satisfy the following crucial property, often used as an alternative definition (e.g., in fields of characteristic zero):

\begin{lemma}[{\cite[Proposition 4.1.7]{bochnak2013real}}]
    An ideal $I$ is real radical if and only if for all $p_i \in R$:
    \[
        \sum_i p_i^2 \in I \implies p_i \in I \quad \text{for all } i.
    \]
\end{lemma}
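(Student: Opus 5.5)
Both implications follow directly from the definition of $\rIRA$ given above. Throughout, $I$ is a proper ideal of the commutative ring $R$ (as is implicit for real radical ideals). Note also that $I \subseteq \rIRA$ always holds: for $f \in I$ we have $f^{2} \in I$, so taking $m=1$ and no $h_j$ shows $f \in \rIRA$. Hence "$I$ is real radical" just means $\rIRA \subseteq I$.

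\emph{($\Rightarrow$)} Assume $I = \rIRA$ and suppose $\sum_i p_i^2 \in I$. Fix an index $k$. The plan is to show $p_k \in \rIRA$ by exhibiting an identity of the form in the definition. Multiplying by $p_k^2$ gives
\[
p_k^4 + \sum_{i \neq k} (p_k p_i)^2 \in I .
\]
This is exactly $f^{2m} + \sum_j h_j^2 \in I$ with $f = p_k$, $m = 2$ and $h_i = p_k p_i$. Therefore $p_k \in \rIRA = I$. (One could use $m=1$ with $h_i = p_i$ directly; multiplying by $p_k^2$ is merely a safe choice in case one insists on $m \ge 2$.)

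\emph{($\Leftarrow$)} Assume the sums-of-squares property and let $f \in \rIRA$. Then $f^{2m} + \sum_j h_j^2 \in I$ for some $m \ge 1$. Writing $f^{2m} = (f^m)^2$, the property yields $f^m \in I$. It remains to descend from $f^m$ to $f$, which is the only step needing care. For this I use the standard trick: pick $r$ with $2^r \ge m$, so that $f^{2^r} = f^m \cdot f^{2^r - m} \in I$. Now induct downward. If $g^{2} \in I$, then applying the property to the single square $g^2$ gives $g \in I$. Applying this successively to $g = f^{2^{r-1}}, f^{2^{r-2}}, \ldots, f$ shows $f \in I$. Hence $\rIRA \subseteq I$, and combined with $I \subseteq \rIRA$ we get equality.

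The main obstacle is this descent from a power of $f$ to $f$ itself. It is handled by the repeated-squaring argument, which needs only the one-term case of the hypothesis.
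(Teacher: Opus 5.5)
Your proof is correct. The paper gives no argument of its own and only cites \cite[Proposition~4.1.7]{bochnak2013real}. There the equivalence is a consequence of the more general fact that $\rIRA$ is the smallest real ideal containing $I$. Proving that fact requires, among other things, showing that $\rIRA$ is closed under addition, which is the only genuinely nontrivial computation in that route.

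You bypass this. For the stated equivalence, three facts suffice:
\begin{itemize}
\item $I\subseteq\rIRA$, which is trivial.
\item The sums-of-squares hypothesis applied to $I=\rIRA$ puts each $p_k$ into $\rIRA$ directly by definition. Your $m=1$ version already does this, since the definition allows $m=1$, so the detour through $p_k^4$ is unnecessary.
\item Any ideal with the sums-of-squares property contains $\rIRA$. This is exactly your repeated-squaring descent, and it is the ``minimality'' half of the cited proposition.
\end{itemize}

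Your route is therefore shorter and self-contained. The cited route additionally shows that $\rIRA$ is always an ideal and is itself real radical. The paper relies on that later, for example when it takes a finite generating set of $\rIRA$, but it is not part of this lemma.

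One small correction: properness of $I$ is neither implicit nor needed. The ideal $I=R$ satisfies both conditions trivially, and your argument never uses $I\neq R$.
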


We now relate ideals to their geometric solution sets.
Let $K$ be a field and $x_1, \dots, x_n$ be indeterminates.
A \emph{monomial} is a product of the form $x^\alpha \coloneqq x_1^{\alpha_1} \cdots x_n^{\alpha_n}$, where the exponent vector $\alpha = (\alpha_1, \dots, \alpha_n)$ lies in $\mathbb{Z}^n$. The \emph{total degree} of a monomial is defined as $\deg(x^\alpha) \coloneqq |\alpha| = \sum_{i=1}^n |\alpha_i|$.
We denote the ring of polynomials by $K[\textbf{x}] \coloneqq K[x_1, \dots, x_n]$.
The ring of Laurent polynomials $K[x_1^{\pm 1}, \dots, x_n^{\pm 1}]$, consists of $K$-linear combinations of monomials with integer exponents.
Note that $K[x_1^{\pm 1}, \dots, x_n^{\pm 1}]$ can be viewed as the localization of the polynomial ring $K[x]$ at the monomial $x_1 \cdots x_n$. In the context of polynomial rings $R$ over a field $K$, this is equivalent to $R/I$ being a finite-dimensional vector space over $K$.
For a set of polynomials $F = \{f_1, \dots, f_\mu\} \subset K[x_1,\dots,x_n]$, the \emph{affine variety} defined by $F$ is the set of common zeros:
\[
    \mathcal{V}(F) \coloneqq \left\{ \mathbf{a} \in K^n \mid f(\mathbf{a}) = 0 \text{ for all } f \in F \right\}.
\]
If $I = \langle f_1, \dots, f_\mu \rangle$, then $\mathcal{V}(f_1, \dots, f_\mu) = \mathcal{V}(I)$.
Conversely, for any subset $V \subseteq K^n$, the \emph{vanishing ideal} of $V$ is defined as follows:
    \[
        \mathcal{I}(V) \coloneqq \left\{ f \in K[x] \mid f(\mathbf{a}) = 0 \text{ for all } \mathbf{a} \in V \right\}.
    \]
When working over $\mathbb{R}$, we are interested in real solutions. Let $I \subseteq \mathbb{R}[x]$ be an ideal. The \emph{real variety} associated to $I$ is:
\[
    \mathcal{V}_{\mathbb{R}}(I) \coloneqq \{ \mathbf{a} \in \mathbb{R}^n \mid f(\mathbf{a}) = 0 \ \forall f \in I \}.
\]
When working with Laurent polynomials, we restrict evaluation to nonzero coordinates. For this purpose, we define the algebraic torus $(K^*)^n$ where $K^* \coloneqq K \setminus \{0\}$.
The connection between the geometry of the real variety and the algebra of the ideal is given by the \emph{Real Nullstellensatz}. This theorem generalizes Hilbert's classical Nullstellensatz to the real setting.

\begin{theorem}[Real Nullstellensatz {\cite[Theorem 4.1.4]{bochnak2013real}}]
    Let $I$ be an ideal in $\mathbb{R}[\textbf{x}]$. Then the vanishing ideal of the real variety is the real radical of $I$:
    \[
        \mathcal{I}(\mathcal{V}_{\mathbb{R}}(I)) = \rIRA.
    \]
\end{theorem}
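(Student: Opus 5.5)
The inclusion $\rIRA \subseteq \mathcal{I}(\mathcal{V}_{\mathbb{R}}(I))$ is elementary, so I would do it first. Let $f \in \rIRA$ and write $f^{2m} + \sum_j h_j^2 \in I$. Take any $\mathbf{a} \in \mathcal{V}_{\mathbb{R}}(I)$. Evaluating at $\mathbf{a}$ gives $f(\mathbf{a})^{2m} + \sum_j h_j(\mathbf{a})^2 = 0$. This is a sum of nonnegative reals, so every term vanishes, and in particular $f(\mathbf{a}) = 0$. Hence $f \in \mathcal{I}(\mathcal{V}_{\mathbb{R}}(I))$.

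For the reverse inclusion I would argue by contraposition. Suppose $f \notin \rIRA$; the goal is to produce a real point of $\mathcal{V}_{\mathbb{R}}(I)$ at which $f$ does not vanish. I would first check that $\rIRA$ is a real radical ideal, so that the lemma above (\cite[Proposition 4.1.7]{bochnak2013real}) applies: it is real in the sense that $\sum p_i^2 \in \rIRA$ forces every $p_i \in \rIRA$. By Zorn's lemma, choose an ideal $\mathfrak{p} \supseteq I$ that is maximal among real radical ideals not containing $f$. One then shows $\mathfrak{p}$ is prime. Suppose $gh \in \mathfrak{p}$ with $g,h \notin \mathfrak{p}$. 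Consider the real radicals of $\mathfrak{p}+\langle g\rangle$ and of $\mathfrak{p}+\langle h\rangle$; by maximality each contains $f$. Multiplying the resulting identities produces a relation of the form $f^{2N} + \text{SOS} \in \sqrt[\mathbb{R}]{\mathfrak{p}} = \mathfrak{p}$, which gives $f \in \mathfrak{p}$, a contradiction.

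So $\mathfrak{p}$ is a real prime ideal, and its fraction field $F = \mathrm{Frac}(\mathbb{R}[\mathbf{x}]/\mathfrak{p})$ is formally real: $-1$ is not a sum of squares in $F$. Such a field admits an ordering, and hence embeds into a real closed field $R \supseteq \mathbb{R}$. The images $\bar{x}_1,\dots,\bar{x}_n \in R$ of the variables give a point of $R^n$ satisfying every generator $g_1,\dots,g_s$ of $I$ together with $f \neq 0$; here $I$ is finitely generated by Hilbert's basis theorem. Thus the first-order sentence
\[
\exists \mathbf{y}\,\bigl(g_1(\mathbf{y})=0 \wedge \dots \wedge g_s(\mathbf{y})=0 \wedge f(\mathbf{y})\neq 0\bigr),
\]
whose coefficients lie in $\mathbb{R}$, holds in $R$. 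By the Tarski--Seidenberg transfer principle (model completeness of real closed fields) it also holds in $\mathbb{R}$. This yields $\mathbf{a} \in \mathcal{V}_{\mathbb{R}}(I)$ with $f(\mathbf{a}) \neq 0$, so $f \notin \mathcal{I}(\mathcal{V}_{\mathbb{R}}(I))$.

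I expect the main obstacle to be this last step, passing from the abstract real closed field back to $\mathbb{R}$. It needs Artin--Schreier theory (formally real fields are orderable and have real closures) and the transfer principle, neither of which is proved in the paper, so in the writeup I would quote them from \cite{bochnak2013real}. The primality argument is the other delicate point. There I would be careful to arrange the sum-of-squares identities so that their product again has the shape $f^{2N} + \text{SOS}$. This works because products and sums of sums of squares are again sums of squares, and because $g^2$-type cross terms can be absorbed using $gh \in \mathfrak{p}$.
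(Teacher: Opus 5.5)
Your proposal is correct. The paper gives no proof of its own and only cites \cite[Theorem~4.1.4]{bochnak2013real}, and your argument is essentially the standard proof in that reference: a Zorn argument gives a real prime ideal $\mathfrak p \supseteq I$ with $f \notin \mathfrak p$, and then Artin--Schreier theory with Tarski--Seidenberg transfer (the Artin--Lang homomorphism theorem) turns the generic point of $\R[x_1,\dots,x_n]/\mathfrak p$ into a point of $\mathcal{V}_{\R}(I)$ where $f$ does not vanish. The one step you only announce is that $\sqrt[\R]{J}$ is closed under sums, which needs a short argument (e.g.\ that $2^{2k-1}(u^{2k}+v^{2k})-(u+v)^{2k}$ is a sum of squares); alternatively, you can skip it by applying Zorn's lemma directly to the ideals $J \supseteq I$ that contain no element $f^{2m}+\sigma$ with $\sigma$ a sum of squares, since maximal such ideals are automatically real and prime.
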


Finally, we state a result concerning the finiteness of the solution set, which is central to our computational approach.

\begin{theorem}[{\cite[Proposition 7]{becker1993computation}}]
    Let $I$ be an ideal in $\mathbb{R}[\textbf{x}]$. The real variety $V_{\mathbb{R}}(I)$ is a finite set if and only if the real radical ideal $\rIRA$ is zero-dimensional.
    Furthermore, the number of real solutions is given by:
    \[
        |\mathcal{V}_{\mathbb{R}}(I)| = \dim_{\mathbb{R}} \left( \mathbb{R}[x] / \rIRA \right).
    \]
\end{theorem}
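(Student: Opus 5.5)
We prove both directions of the statement (Becker–Neuhaus, Proposition 7) via the Real Nullstellensatz, writing $V \coloneqq \mathcal{V}_{\mathbb{R}}(I)$ and $J \coloneqq \rIRA = \mathcal{I}(V)$. The idea is to identify $\mathbb{R}[x]/J$ with the ring of real-valued polynomial functions on $V$. We then compare its dimension with $|V|$ through evaluation maps and Lagrange-type interpolation.

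First suppose $V$ is finite, say $V = \{a_1,\dots,a_r\}$. Consider the evaluation map
\[
\mathrm{ev}\colon \mathbb{R}[x] \to \mathbb{R}^r, \qquad f \mapsto (f(a_1),\dots,f(a_r)).
\]
Its kernel is exactly $\mathcal{I}(V) = J$. It is surjective: since the $a_i$ are distinct real points, for each $i$ we can form a product of affine linear forms $\ell_{ij}$ with $\ell_{ij}(a_j)=0$ and $\ell_{ij}(a_i)\neq 0$. Normalizing gives an interpolation polynomial $e_i$ with $e_i(a_j)=\delta_{ij}$. Hence $\mathbb{R}[x]/J \cong \mathbb{R}^r$ as $\mathbb{R}$-algebras, so $\dim_{\mathbb{R}} \mathbb{R}[x]/J = r = |V|$. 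A finite-dimensional $\mathbb{R}$-algebra is Artinian, hence has Krull dimension zero, so $J$ is zero-dimensional.

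Conversely, suppose $J$ is zero-dimensional. As recalled in the preliminaries, in a polynomial ring over a field this means $\mathbb{R}[x]/J$ is a finite-dimensional $\mathbb{R}$-vector space. Take any finite set of distinct points $a_1,\dots,a_s \in V$. The evaluation map $\mathbb{R}[x]\to\mathbb{R}^s$ kills $J = \mathcal{I}(V)$, since each element of $J$ vanishes on all of $V$, so it factors through $\mathbb{R}[x]/J$. By the same interpolation argument this map is surjective, so $s \le \dim_{\mathbb{R}} \mathbb{R}[x]/J < \infty$. Thus $V$ is finite, and the first part then gives equality of the count.

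The only genuinely nontrivial input is $\mathcal{I}(V)=\rIRA$, the Real Nullstellensatz, which we take as given. Without it, the kernel of the evaluation map would only contain $\rIRA$ rather than equal it, and we could not identify $\mathbb{R}[x]/\rIRA$ with functions on $V$. The remaining point needing care is the equivalence "zero-dimensional $\iff$ finite-dimensional quotient." In the direction used in the converse, this is Noether normalization: a finitely generated $\mathbb{R}$-algebra of Krull dimension zero is finite over $\mathbb{R}$. In the first part we only needed the easy direction, finite-dimensional implies Artinian. The interpolation step is routine.
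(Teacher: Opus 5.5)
Your proof is correct. The paper gives no proof of this statement, since it is quoted from Becker--Neuhaus. The natural comparison is therefore with the paper's proof of the toric analogue, Lemma~\ref{lem:Real_Nullstellensatz_Toric_variety}, which argues structurally:
\begin{itemize}
\item a zero-dimensional quotient is Artinian, and real radicality makes it reduced;
\item so it is a finite product of finite extensions of $\mathbb{R}$;
\item a factor $\mathbb{C}$ is excluded because $1^2+i^2=0$ there would contradict real reducedness;
\item hence the quotient is $\mathbb{R}^r$, and its $r$ maximal ideals are exactly the real points.
\end{itemize}
You instead use the Real Nullstellensatz plus interpolation to identify $\mathbb{R}[x]/\rIRA$ directly with the algebra of real functions on $V$.

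The structural route gives the count $\dim_{\mathbb{R}}\mathbb{R}[x]/\rIRA=|V|$, once zero-dimensionality is known, using only that $\rIRA$ is real radical and the easy inclusion $\rIRA\subseteq\mathcal{I}(V)$. It needs no Nullstellensatz and applies verbatim to any finitely generated $\mathbb{R}$-algebra. However, it does not give the implication ``$V$ finite $\Rightarrow$ $\rIRA$ zero-dimensional''. That implication really does need the hard inclusion $\mathcal{I}(V)\subseteq\rIRA$, as you correctly point out.

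Your argument settles both implications and the count in one uniform way. It also transfers to $\mathbb{R}[\mathcal{A}]$ if you make two replacements:
\begin{itemize}
\item interpolation by affine forms becomes the Chinese remainder theorem for distinct real maximal ideals, as in Lemma~\ref{LemmaEvaluationsAndKernels};
\item the classical Nullstellensatz becomes Proposition~\ref{prop:toric_nullstellen}.
\end{itemize}

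One convention point concerns $V=\emptyset$: the quotient is then the zero ring, whose Krull dimension is not $0$. So both the statement and your first paragraph must be read with ``zero-dimensional'' meaning ``finite-dimensional quotient''. That is the convention you already adopt in the converse.
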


\subsection{Toric varieties and their real part} \label{sec:realtoricvar}

In this subsection, we briefly introduce toric varieties and their real part.

\begin{definition}
\label{Def:Coordinate_Ring_of_A}
Let $\A \subset \Z^n$ be a finite set. For $s \in \mathbb{N}$ we define the sets 
\begin{equation*}
\tA_s \coloneqq \Bigl\{ \: \sum_{i= 1}^m  \alpha_i  \: | \: m \leq s, \alpha_i \in \A  \Bigl\} \quad \text{ and } \quad \A_\infty \coloneqq \lin_\NN(\mathcal{A}) = \left\{ \sum_{i = 0}^m \alpha_i \mid m \in \NN , \alpha_i \in \A  \right\},
\end{equation*}
as well as the truncation
\begin{equation*}
\RA_s \coloneqq \Bigl\{ \sum_{\alpha \in \tA_s} c_{\alpha} x^{\alpha} \: | \: c_{\alpha} \in \R \Bigl\} \quad \text{ and the ring } \quad \RA = \bigcup_{s \in \mathbb{N}} \RA_s.
\end{equation*}

Further, the \emph{degree} of a polynomial $f \in \RA$ with respect to $\A$ is defined as 
\begin{equation*}
\deg_{\A}(f) \coloneqq \min\{ s \in \mathbb{N} \: | \: f \in \RA_s \}.
\end{equation*}
\label{DefinitionRA}
\end{definition}

The set $\A_\infty$ is the \emph{affine semigroup}  generated by $\A$ and the ring $\RA$ is its associated \emph{semigroup algebra}.
Explicitly:
\begin{equation*}
    \RA = \left\{ \sum_{\alpha \in \A_\infty } c_\alpha x^\alpha \mid c_\alpha \in \R \text{ and } c_\alpha \neq 0 \text{ for only finitely many } \alpha \in \A_\infty \right\}.
\end{equation*}
The ring \(\RA\) is closed under addition and multiplication. Multiplicative closure follows immediately because \(\A_\infty\) is closed under addition; thus, the sum of exponents generated by polynomial multiplication remains within \(\A_\infty\).

\begin{example}
Consider the set $\A = \{ e_1,\dots,e_n \} \subset \mathbb{N}^n$ and $\mathcal{B} = \{\pm e_1,\dots, \pm e_n\} \subset \Z^n$, where $e_i$ denotes the standard basis vector.
Then it holds
\begin{equation*}
\RA = \R[x_1,\dots,x_n] \quad \text{ and } \quad \mathbb{R}[\mathcal{B}] = \Lx.
\end{equation*}
\end{example}

\begin{remark}
Observe that, as the exponents in \(\mathcal{A}\) might not be linear independent, \(\deg_\A\) is not necessarily a grading on \(\RA\), that is, does not induce a valuation. Nevertheless, in this work we abuse notation and still call it graduation.
Given \(f,g \in \RA\), we have that,
\begin{itemize}
    \item[i)] \(\deg_{\A}(f \cdot g) \leq \deg_{\A}(f) + \deg_{\A}(g)\),
    \item[ii)] \(\deg_{\A}(f+g) \leq \max\{\deg_{\A}(f),\deg_{\A}(g)\}\).
\end{itemize}
\end{remark}

Algebraically, the Spec of a semigroup algebra $\RA$ is an affine toric variety; see e.g. \cite[Chapter~1]{cox2024toric}.
In order to define its real part, we extend naturally the notion of real radical to $\RA$ and equip the toric varieties with a Harrison topology; see \cite[Exp. 7.1.4]{bochnak2013real}.

\begin{definition}
For a finite set $\A\subset\Z^n$, define the (real) toric variety associated
with $\A$ as
\[
  \mathrm{Sper}(\RA)
  :=
  \bigl\{
  \mathfrak m \subset \RA
  \;:\;
  \mathfrak m \text{ is a maximal real radical ideal}
  \bigr\}.
\]
For $I\subset\RA$, we define its (real) variety as
\[
  \mathcal{V}_{\RA}(I)
  :=
  \{\mathfrak m \in \mathrm{Sper}(\RA) : I \subset \mathfrak m\}.
\]
\end{definition}

Observe that the notation $\mathrm{Sper}(\RA)$ refers to this
real maximal spectrum (maximal ideals with residue field $\R$), rather than to
the full real spectrum of prime cones.

\begin{remark} \label{rmk:homIsSper}
There is a natural identification
\[
  \mathrm{Sper}(\RA)
  =
  \bigl\{
  \lf \in \mathrm{Hom}_{\R\text{-alg}}(\RA,\R)
  \bigr\},
\]
Every unital $\R$-algebra homomorphism $\lf:\RA\to\R$ is
surjective and has a maximal real kernel. Conversely, if $\mathfrak m$ is a
maximal real ideal, then the residue field $\RA/\mathfrak m$ is $\R$, and the
quotient map followed by this identification gives the corresponding
$\R$-algebra homomorphism.
\end{remark}

We next study the coordinate ring $\RA$. Write $\A = \{ \alpha_1,\dots,\alpha_m\}$ and define the surjective map 
\begin{equation} \label{eq:kodairaMap} 
    \psi_\A: \R[z_1,\dots,z_m] \rightarrow \RA, \quad z_i \mapsto x^{\alpha_i}.
\end{equation}
This map allows us to study the ring $\RA$ by studying modulo spaces of $\R[z_1,\dots,z_m]$ as it holds
\[
\R[z_1,\dots,z_m]/\ker(\psi_\A) \simeq \RA.
\]
In particular, it is easy to verify that real radical ideals of $\RA$ correspond to real radical ideals 
of $\R[z_1,\dots,z_m]$ containing $\ker(\psi_\A)$. A first consequence of this is a complete characterization of the points in $\mathrm{Sper}(\RA)$.

\begin{lemma}
\label{lem:Max_ideals_in_RA}
    Let $J \subset \RA$ be a maximal real radical ideal. Then $J = \langle x^\alpha -\beta_\alpha : \alpha \in \A \rangle$ for some $\beta_\alpha \in \R$.
\end{lemma}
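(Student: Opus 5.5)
Pass through the surjection $\psi_\A$ of \eqref{eq:kodairaMap}, pull $J$ back to the polynomial ring $\R[z_1,\dots,z_m]$, use the classical description of maximal real ideals there, and push the resulting generators forward.

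First, set $\tilde J \coloneqq \psi_\A^{-1}(J)\subset \R[z_1,\dots,z_m]$. Since $\psi_\A$ is surjective, $\R[z]/\tilde J \cong \RA/J$. Hence $\tilde J$ is maximal, because $\RA/J$ is a field, and it contains $\ker(\psi_\A)$. As remarked before the statement, $\tilde J$ is also real radical. To check this directly, suppose $\sum p_i^2\in\tilde J$. Applying $\psi_\A$ gives $\sum\psi_\A(p_i)^2\in J$, so each $\psi_\A(p_i)\in J$ because $J$ is real radical, and therefore each $p_i\in\tilde J$.

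Second, classify maximal real radical ideals of $\R[z]$. Let $\mathfrak m=\tilde J$ and $L=\R[z]/\mathfrak m$; this is a field that is finitely generated as an $\R$-algebra. By Zariski's lemma (the weak Nullstellensatz), $L$ is a finite extension of $\R$, so $L\cong\R$ or $L\cong\CC$.

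Third, rule out $L\cong\CC$. In that case some $f\in\R[z]$ has image $i$, so $f^2+1\in\mathfrak m$. Real radicality then forces $1\in\mathfrak m$, which is a contradiction.

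So $L=\R$, and each $z_i$ is congruent modulo $\mathfrak m$ to some real number $\beta_i$. Thus $\langle z_i-\beta_i\rangle\subseteq\mathfrak m$. The left-hand ideal is already maximal, since its quotient is $\R$, so equality holds. (Equivalently, one can invoke the Real Nullstellensatz stated above: $\mathcal V_\R(\mathfrak m)$ is nonempty, since otherwise $\mathfrak m=\rr{\mathfrak m}=\R[z]$, and maximality then gives $\mathfrak m=\mathcal I(\{\beta\})$.)

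Finally, apply $\psi_\A$. Because $\psi_\A$ is surjective, $J=\psi_\A(\tilde J)=\langle \psi_\A(z_i)-\beta_i\rangle=\langle x^{\alpha_i}-\beta_{\alpha_i}\rangle$, which is the claim. The only step that needs care is the residue-field argument excluding $\CC$. Note that $\beta_\alpha$ may be $0$: the point lies on the toric variety, not necessarily in the torus. This is why the statement does not require $\beta_\alpha\neq0$.
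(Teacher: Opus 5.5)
Your proof is correct and follows the same route as the paper. You pull $J$ back along the surjection $\psi_\A$, use that maximal real radical ideals of $\R[z_1,\dots,z_m]$ have the form $\langle z_i-\beta_i\rangle$, and push the generators forward. The paper simply asserts that the preimage is maximal and real radical and that it has this form, whereas you verify both facts (pulling back real radicality, and excluding the residue field $\CC$ via Zariski's lemma), which makes the argument self-contained.
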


\begin{proof}
    Write $\A = \{ \alpha_1,\dots,\alpha_m \}$ and consider the surjective $\R$-algebra homomorphism $\psi_\A$ from \eqref{eq:kodairaMap}. The preimage $H := \psi_\A^{-1}(J)$ of $J$ is a maximal real radical ideal, since $\psi_\A$ is surjective.
    Thus, it can be written in the form $H = \langle z_i - \beta_i : i = 1,\dots,m \rangle$ for some $\beta_i \in \R$. Applying $\psi_\A$ gives the desired result.
\end{proof}

A second trivial consequence of this construction is that the real Nullstellensatz extends to the toric setting.

\begin{proposition}    
    [Toric real Nullstellensatz] \label{prop:toric_nullstellen}
    Let $I$ be an ideal in $\RA$. Then the vanishing ideal of the real variety is the real radical of $I$:
    \[
        \mathcal{I}(\mathcal{V}_{\RA}(I)) = \rIRA.
    \]
\end{proposition}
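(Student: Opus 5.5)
We pull everything back along the surjection $\psi_\A$ from \eqref{eq:kodairaMap} and apply the classical Real Nullstellensatz in $\R[z_1,\dots,z_m]$. Let $K := \ker(\psi_\A)$ and $H := \psi_\A^{-1}(I)$, so $H \supseteq K$. The proof has three steps: match real radicals, match varieties, then combine.

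\emph{Step 1 (real radicals correspond).} We show $\psi_\A^{-1}(\rIRA) = \sqrt[\R]{H}$.

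\begin{itemize}
\item[($\subseteq$)] Take $g$ with $\psi_\A(g) \in \rIRA$. Then $\psi_\A(g)^{2k} + \sum_j h_j^2 \in I$ for some $h_j \in \RA$. Lift each $h_j$ to some $\tilde h_j$ with $\psi_\A(\tilde h_j) = h_j$. Then $g^{2k} + \sum_j \tilde h_j^2$ maps into $I$, so it lies in $H$. Hence $g \in \sqrt[\R]{H}$.
\item[($\supseteq$)] Apply $\psi_\A$ to a certificate $g^{2k} + \sum_j \tilde h_j^2 \in H$. This gives a certificate for $\psi_\A(g)$ in $I$.
\end{itemize}

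\emph{Step 2 (varieties correspond).} By Remark~\ref{rmk:homIsSper} and Lemma~\ref{lem:Max_ideals_in_RA}, points $\mathfrak m \in \mathcal{V}_{\RA}(I)$ are exactly the $\R$-algebra homomorphisms $\lf:\RA\to\R$ killing $I$. Composing with $\psi_\A$ gives a bijection onto homomorphisms $\R[z]\to\R$ killing $H$. Such a homomorphism is evaluation at a point $\beta\in\R^m$, and the condition is $\beta \in \mathcal{V}_\R(H)$; the condition $K\subseteq H$ is automatic.

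For $f = \psi_\A(g)$ this gives:
\[
f\in\mathcal{I}(\mathcal{V}_{\RA}(I)) \iff g(\beta)=0 \text{ for all } \beta\in\mathcal{V}_\R(H).
\]
Here we read $f\in\mathfrak m$ as $\lf(f)=0$, and we use $\lf(\psi_\A(g)) = g(\beta)$. Thus $\psi_\A^{-1}(\mathcal{I}(\mathcal{V}_{\RA}(I))) = \mathcal{I}(\mathcal{V}_\R(H))$.

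\emph{Step 3 (conclude).} The classical Real Nullstellensatz gives $\mathcal{I}(\mathcal{V}_\R(H)) = \sqrt[\R]{H}$. Combined with Steps 1 and 2, the two ideals $\mathcal{I}(\mathcal{V}_{\RA}(I))$ and $\rIRA$ have the same preimage under $\psi_\A$. Since $\psi_\A$ is surjective, they are equal.

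\emph{Main obstacle.} The only delicate point is Step 2. We must be precise that a point of $\mathrm{Sper}(\RA)$ ``vanishes'' on $f$ in the sense $f\in\mathfrak m$, i.e.\ $\lf(f)=0$. We must also confirm that every real point $\beta$ of $\mathcal{V}_\R(H)$ genuinely defines a homomorphism on $\RA$. This holds because $\beta$ kills $K$, so evaluation at $\beta$ factors through $\R[z]/K \simeq \RA$. Everything else is formal.
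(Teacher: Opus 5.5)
Your proof is correct and takes the same route as the paper. Both pass to $H=\psi_\A^{-1}(I)$, match the points of $\mathcal{V}_{\RA}(I)$ with $\mathcal{V}_\R(H)$, apply the classical Real Nullstellensatz, and use that real radicals correspond under the surjection $\psi_\A$. The only difference is that you actually verify $\psi_\A^{-1}(\rIRA)=\sqrt[\R]{H}$ and the bijection of points, and you obtain both inclusions at once via preimages. The paper merely asserts these facts and calls the reverse inclusion trivial.
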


\begin{proof}
    We shall prove that $\mathcal{I}(\mathcal{V}_{\RA}(I)) \subseteq \rIRA$, as the opposite inclusion is trivial.
    As in the proof of Lemma~\ref{lem:Max_ideals_in_RA}, consider the preimage  $H := \psi_\A^{-1}(I)$ of $I$, where $\psi_\A$ is as defined in \eqref{eq:kodairaMap}.  
    From the previous lemma, we also see that there is an identification of maximal real radical ideals in $\RA$ and the maximal real radical ideals of $\R[z_1,\dots,z_m]$ that contain $\ker(\psi_\A)$.
    Hence, if $f$ vanishes at every $\mathcal{V}_{\RA}(I)$ and $g$
    is any lift of $f$ under $\psi_\A$, then $g$ vanishes on
    $\mathcal{V}_\R(H)$. By the classical real Nullstellensatz,
    $g\in\sqrt[\R]{H}$. Since real radicals commute with quotients under a
    surjective homomorphism,
    $\psi_\A(\sqrt[\R]{H})=\sqrt[\R]{I}$, and therefore
    $f\in\sqrt[\R]{I}$.
\end{proof}

Morally, the previous construction tells us that the real points of $\mathrm{Sper}(\RA)$ are related to real points in $\R^m$ belonging to some toric variety. However, when $\RA \subset \R[x_1, \dots,x_n]$, we might be interested in the real solutions of our systems over $\R^n$.
To understand the relation between these two real spaces, in what follows we explain an alternative way of thinking about (real) toric varieties by considering them as unions of algebraic tori.
For this, we recall the construction of non-normal affine toric varieties associated with an affine semigroup, defined by
$
Y_{\A} := \Spec(\CC[\A]),
$
where $\Spec(\CC[\A])$ denotes the set of maximal ideals in $\CC[\A]$.
We refer the reader to \cite[Appendix~3.A]{cox2024toric} for references and proofs of these results.

Given a cone $\sigma \subseteq \R^n$, we define its dual cone
$
\sigma^\vee := \{ u \in \R^n : (\forall v \in \sigma)\ \langle u,v \rangle \ge 0 \}
$
and its orthogonal cone
$
\sigma^\perp := \{ u \in \sigma^\vee : (\forall v \in \sigma)\ \langle u,v \rangle = 0 \}. 
$
Let $\mathrm{Cone}(A) \subset \R^n$ be the smallest cone containing $\A_\infty$. Since $\A_\infty \subseteq \Z^n$ and it is finitely generated, $\mathrm{Cone}(A)$ is a rational cone, that is, there exists a finite set of generators of the cone that belong to $\Z^n$.
Let $\sigma \subset \R^n$ be the dual cone of $\mathrm{Cone}(A)$, that is,
$
\sigma := \mathrm{Cone}(A)^\vee.
$
For each face $\tau \subseteq \sigma$ (including $\sigma$ itself), we define the \emph{toric orbit}
\begin{align*}
O(\tau) & := 
\left\{
\Lambda \in \Hom_{\Z}(\CC[\A],\CC) :
(\forall \alpha \in \A)\;
\Lambda(x^\alpha) \neq 0 \iff \alpha \in \tau^\perp
\right\} \\ 
& \simeq   
\Hom_{\Z}(\CC[\A \cap \tau^\perp],\CC^*) = \Spec(\CC[\Z(A \cap \tau^\perp)])
\end{align*}
where $\Z(A \cap \tau^\perp)$ is the lattice generated by $A \cap \tau^\perp$.
By computing the Smith normal form of this lattice, we conclude that $O(\tau)$ is isomorphic to
$
\Spec(\CC[x_1^{w_1},\dots,x_r^{w_r}]),
$
where $r$ is the rank of the lattice and $w_1,\dots,w_r$ are its invariant factors (the diagonal entries of the Smith normal form). We have that $r =  
\dim(Y_\A) - \dim(\tau)$.
Moreover, we can think of the points in $\Spec(\CC[x_1^{w_1},\dots,x_r^{w_r}])$ as orbits of points in $(\CC^*)^r$. Namely, we identify
$
p \in \Spec(\CC[x_1^{w_1},\dots,x_r^{w_r}])
$
with
$
\{(q_1,\dots,q_r) \in (\CC^*)^r : (q_1^{w_1},\dots,q_r^{w_r}) = p\};
$
see \cite[Proposition~1.3.18]{cox2024toric}.
In particular, the real points of $\Spec(\CC[x_1^{w_1},\dots,x_r^{w_r}])$ correspond to the orbits
$
\{(q_1,\dots,q_r) \in (\CC^*)^r : (q_1^{w_1},\dots,q_r^{w_r}) \in \R^r\}.
$
Observe that when the index of $\Z(A \cap \tau^\perp)$ in $\Z(\tau^\perp \cap \Z^n)$ (that is, the product of the invariant factors) is odd, we obtain a bijection between the real points of $\Spec(\CC[x_1^{w_1},\dots,x_r^{w_r}])$ and the points of $(\R^*)^r$, given by
\[
p \mapsto \big(\sqrt[w_1]{p_1},\dots,\sqrt[w_r]{p_r}\big) \in (\R^*)^r .
\]

The orbit--cone correspondence \cite[Theorem~3.A.3]{cox2024toric} tells us that
\[
Y_\A = \bigcup_{\tau \subseteq \sigma} O(\tau),
\]
where $\tau$ runs over the faces of $\sigma$.
This correspondence shows that any real point of $Y_\A$, that is, a point in $\mathrm{Sper}(\R[\A])$, can be thought of as a real point in one of its toric orbits, which, as we explained, correspond to special complex points.
When the index of the lattice $\Z(A \cap \tau^\perp)$ in $\Z(\tau^\perp \cap \Z^n)$ is odd, we can identify the real part of $O(\tau)$, that is,
$
O(\tau) \cap \mathrm{Sper}(\R[\A]),
$
with the torus $(\R^*)^r$.
Using this identifications for the cone $\tau = \{0\}$, we can relate the zeros of $f \in \RA$, as a Laurent polynomials, in $\R^*$ with the zeros on $ \mathrm{Sper}(\R[\A])$.

\begin{proposition} \label{prop:solsOnTorus}
Let $f \in \mathbb{R}[\mathcal{A}]$. Suppose there exists a point $p \in (\mathbb{R}^*)^n$ such that $f(p)=0$\footnote{The evaluation is defined by interpreting $f$ as a Laurent polynomial.}. Then there exists a real point $\Lambda \in \mathrm{Sper}(\mathbb{R}[\mathcal{A}])$ such that $\Lambda(f)=0$ and $\Lambda(x^\alpha) \neq 0$ for all $\alpha \in \mathcal{A}$. Moreover, if such a $\Lambda$ exists and the index of $\mathbb{Z}\mathcal{A}$ in $\mathbb{Z}^n$ is odd, then there exists $p \in (\mathbb{R}^*)^n$ such that $f(p)=0$.
\end{proposition}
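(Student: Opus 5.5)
The first claim follows from evaluation at $p$; the second needs to extract a torus point from $\Lambda$, and the parity of the index is what makes this possible.

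\emph{First claim.} Given $p \in (\R^*)^n$ with $f(p)=0$, define $\Lambda_p : \RA \to \R$ by $\Lambda_p(g) := g(p)$, interpreting $g$ as a Laurent polynomial. This is well defined because every monomial $x^\beta$ with $\beta \in \A_\infty$ evaluates to a real number at a point with nonzero coordinates. It is a unital $\R$-algebra homomorphism, so by Remark~\ref{rmk:homIsSper} it lies in $\mathrm{Sper}(\RA)$. Moreover $\Lambda_p(f) = f(p) = 0$, and $\Lambda_p(x^\alpha) = p^\alpha \neq 0$ for all $\alpha \in \A$, since every coordinate of $p$ is nonzero.

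\emph{Second claim, setup.} Let $\Lambda$ be a homomorphism with $\Lambda(f)=0$ and $\Lambda(x^\alpha)\neq 0$ for all $\alpha\in\A$. Because $\Lambda$ is multiplicative, $\Lambda(x^\beta)\neq 0$ for all $\beta\in\A_\infty$. Hence $\Lambda$ restricts to a monoid homomorphism $\chi:\A_\infty\to\R^*$, which extends uniquely to a group homomorphism $\chi : \Z\A \to \R^*$ via $\chi(\beta-\gamma) := \Lambda(x^\beta)/\Lambda(x^\gamma)$. This is well defined by multiplicativity, since $\Z\A$ is the group of differences of $\A_\infty$. It then suffices to find $p\in(\R^*)^n$ with $p^\beta = \chi(\beta)$ for all $\beta \in \Z\A$. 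Indeed, then $f(p) = \sum c_\beta p^\beta = \Lambda(f) = 0$.

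\emph{Second claim, lifting the character.} This step carries the hypothesis. Take a Smith normal form basis: choose a basis $u_1,\dots,u_n$ of $\Z^n$ and invariant factors $w_1,\dots,w_r$ such that $w_1u_1,\dots,w_ru_r$ is a basis of $\Z\A$. The hypothesis that the index is odd (and in particular finite) forces $r=n$ and every $w_i$ odd. Set $t_i := \sqrt[w_i]{\chi(w_iu_i)}$, using the real odd root, so $t_i\in\R^*$ because odd roots of nonzero reals are real and nonzero. Then $\tilde\chi : \Z^n\to\R^*$ defined by $u_i\mapsto t_i$ is a group homomorphism extending $\chi$. Let $U$ be the unimodular matrix with columns $u_i$, and put $p := (\tilde\chi(e_1),\dots,\tilde\chi(e_n)) \in (\R^*)^n$. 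Then $p^\beta = \tilde\chi(\beta)$ for every $\beta\in\Z^n$, in particular for every $\beta\in\Z\A$, as required. This matches the orbit description in the text for $\tau=\{0\}$.

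The main obstacle is this lifting step: a character $\Z\A\to\R^*$ need not extend to $\Z^n$ when some invariant factor is even, since even roots of negative numbers are not real. The odd-index assumption removes exactly this obstruction.
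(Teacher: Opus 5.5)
Your proof is correct and follows the paper's argument: evaluation at $p$ for the first claim, and for the converse extending the semigroup character $\A_\infty \to \R^*$ to $\Z\A$, then to $\Z^n$ via the Smith normal form and real odd roots of the invariant factors. You give a bit more detail than the paper, namely the well-definedness of the extension to the difference group and the fact that odd index forces finite index and hence $r=n$, but the route is the same.
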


\begin{proof}
The map which evaluates polynomials at $p\in(\R^*)^n$ is a
homomorphism $\Lambda_p:\R[\A]\to\R$.  Its kernel is a maximal real ideal,
$\Lambda_p(f)=f(p)$, and $\Lambda_p(x^\alpha)=p^\alpha\ne0$ for every
$\alpha\in\A$.  This proves the first claim.

Conversely, suppose that $\Lambda(x^\alpha)\ne0$ for every $\alpha\in\A$.
Multiplicativity makes $\alpha\mapsto\Lambda(x^\alpha)$ a homomorphism from
the semigroup $\NN\A$ to $\R^*$, and hence it extends uniquely to a group
homomorphism $\chi:\Z\A\to\R^*$.  Put the inclusion
$\Z\A\subseteq\Z^n$ in Smith normal form.  If its index is odd, all invariant
factors are odd, so the required roots in $\R^*$ exist and $\chi$ extends to
a character $\widetilde\chi:\Z^n\to\R^*$.  Setting
$p_i:=\widetilde\chi(e_i)$ gives $p\in(\R^*)^n$ and
$p^\alpha=\Lambda(x^\alpha)$ for all $\alpha\in\A$.  Consequently
$f(p)=\Lambda(f)=0$.
\end{proof}


%% file: Sections/Sparse_Extension.tex
\section{Unmixed moment approach}
\label{SectionTruncatedMomentMatricesAndFlatExtension}

In \cite{lasserre2008semidefinite}, Lasserre, Laurent, and Rostalski showed how to use moment matrices to compute the real radical of a zero-dimensional ideal. 
However, the size of these matrices depend on the number of monomials in the algebra at big-enough degree.
In order to shrink these matrices, we propose to use a subset of these monomials given by those belonging to certain affine semigroup.
As we shall show, geometrically, what we are doing is to compute points in the real part of the affine toric variety associated to such semigroup.
Our exposition follows closely the one from  \cite{lasserre2008semidefinite} and our results specialize to theirs when the real part of the affine toric variety is $\R^n$, that is, when $\mathcal{A} = \{0,e_1,\dots,e_n\}$.
For this reason, instead of presenting first their ideas and then rework them to our setting, we follow their structure and present the generalized result directly, making links with \cite{lasserre2008semidefinite} when corresponds.

Before starting, in what follows we present a couple of examples illustrating the new challenges that arise when we consider semigroup algebras instead of the classical polynomial algebra.

\begin{example}[Dimension missmatch]
\label{exp:Dim_not_equal_num_sol}
Consider the monomials
$x^2 y, xy^2
\in \R[x^2 y, x y^2] \subset \R[x,y]$.
The quotient ring $ \R[x,y] / \rr{\langle x^2 y, x y^2 \rangle} $ has Krull dimension $1$, while
$ \R[x^2 y, x y^2] / \rr{\langle x^2 y, x y^2 \rangle} $ has Krull dimension $0$ and defines a unique (real) point, the zero of the toric variety associated to $\R[x^2 y, x y^2]$.
Via the isomorphism from Rmk.~\ref{rmk:homIsSper}, this point correspond to $\Lambda \in \mathrm{Hom}(\R[x^2y,xy^2], \R)$ such that $\Lambda(x^2 y) = \Lambda(x y^2) = 0$.
The discrepancy on the dimension comes from two real lines given by $x = 0$ and $y = 0$ in the variety associated to $\R[x,y] / \rr{\langle x^2 y, x y^2 \rangle} $.

Consider the monomials
$x^2, y^2 \in \R[x^2, y^2] \subset \R[x,y]$.
The quotient ring $ \R[x,y] / \rr{\langle x^2 + y^2 \rangle} $ has Krull dimension $0$ and its associated variety is the point zero in $\R^n$ as $\rr{\langle x^2 + y^2 \rangle} =  \langle x,y\rangle$ over $\R[x,y]$.
However, over $\R[x^2,y^2]$, the ideal $\langle x^2 + y^2 \rangle$ is real radical, so the quotient ring
$\R[x^2,y^2]/\rr{\langle x^2+y^2\rangle}$
has Krull dimension $1$.
We can visualize the lines contained in this variety using the isomorphism
from Rmk.~\ref{rmk:homIsSper}: for every $c\in\R$, there is a point
$\Lambda_c\in\mathrm{Hom}_{\R\text{-alg}}(\R[x^2,y^2],\R)$ such that
$\Lambda_c(x^2)=-\Lambda_c(y^2)=c$. Observe that these points
admit complex lifts $(z,w)$ with $z^2=c$ and $w^2=-c$
in the complex variety associated to
$\R[x,y]/\langle x^2+y^2\rangle$, so real points of the toric variety can
correspond to complex solutions of the original system.

If we modify the previous example and consider the quotient ring $ \R[x^2,y^2] / \rr{\langle x^4 + y^4 \rangle}$, we observe that it corresponds to the point zero given by $\Lambda \in \mathrm{Hom}(\R[x^2,y^2], \R)$, $\Lambda(x^2) = \Lambda(y^2) = 0$, similarly to the classical setting.
\end{example}

Basically, when we look for solutions, not on $\mathbb R^n$, but on the real part of a toric variety, the solutions with zero coordinates might fail to exist in this toric variety (or extra solutions might exist). 
As consequence of Proposition~\ref{prop:solsOnTorus}, these problems arise when we have solutions with zero coordinates, or when we have that the lattice generated by the monomials in $\Z\A$ has even index in $\Z^n$.

\subsection{Sparse moment matrices and operators}
We begin by extending the definition of truncated moment matrices to the sparse setting.

\begin{definition}
\label{DefinitionTruncatedMomentMatrices}
Let $s \in \NN \cup \{\infty\}$ and let $\lf \in \RA_{2s}^* := \{ \phi : \RA_{2s} \to \R \mid \phi \text{ is linear} \}$.
The \emph{truncated moment operator of order $s$} associated with $\lf$, denoted by $M_s(\lf)$, is the operator
\[
  M_s(\lf) := \bigl(\lf(x^\alpha x^\beta)\bigr)_{\alpha,\beta \in \tA_s}.
\]
When $s = \infty$, we drop the subscript from and write directly $M(\lf)$.
\end{definition}
For $s \in \NN \cup \{\infty\}$, we identify polynomials $f\in\RA_s$ with their coefficient sequences
$\vect_s(f) := (f_\alpha)_{\alpha\in\A_s}$.
With this convention, we view $\ker(M_s(\lf))$ as a subspace of $\RA_s$, and we have
\[
  \lf(fg) = \vect_s(f)^\top M_s(\lf)\,\vect_s(g)
  \qquad
  \text{for all } \qquad f,g\in\RA_s.
\] 
Moreover, when $s = \infty$, $\ker(M(\lf))$ is also an ideal, to which we will refer as the \emph{kernel ideal} or $M(\lf)$.

We start with an observation relating the rank of the moment matrix to a
quotient dimension. The following proof follows closely the original one over $\R[x_1,\dots,x_n]$; see {\cite[Lemma~2]{laurent2012approach}}.

\begin{lemma}
\label{lem:rank_equals_dim_cooredinate_ring}
Let $\lf\in\RA^*$ and let $\mathcal{B}$ be a set of monomials.
Then $\mathcal{B}$ indexes a maximal linearly independent set of columns of
$M(\lf)$ if and only if the residue classes of $\mathcal{B}$ form a basis of
$\RA/\ker( M(\lf))$.
In particular,
\[
  \rk(M(\lf)) = \dim_\R\bigl(\RA/\ker(M(\lf))\bigr).
\]
\end{lemma}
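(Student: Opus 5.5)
The plan is to view $M(\lf)$ as the infinite symmetric matrix indexed by $\A_\infty$ and to identify each column with a linear functional. Concretely, the column indexed by $x^\beta$ is the vector $(\lf(x^\alpha x^\beta))_{\alpha\in\A_\infty}$, and a finite combination of columns with coefficients $(p_\beta)$ corresponds to the polynomial $p=\sum_\beta p_\beta x^\beta\in\RA$, giving the column vector $M(\lf)\vect(p)=(\lf(x^\alpha p))_{\alpha}$. The key observation is that this vector is zero exactly when $\lf(x^\alpha p)=0$ for all $\alpha\in\A_\infty$. By linearity, and since $\{x^\alpha\}_{\alpha\in\A_\infty}$ spans $\RA$, this is the same as $\lf(gp)=0$ for all $g\in\RA$, i.e.\ $p\in\ker(M(\lf))$. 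So a linear relation among columns indexed by a finite set of monomials is the same thing as a polynomial supported on those monomials lying in the kernel.

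From this the equivalence follows in two steps, both using only finite linear combinations, which is what matters in the infinite setting. First, columns indexed by $\mathcal{B}$ are linearly independent if and only if no nonzero polynomial supported on $\mathcal{B}$ lies in $\ker(M(\lf))$, which is exactly linear independence of the residue classes of $\mathcal{B}$ in $\RA/\ker(M(\lf))$. Second, maximality of the column set means every column $x^\gamma$ is a finite combination of the columns of $\mathcal{B}$. This says $x^\gamma-\sum_{\beta\in\mathcal{B}} c_\beta x^\beta\in\ker(M(\lf))$, i.e.\ the classes of $\mathcal{B}$ span the quotient; the spanning statement for arbitrary $f\in\RA$ follows by linearity. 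Since every step is reversible, we get the ``if and only if''.

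The rank identity then comes from taking $\mathcal{B}$ to be any maximal linearly independent set of columns, which exists by Zorn's lemma or a basis-extension argument when the rank is infinite. Its cardinality is by definition $\rk(M(\lf))$, and by the equivalence it equals $\dim_\R(\RA/\ker(M(\lf)))$, with both sides possibly infinite. The main point needing care is the first paragraph: in the infinite matrix, ``column space'' and ``kernel'' must be interpreted via finitely supported coefficient vectors. One also needs that $\vect(p)^\top M(\lf)\vect(g)=\lf(pg)$ holds in $\RA$, even though the monomials of $\A_\infty$ need not come from a free polynomial ring. This holds because $\RA$ has the monomials $x^\alpha$, $\alpha\in\A_\infty$, as a vector-space basis, so the argument of \cite[Lemma~2]{laurent2012approach} carries over unchanged.
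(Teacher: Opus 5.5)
Your proposal is correct and follows the same route as the paper: a finite linear relation among the columns indexed by $\B$ is identified with an element of $\ker(M(\lf))$ supported on $\B$, so column independence corresponds to independence of the residue classes, and maximality corresponds to the classes spanning $\RA/\ker(M(\lf))$. Your remarks on finitely supported combinations in the infinite matrix, and on the monomials $x^\alpha$ with $\alpha\in\A_\infty$ forming a vector-space basis of $\RA$, only make explicit what the paper leaves implicit.
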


\begin{proof}
A finite linear combination among columns indexed by $\B$ is exactly a
polynomial $f$ in the linear span of $\B$ satisfying $M(\lf)\vect_\infty(f)=0$, that is,
$f\in\ker(M(\lf))$. Thus these columns are linearly independent if and only
if the residue classes of $\B$ are linearly independent in the quotient.
Likewise, a column indexed by a monomial $x^\alpha$ lies in the span of the
columns indexed by $\B$ if and only if the residue class of $x^\alpha$ lies in
the span of the residue classes of $\B$. The two maximality statements are
therefore equivalent, and the rank--dimension identity follows.
\end{proof}

Lemma \ref{lem:rank_equals_dim_cooredinate_ring} implies that $\ker(M(\lf))$ is
zero-dimensional, i.e., $\dim(\RA/\ker (M(\lf)))<\infty$, precisely when
$M(\lf)$ has finite rank.

\begin{remark}[Points and rank-one functionals]\label{rmk:rank1andPoints}
    Observe that $\rk(M(\lf)) = 1$ if and only if $\ker(M(\lf))$ is a maximal real radical ideal.
    Further, there is a one-to-one correspondence between rank-one functionals $\lf \in \RA^*$ such that  $\lf(1) = 1$  and maximal real radical ideals.
    This follows from Lemma~\ref{lem:Max_ideals_in_RA} as, if $J \subset \RA$ is a maximal real radical ideal, we have $J = \langle x^\alpha -\beta_\alpha : \alpha \in \A \rangle$ for some $\beta_\alpha \in \R$. Hence, we can consider $\lf_J \in \RA^*$  such that $\lf_J(x^\alpha) = \beta_\alpha$.
    Because of this, in what follows we identify  points and rank-one functionals.
\end{remark}

Computing the real roots of a sparse polynomial system can be approached via the
real radical ideal generated by the defining polynomials.
For the characterization of real radicals we consider \emph{positive linear forms}.

\begin{definition}
\label{Def:positive_linear_forms}
Let $s \in \NN \cup \{\infty\}$ and let $\lf \in \RA_{2s}^*$.
Define the quadratic form associated with $\lf$ by
\[
  Q_\lf^s : \RA_s \to \R, \qquad f \mapsto \lf(f^2).
\]
Its kernel is the linear subspace given by
\[
  \ker(Q_\lf^s)
  :=
  \bigl\{ f \in \RA_s : \lf(fg)=0 \text{ for all } g\in\RA_s \bigr\}.
\]
We call $\lf$ \emph{positive} if $\lf(f^2)\ge 0$ for all $f\in\RA_s$.
\end{definition}

By Definition \ref{DefinitionTruncatedMomentMatrices}, $\lf$ is positive if and
only if $M_s(\lf)$ is positive semidefinite, denoted $M_s(\lf)\succeq 0$.
Using the identification with polynomials $f$ and $\mathrm{vec}(f)$, we identify $\ker(M(\lf))$ with $\ker(Q_\lf^\infty)$. 

\begin{example}
    Consider $\lf_{\mathfrak m} \in \mathrm{Sper}(\RA)$. Then $\lf_{\mathfrak m}$ is positive, since $ 0 \leq \lf_{\mathfrak m}(f)^2 = \lf_{\mathfrak m}(f^2)$. 
\end{example}

Positivity forces kernel ideals to be real radical.
The following lemma is the direct analogue over $\RA$ of
\cite[Proposition~3.6]{lasserre2008semidefinite}; we include the proof to make
the adaptation explicit.

\begin{lemma}
\label{LemmaKernelQF}
Let $\lf\in\RA^*$.
If $\lf$ is positive, then the ideal $\ker(M(\lf))$ is real radical.
\end{lemma}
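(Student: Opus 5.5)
The plan follows the classical argument of \cite[Proposition~3.6]{lasserre2008semidefinite}. Write $J := \ker(M(\lf)) = \ker(Q_\lf^\infty)$. We verify the criterion from the preliminaries (\cite[Proposition 4.1.7]{bochnak2013real}): whenever $\sum_i p_i^2 \in J$ with $p_i \in \RA$, each $p_i \in J$. That criterion is stated for an arbitrary commutative ring $R$, so it applies to $R = \RA$ directly. We also note that $J$ is an ideal, as observed after Definition~\ref{DefinitionTruncatedMomentMatrices}. Indeed, if $\lf(fg)=0$ for all $g$, then $\lf((hf)g) = \lf(f(hg)) = 0$, since $\RA$ is closed under multiplication.

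The first step is a Cauchy--Schwarz type fact: for positive $\lf$, $\lf(f^2)=0$ implies $f \in J$. Positivity makes $(f,g)\mapsto \lf(fg)$ a positive semidefinite symmetric bilinear form on $\RA$. Therefore $\lf(fg)^2 \le \lf(f^2)\lf(g^2)$ for all $f,g$. Alternatively, expand $\lf((f+tg)^2) \ge 0$ for all $t\in\R$ and use that the linear term must vanish when $\lf(f^2)=0$. Either way, $\lf(fg)=0$ for all $g\in\RA$, so $f\in J$.

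The second step uses this fact. Suppose $\sum_i p_i^2 \in J$. Then $\lf(\sum_i p_i^2) = \lf(1\cdot \sum_i p_i^2) = 0$, because $1 \in \RA$ (as $0 \in \A_\infty$ via the empty sum) and elements of $J$ pair to zero with everything. Each term satisfies $\lf(p_i^2)\ge 0$, so all terms vanish and $\lf(p_i^2)=0$ for every $i$. By the first step, $p_i\in J$ for each $i$. Hence $J$ is real radical.

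No step is a real obstacle; the argument is purely formal and never uses the structure of $\A$. The only points to check are that $\RA$ contains $1$ and is a commutative ring, so that both the ideal property and the real-radical criterion apply, and that positivity is required on all of $\RA$ ($s=\infty$), so the Cauchy--Schwarz step applies to arbitrary $f,g\in\RA$.
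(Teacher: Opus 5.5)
Your proof is correct and is essentially the paper's argument: positivity forces each square term to vanish under $\Lambda$, and a Cauchy--Schwarz (quadratic-in-$t$) argument then puts each $p_i$ in the kernel. The only difference is the order of the two steps. The paper pairs $\sum_i p_i^2$ with every $g^2$ and then applies $\Lambda((q+t)^2)\ge 0$ to $q=gp_i$. You pair only with $1$ and then apply Cauchy--Schwarz against arbitrary $g$, which is slightly more direct.
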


\begin{proof}
First note that $\ker(M(\lf))$ is an ideal.
Assume $\lf$ is positive and let $\sum_{i=1}^m p_i^2 \in \ker(M(\lf))$ for some
$p_i\in\RA$.
We show $p_i \in \ker(M(\lf))$ for every $i$.
For any $g\in\RA$ we have
\[
  0
  =
  \lf\!\left(g^2 \sum_{i=1}^m p_i^2\right)
  =
  \sum_{i=1}^m \lf(g^2 p_i^2).
\]
Since each term $\lf(g^2 p_i^2)\ge 0$ by positivity, it follows that
$\lf(g^2 p_i^2)=0$ for all $i$ and all $g$.
It remains to show that $\lf(q^2)=0$ implies $\lf(q)=0$.
Fix $q\in\RA$ with $\lf(q^2)=0$.
For any $t\in\R$,
\[
  0 \le \lf((q+t)^2)
  = \lf(q^2) + 2t\,\lf(q) + t^2 \lf(1)
  = 2t\,\lf(q) + t^2 \lf(1).
\]
This inequality for all $t\in\R$ forces $\lf(q)=0$.
Hence, if we consider $q=g \ p_i$, we get
$\lf(g \ p_i)=0$ for every $g\in\RA$, and so $p_i\in\ker(M(\lf))$.
\end{proof}

Our next example shows that the converse of the previous lemma does not hold, namely, there the kernel ideal might be real radical even when $\Lambda$ is not positive.

\begin{example}
\label{ExampleKernelML}
Let $\A=\{(2,0),(0,1)\}$ and define $\lf\in\RA^*$ by
$\lf(1)=1$, $\lf(x_1^2)=1$, and $\lf(x^\alpha)=0$ for all
$\alpha\in\A_\infty\setminus\{(0,0),(2,0)\}$.
Then $\ker(M(\lf))$ is the ideal generated by $x_2$ and $1-x_1^2$, hence
$\ker(M(\lf))\subset \RA$ is real radical.
However, $\lf$ is not positive since
\[
  \lf\bigl((1-x_1^2)^2\bigr)
  =
  \lf(1) - 2\lf(x_1^2) + \lf(x_1^4)
  =
  -1 < 0.
\]
\end{example}

Next, we generalize basic combinatorial properties of conic combinations of
elements in $\mathrm{Sper}(\RA)$.

\begin{lemma}
\label{LemmaEvaluationsAndKernels}
Let $\mathfrak m\subset\RA$ be a maximal real radical ideal and let
$\lf_{\mathfrak m}$ be the corresponding element of $\mathrm{Sper}(\RA)$.
Then $M(\lf_{\mathfrak m})$ has rank $1$ and
$\ker(M(\lf_{\mathfrak m}))=\mathfrak m$.
More generally, let $\lf = \sum_{i=1}^r \lambda_i \lf_i$, where $\lambda_i\neq 0$ and $\lf_i\in\mathrm{Sper}(\RA)$ are pairwise distinct. Then $\rk(M(\lf))=r$.
If, in addition, all $\lambda_i>0$, then $\ker(M(\lf)) = \bigcap_{i=1}^r \ker(\lf_i)$.
\end{lemma}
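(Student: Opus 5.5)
For a single point $\lf_{\mathfrak m}$ we use that $\lf_{\mathfrak m}$ is an $\R$-algebra homomorphism. Then
\[
M(\lf_{\mathfrak m}) = v v^\top, \qquad v := \bigl(\lf_{\mathfrak m}(x^\alpha)\bigr)_{\alpha\in\A_\infty},
\]
since $\lf_{\mathfrak m}(x^\alpha x^\beta)=\lf_{\mathfrak m}(x^\alpha)\lf_{\mathfrak m}(x^\beta)$. This matrix is nonzero because $v_0=\lf_{\mathfrak m}(1)=1$, so it has rank $1$. For the kernel, $f\in\ker M(\lf_{\mathfrak m})$ means $\lf_{\mathfrak m}(fg)=\lf_{\mathfrak m}(f)\lf_{\mathfrak m}(g)=0$ for all $g$. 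Taking $g=1$, this holds if and only if $\lf_{\mathfrak m}(f)=0$, i.e.\ $f\in\ker\lf_{\mathfrak m}=\mathfrak m$ by Remark~\ref{rmk:homIsSper}.

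For $\lf=\sum_i\lambda_i\lf_i$ we write
\[
M(\lf)=\sum_i \lambda_i v_iv_i^\top = V D V^\top,
\]
where $V$ has the vectors $v_i$ as columns and $D=\mathrm{diag}(\lambda_i)$ is invertible. Hence $\rk M(\lf)=r$ as soon as the $v_i$ are linearly independent, i.e.\ as soon as the characters $\lf_1,\dots,\lf_r$ are linearly independent in $\RA^*$. This is the main step. We prove it by interpolation: since the $\lf_i$ are pairwise distinct, their kernels $\mathfrak m_i$ are pairwise distinct maximal ideals. By Lemma~\ref{lem:Max_ideals_in_RA}, for $i\ne j$ some $\alpha\in\A$ has $\lf_i(x^\alpha)\ne\lf_j(x^\alpha)$. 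Products of the factors
\[
\frac{x^\alpha-\lf_j(x^\alpha)}{\lf_i(x^\alpha)-\lf_j(x^\alpha)}, \qquad j\neq i,
\]
then give $p_i\in\RA$ with $\lf_j(p_i)=\delta_{ij}$. Note that constants lie in $\RA$ because $0\in\A_0$. So if $\sum c_j\lf_j=0$, evaluating at $p_i$ gives $c_i=0$, and the $v_i$ are independent. Equivalently, $V^\top$ is surjective onto $\R^r$, so $VDV^\top$ has rank exactly $r$: $V$ has full column rank and $D V^\top$ has full row rank.

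For the kernel when all $\lambda_i>0$, we show both inclusions. If $f\in\bigcap_i\ker\lf_i$, then
\[
\lf(fg)=\sum_i\lambda_i\lf_i(f)\lf_i(g)=0 \quad\text{for all } g,
\]
so $f\in\ker M(\lf)$. Conversely, if $f\in\ker M(\lf)$, take $g=f$ to get
\[
0=\lf(f^2)=\sum_i\lambda_i\lf_i(f)^2 .
\]
Every term is nonnegative, so each $\lf_i(f)=0$. Alternatively, choosing $g=p_i$ gives $\lambda_i\lf_i(f)=0$ directly, which shows that positivity is only needed to make the squares argument work and that in fact the kernel equality follows from the interpolants alone. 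Either way, the only nontrivial ingredient is the existence of the interpolating polynomials $p_i$, which rests on Lemma~\ref{lem:Max_ideals_in_RA} separating distinct points through the generators $x^\alpha$, $\alpha\in\A$.
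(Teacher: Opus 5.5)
Your proof is correct and follows essentially the paper's route. Both arguments rest on interpolation polynomials $p_i$ with $\lf_j(p_i)=\delta_{ij}$: the paper obtains them from comaximality via the Chinese remainder theorem, while you build them explicitly from generators that separate the points. Your factorization $M(\lf)=VDV^\top$ makes precise the paper's terse remark that the form is $\mathrm{diag}(\lambda_1,\dots,\lambda_r)$ in the interpolation basis, and your squares argument for the kernel is the paper's identity $\ker(\lambda A+\mu B)=\ker A\cap\ker B$ for positive semidefinite $A,B$. Your alternative test against $g=p_i$ is also valid and shows that the kernel equality holds for arbitrary nonzero $\lambda_i$, which the paper instead obtains by a dimension count in the converse direction of Theorem~\ref{TheoremFiniteRankMomentMatrix}.
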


\begin{proof}
It is immediate that $M(\lf_{\mathfrak m})$ has rank $1$.
Moreover,
\[
  \mathfrak m \subseteq \ker(M(\lf_{\mathfrak m})) \subsetneq \RA,
\]
and maximality of $\mathfrak m$ implies $\ker(M(\lf_{\mathfrak m}))=\mathfrak m$.

For the second claim, put $\mathfrak m_i=\ker(\lf_i)$. Distinct
maximal ideals are pairwise comaximal, so the Chinese remainder theorem gives
$p_1,\dots,p_r\in\RA$ with $\lf_j(p_i)=\delta_{ij}$. In the bases induced by
these interpolation polynomials, the bilinear form
$(f,g)\mapsto\lf(fg)$ has diagonal matrix
$\operatorname{diag}(\lambda_1,\dots,\lambda_r)$. Since every $\lambda_i$ is
nonzero, this form has rank $r$, and so does $M(\lf)$.
Further, if each $M(\lf_i)\succeq 0$ and
$\lf=\sum_{i=1}^r \lambda_i\lf_i$ is a conic
combination of the $\lf_i$ with $\lambda_i>0$, we obtain
\[
  \ker(M(\lf))=\bigcap_{i=1}^r \ker(M(\lf_i))=\bigcap_{i=1}^r \ker(\lf_i).
\]
The equality follows form the identity 
\[
  \ker(\lambda A+\mu B) = \ker(A)\cap\ker(B),
\]
 for $\lambda,\mu >0$ and positive semidefinite matrices $A,B$.
\end{proof}

\begin{example}[Continuation of Example \ref{exp:Dim_not_equal_num_sol}.]
Let $\Lambda = \lf_{\mathfrak m_1} + \lf_{\mathfrak m_2}$, where
$\mathfrak m_1 = \langle x^{\alpha_1},\; x^{\alpha_2} - 1 \rangle$ and
$\mathfrak m_2 = \langle x^{\alpha_2},\; x^{\alpha_1} - 1 \rangle$.
Then
\[
  M(\Lambda) =
  \begin{bmatrix}
    2 & 1 & 1 & 1 & 0 & 1 & \cdots \\
    1 & 1 & 0 & 1 & 0 & 0 & \cdots \\
    1 & 0 & 1 & 0 & 0 & 1 & \cdots \\
    1 & 1 & 0 & 1 & 0 & 0 & \cdots \\
    0 & 0 & 0 & 0 & 0 & 0 & \cdots \\
    1 & 0 & 1 & 0 & 0 & 1 & \cdots \\
    \vdots & \vdots & \vdots & \vdots & \vdots & \vdots & \ddots
  \end{bmatrix},
\]
where the columns are indexed by
$1,\ x^{\alpha_1},\ x^{\alpha_2},\ x^{2\alpha_1},\ x^{\alpha_1 + \alpha_2},\ x^{2\alpha_2},\dots$.
In particular, $\rk(M(\Lambda))=2$.
\end{example}

The following lemma characterize zero-dimensional radical ideals in $\RA$.

\begin{lemma}
\label{lem:Real_Nullstellensatz_Toric_variety}
Let $I\subset\RA$ be a zero-dimensional real radical ideal, i.e.,
$|\mathcal{V}_{\RA}(I)|<\infty$.
Then
\[
  \dim(\RA/I) = |\mathcal{V}_{\RA}(I)|
  \qquad\text{and}\qquad
  \RA/I \cong \prod_{i=1}^{\dim(\RA/I)} \R.
\]
\end{lemma}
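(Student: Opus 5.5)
The plan is to combine the toric real Nullstellensatz (Proposition~\ref{prop:toric_nullstellen}) with the Chinese remainder theorem. Write $\mathcal{V}_{\RA}(I)=\{\mathfrak m_1,\dots,\mathfrak m_r\}$; this set is finite by hypothesis. Each $\mathfrak m_i$ is a maximal real radical ideal, so by Remark~\ref{rmk:homIsSper} it is the kernel of an $\R$-algebra homomorphism $\lf_i:\RA\to\R$, and hence $\RA/\mathfrak m_i\cong\R$.

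\textbf{Step 1: express $I$ as an intersection.} Since $I$ is real radical, Proposition~\ref{prop:toric_nullstellen} gives
\[
I=\rIRA=\mathcal{I}(\mathcal{V}_{\RA}(I)).
\]
The right-hand side consists of the $f$ with $\lf_i(f)=0$ for all $i$, so $I=\bigcap_{i=1}^r\mathfrak m_i$. This identification of the vanishing ideal of a finite set of points with the intersection of the corresponding kernels is immediate from the definitions.

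\textbf{Step 2: apply the Chinese remainder theorem.} The $\mathfrak m_i$ are pairwise distinct maximal ideals, hence pairwise comaximal (this was already used in the proof of Lemma~\ref{LemmaEvaluationsAndKernels}). The Chinese remainder theorem therefore gives an isomorphism
\[
\RA/I=\RA\Big/\bigcap_i\mathfrak m_i\;\cong\;\prod_{i=1}^r\RA/\mathfrak m_i\;\cong\;\R^r,
\qquad f\mapsto(\lf_1(f),\dots,\lf_r(f)).
\]
Taking real dimensions yields $\dim(\RA/I)=r=|\mathcal{V}_{\RA}(I)|$, and this also justifies writing the product as indexed up to $\dim(\RA/I)$.

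\textbf{Main obstacle.} The only delicate point is the case $\mathcal{V}_{\RA}(I)=\emptyset$. There the Nullstellensatz gives $I=\RA$, so the quotient is $0$, the empty product. This is consistent with both claims, since the dimension is $0$ and there are no points. Otherwise the argument is routine, because all the real-algebraic content is already encapsulated in Proposition~\ref{prop:toric_nullstellen}.
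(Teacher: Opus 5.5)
Your proof is correct, but it takes a different route from the paper's.

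\textbf{Your route.} You obtain $I=\bigcap_{i}\mathfrak m_i$ from the toric real Nullstellensatz (Proposition~\ref{prop:toric_nullstellen}) and then apply the Chinese remainder theorem.

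\textbf{The paper's route.} The paper never invokes the real Nullstellensatz. It reads ``zero-dimensional'' in the Krull sense, so $\RA/I$ is Artinian and hence finite-dimensional over $\R$. Being reduced, it splits as a product of finite field extensions of $\R$. Factors isomorphic to $\CC$ are ruled out by the real-radical property, since $1^2+i^2=0$ would hold in such a factor. Only at the end are the $\R$-factors matched with the maximal ideals containing $I$.

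\textbf{What each buys.} The paper's argument is purely structural and avoids the deep real Nullstellensatz, but it relies on the Krull-dimension reading of the hypothesis. Passing from $|\mathcal{V}_{\RA}(I)|<\infty$ to Krull dimension zero, as the ``i.e.'' in the statement asserts, itself requires a Nullstellensatz-type input. Your argument starts directly from the finiteness of the real variety, which is how the lemma is actually invoked for $\rIRA$ in Section~\ref{subsubsec:SOS_characterization_toric_varieties}. It yields zero-dimensionality as a by-product. It also gives the isomorphism explicitly as the evaluation map $f\mapsto(\lf_1(f),\dots,\lf_r(f))$, which is the form used later in Theorem~\ref{TheoremStickelberger2}. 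Your treatment of the empty case, where the empty intersection is $\RA$, is also fine.
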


\begin{proof}
Since $I$ is zero-dimensional, $\RA/I$ is Artinian; see \cite[p.~92]{atiyah2018introduction}.
Hence $\RA/I$ is finite-dimensional as an $\R$-vector space; write
$r:=\dim(\RA/I)$.
Moreover, $\RA/I$ is reduced because $I$ is real radical.
By \cite[Theorem~8.7]{atiyah2018introduction}, we obtain an isomorphism
\[
  \RA/I \cong K_1 \times \dots \times K_l,
\]
where each $K_i$ is a finite field extension of $\R$.
Since $\R\subseteq \RA/I$, each $K_i$ is a finite-dimensional field extension of
$\R$, hence $K_i\in\{\R,\mathbb{C}\}$.
No factor can be $\mathbb C$: in such a factor the two nonzero
elements $1$ and $i$ have squares summing to zero. Multiplying them by the
idempotent supported on that factor would contradict the real-reduced
property of $\RA/I$.
Thus $K_i=\R$ for all $i$, and therefore $l=r$ and
$\RA/I\cong \R^r$.

Finally, the maximal ideals of $\RA$ containing $I$ correspond to the coordinate
projections of $\R^r$, hence $|\mathcal{V}_{\RA}(I)|=r=\dim(\RA/I)$.
\end{proof}

The next theorem shows that the elements of $\mathrm{Sper}(\RA)$ are the
building blocks of $\RA^*$, generalizing \cite[Theorem 9]{laurent2012approach}.

\begin{theorem}
\label{TheoremKernMRealRadicalIffLinearCombination}
\label{TheoremFiniteRankMomentMatrix}
Let $\lf\in\RA^*$ be such that $\rk(M(\lf))=r<\infty$.
Then the following are equivalent:
\begin{enumerate}
  \item[i)] $\ker(M(\lf))$ is real radical.
  \item[ii)] $\lf = \sum_{i=1}^r \lambda_i \lf_i$, where
  $\lambda_i\in\R\setminus\{0\}$ and
  $\lf_i\in\mathrm{Sper}(\RA)$ are pairwise distinct.
\end{enumerate}
Furthermore, if $\lf$ is positive, then $\lambda_i>0$ and $\mathcal{V}_{\RA}(\ker(M(\lf))) = \{\lf_1,\dots,\lf_r\}$.
\end{theorem}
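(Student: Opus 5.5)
For (ii)$\Rightarrow$(i) I will work directly with Lemma~\ref{LemmaEvaluationsAndKernels}. Write $\mathfrak m_i=\ker(\lf_i)$, which are pairwise distinct maximal ideals. Clearly $\bigcap_i \mathfrak m_i\subseteq\ker(M(\lf))$, since $f\in\bigcap_i\mathfrak m_i$ gives $\lf(fg)=\sum_i\lambda_i\lf_i(f)\lf_i(g)=0$. For the reverse inclusion, use the interpolation polynomials $p_j$ with $\lf_i(p_j)=\delta_{ij}$ coming from the Chinese remainder theorem. If $f\in\ker(M(\lf))$, then $0=\lf(fp_j)=\lambda_j\lf_j(f)$, and since $\lambda_j\neq0$ this gives $f\in\mathfrak m_j$. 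So $\ker(M(\lf))=\bigcap_i\mathfrak m_i$ even without the positivity assumption. This is an intersection of real radical ideals, so it is real radical: if $\sum p_k^2$ lies in every $\mathfrak m_i$, then each $p_k$ does.

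For (i)$\Rightarrow$(ii), set $J=\ker(M(\lf))$. By Lemma~\ref{lem:rank_equals_dim_cooredinate_ring}, $\dim\RA/J=r<\infty$, so $J$ is zero-dimensional and real radical. Lemma~\ref{lem:Real_Nullstellensatz_Toric_variety} then gives $\RA/J\cong\R^r$ and $\mathcal V_{\RA}(J)=\{\lf_1,\dots,\lf_r\}$, pairwise distinct points of $\mathrm{Sper}(\RA)$. The functionals $\lf_1,\dots,\lf_r$ induce the $r$ coordinate projections, which form a basis of $(\RA/J)^*$. Now $\lf$ vanishes on $J$, because $f\in J$ gives $\lf(f)=\lf(f\cdot1)=0$. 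Hence $\lf$ factors through $\RA/J$, and so $\lf=\sum_i\lambda_i\lf_i$ for some real $\lambda_i$.

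It remains to see that every $\lambda_i\neq0$. By the rank statement of Lemma~\ref{LemmaEvaluationsAndKernels}, dropping the zero coefficients would give a rank equal to the number of nonzero $\lambda_i$, but the rank is $r$. A more direct argument: if $\lambda_j=0$, then the idempotent $p_j$ satisfies $\lf(p_jg)=\sum_i\lambda_i\delta_{ij}\lf_i(g)=0$ for all $g$. So $p_j\in J=\bigcap_i\mathfrak m_i$, which contradicts $\lf_j(p_j)=1$.

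For the final claim, assume $\lf$ is positive. Then $\lambda_j=\lf(p_j)=\lf(p_j^2)\ge0$, using that $p_j^2$ and $p_j$ agree at every $\lf_i$ and hence differ by an element of $J$, on which $\lf$ vanishes. Together with $\lambda_j\neq0$ this gives $\lambda_j>0$. The identity $\mathcal V_{\RA}(\ker(M(\lf)))=\{\lf_1,\dots,\lf_r\}$ follows because $\ker(M(\lf))=\bigcap_i\mathfrak m_i$. Any maximal ideal containing this finite intersection contains the product of the $\mathfrak m_i$, hence, being prime, contains some $\mathfrak m_i$ and equals it.

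The step needing the most care is identifying $(\RA/J)^*$ with the span of the evaluations. This is exactly what the product decomposition of Lemma~\ref{lem:Real_Nullstellensatz_Toric_variety} provides, so the main dependence is on that lemma.
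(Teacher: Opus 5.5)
Your proof is correct and follows the paper's argument essentially step for step. In (i)$\Rightarrow$(ii) you use Lemma~\ref{lem:Real_Nullstellensatz_Toric_variety} to expand $\lf$ in the coordinate projections, rule out $\lambda_j=0$ via the idempotent $p_j$ (the paper's nondegeneracy argument), and get positivity from $\lambda_j=\lf(p_j^2)\ge0$; the only variation is that in (ii)$\Rightarrow$(i) you prove $\ker(M(\lf))\subseteq\bigcap_i\mathfrak m_i$ elementwise with the interpolation polynomials instead of by the paper's dimension count, so you do not need the number of summands to equal the rank.
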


\begin{proof}
Put $J:=\ker(M(\lf))$. If $J$ is real radical, then
Lemma~\ref{lem:rank_equals_dim_cooredinate_ring} and
Lemma~\ref{lem:Real_Nullstellensatz_Toric_variety} give an isomorphism
$\Theta:\RA/J\to\R^r$. The functional induced by $\lf$ on $\RA/J$ is
therefore of the form
\[
 \overline\lf=\sum_{i=1}^r\lambda_i\pi_i,
\]
where $\pi_i$ is the $i$-th coordinate projection. Every $\lambda_i$ is
nonzero, since the bilinear form induced by $M(\lf)$ on $\RA/J$ is
nondegenerate and has diagonal matrix
$\operatorname{diag}(\lambda_1,\dots,\lambda_r)$ in the standard basis.
Pulling the coordinate projections back through $\Theta$ gives (ii).

Conversely, suppose that (ii) holds and put
$\mathfrak m_i:=\ker(\lf_i)$. Then
$\bigcap_i\mathfrak m_i\subseteq J$. The Chinese remainder theorem and
Lemma~\ref{lem:rank_equals_dim_cooredinate_ring} give
\[
 \dim\bigl(\RA/\textstyle\bigcap_i\mathfrak m_i\bigr)=r
 =\dim(\RA/J),
\]
so $J=\bigcap_i\mathfrak m_i$ and is real radical.

Finally, if $\lf$ is positive, then
$\lambda_i=\overline\lf(e_i)=\overline\lf(e_i^2)>0$. The last claim follows
from Lemma~\ref{LemmaEvaluationsAndKernels}.
\end{proof}

\subsection{A semidefinite description of real radicals and toric varieties}
\label{subsubsec:SOS_characterization_toric_varieties}

We combine the previous results to obtain a semidefinite characterization of
real radical ideals in $\RA$ via positive linear forms. Concretely, we study
positive $\lf\in\RA^*$ satisfying $\ker(M(\lf))=\sqrt[\R]{I}$.
All arguments in this subsection are direct adaptations of the corresponding ones in the classical moment
matrix approach in \cite{lasserre2008semidefinite}: they use only algebraic manipulations and positivity and do not
depend on a specific grading. Hence, when the proof is identical, we omit the proof and cite the original source.

Fix an ideal $I\subset\RA$ and denote by $\sqrt[\R]{I}$ its real radical
in $\RA$. Consider the convex set
\begin{equation}
\label{eq:K_I}
\K \coloneqq
\Bigl\{
\lf\in\RA^*
\ \Big|\
\lf(1)=1,\ M(\lf)\succeq 0,\ \lf(p)=0\ \text{for all }p\in I
\Bigr\}.
\end{equation}
For any $\lf\in\K$, the ideal $\ker(M(\lf))$ is real radical; see
Lemma~\ref{LemmaKernelQF}. Moreover, $\ker(M(\lf))$ contains $I$, and hence also
contains $\rIRA$.

Assume from now on that $I$ is zero-dimensional in the toric sense, i.e.,
$|\mathcal{V}_{\RA}(I)|<\infty$.
Then Lemma~\ref{lem:Real_Nullstellensatz_Toric_variety} implies that
$\dim(\RA/\rIRA)=|\mathcal{V}_{\RA}(I)|$.
Moreover, Lemma~\ref{lem:rank_equals_dim_cooredinate_ring} implies the following rank bound for 
$\lf\in\K$,
\[
  \rk(M(\lf))
  =
  \dim\bigl(\RA/\ker(M(\lf))\bigr)
  \le
  \dim(\RA/\rIRA)
  =
  |\mathcal{V}_{\RA}(I)|.
\]
Additionally, there exist elements of $\K$ attaining this maximum rank. For example,
if $\mathcal{V}_{\RA}(I)=\{\mathfrak m_1,\dots,\mathfrak m_r\}$ and
$\lf_{\mathfrak m_i}\in\mathrm{Sper}(\RA)$ are the corresponding evaluation
maps, then
\[
  \lf
  \coloneqq
  \frac{1}{r}\sum_{i=1}^r \lf_{\mathfrak m_i}
  \in \K
\]
has rank equal to $r$.

\begin{definition}
\label{DefinitionGenericLF}
Let $\K$ be as in \eqref{eq:K_I} and assume $|\mathcal{V}_{\RA}(I)|<\infty$.
We say that $\lf\in\K$ is \emph{generic} if $M(\lf)$ has maximal rank, i.e.,
\[
  \rk(M(\lf)) = |\mathcal{V}_{\RA}(I)|.
\]
\end{definition}
Generic elements of $\K$ recover the real radical $\rIRA$. This relies on a
standard kernel property for positive semidefinite moment matrices.

\begin{lemma}[{\cite[Lemma~4]{laurent2012approach}}]
\label{LemmaGenericLF}
Assume $|\mathcal{V}_{\RA}(I)|<\infty$.
Then $\lf\in\K$ is generic if and only if
\[
  \ker(M(\lf)) \subseteq \ker(M(\lf'))
  \quad \text{for all }\lf'\in\K.
\]
Moreover, for every generic $\lf\in\K$ one has $\ker(M(\lf)) = \rIRA$.
\end{lemma}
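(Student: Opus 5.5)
The plan is to reduce everything to the structure of $\K$ given by Theorem~\ref{TheoremFiniteRankMomentMatrix}. Let $V:=\mathcal{V}_{\RA}(I)=\{\mathfrak m_1,\dots,\mathfrak m_r\}$, which is finite by assumption. For every $\lf'\in\K$, Lemma~\ref{LemmaKernelQF} shows that $\ker(M(\lf'))$ is real radical. Since it contains $I$, it also contains $\rIRA$. By Lemma~\ref{lem:rank_equals_dim_cooredinate_ring} and Lemma~\ref{lem:Real_Nullstellensatz_Toric_variety}, $M(\lf')$ then has finite rank at most $r$.

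Theorem~\ref{TheoremFiniteRankMomentMatrix} now gives $\lf'=\sum_{j\in S}\lambda_j\lf_{\mathfrak m_j}$ with $\lambda_j>0$, where the $\lf_{\mathfrak m_j}$ are the points of $\mathcal{V}_{\RA}(\ker M(\lf'))$. Each such point contains $I$, so it lies in $V$, and hence $S\subseteq\{1,\dots,r\}$. Lemma~\ref{LemmaEvaluationsAndKernels} then identifies the kernel:
\[
\ker(M(\lf'))=\bigcap_{j\in S}\mathfrak m_j, \qquad \rk(M(\lf'))=|S|.
\]

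With this description, both directions of the equivalence are short. If $\lf$ is generic, then $|S|=r$, so $S=\{1,\dots,r\}$. Hence $\ker(M(\lf))=\bigcap_{i=1}^r\mathfrak m_i$, and this is contained in $\bigcap_{j\in S'}\mathfrak m_j=\ker(M(\lf'))$ for any $\lf'\in\K$ with index set $S'$.

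Conversely, assume $\ker(M(\lf))\subseteq\ker(M(\lf'))$ for all $\lf'\in\K$. Take $\lf'=\frac1r\sum_i\lf_{\mathfrak m_i}$, which has rank $r$. Then
\[
\rk M(\lf)=\dim\RA/\ker M(\lf)\ge \dim \RA/\ker M(\lf')=r,
\]
and together with the bound $\rk M(\lf)\le r$ this makes $\lf$ generic.

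For the \emph{moreover} part, we have $\ker(M(\lf))=\bigcap_i\mathfrak m_i=\mathcal I(V)$ by the computation above. By Proposition~\ref{prop:toric_nullstellen}, $\mathcal I(V)=\rIRA$. Alternatively, one can avoid the Nullstellensatz: $\rIRA\subseteq\ker M(\lf)$, and both quotients have dimension $r$, so equality holds.

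The main point to check carefully is the claim that the evaluation points appearing in the decomposition of $\lf'$ all lie in $V$. This follows from $I\subseteq\ker(M(\lf'))\subseteq\ker(\lf_{\mathfrak m_j})$, where the second inclusion comes from Lemma~\ref{LemmaEvaluationsAndKernels}. The rest is dimension counting.
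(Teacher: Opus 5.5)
Your proof is correct. It takes a different route from the one the paper relies on.

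The paper gives no proof of its own. It defers to \cite[Lemma~4]{laurent2012approach} and says the argument rests on the kernel identity $\ker(M(\lf)+M(\lf'))=\ker(M(\lf))\cap\ker(M(\lf'))$ for positive semidefinite matrices. That route runs as follows:
\begin{itemize}
\item[(1)] For ``generic implies minimal kernel'' it uses only convexity. The functional $\frac12(\lf+\lf')\in\K$ has kernel $\ker(M(\lf))\cap\ker(M(\lf'))\subseteq\ker(M(\lf))$, and maximality of the rank of $M(\lf)$ forces equality.
\item[(2)] The converse is exactly your comparison with $\frac1r\sum_i\lf_{\mathfrak m_i}$.
\item[(3)] The \emph{moreover} part follows from $\ker(M(\lf))\subseteq\bigcap_{\mathfrak m\in\mathcal V_{\RA}(I)}\ker(\lf_{\mathfrak m})=\rIRA$, using the toric Nullstellensatz, together with real radicality of the kernel.
\end{itemize}

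You instead first decompose every $\lf'\in\K$ as a positive combination of evaluations at points of $\mathcal V_{\RA}(I)$, via Theorem~\ref{TheoremFiniteRankMomentMatrix}. This is essentially the proof of Theorem~\ref{thm:K_is_polytope}. You then read off each kernel as an intersection of maximal ideals, so genericity becomes ``full support''.

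Your approach yields more: an explicit kernel for every element of $\K$. The cost is reliance on the heavier structure theorem, which uses the Artinian decomposition and the Chinese remainder theorem.

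The convexity argument, by contrast, is a general fact about convex families of positive semidefinite moment matrices and needs no atomic decomposition. That is why the paper reuses it verbatim for the truncated spectrahedra $\K_t$ in Lemma~\ref{LemmaGenericLFinKt}. There, elements need not be combinations of evaluations (see Example~\ref{exp:K_t_not_polyhedron}), so your route would not carry over to that setting.
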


In particular, $\K$ is a polytope whose vertices are precisely the evaluation
maps at the points of the toric variety. This observation will play an essential role for the computation for only one root in Section \ref{sec:One_root}.

\begin{theorem}
\label{thm:K_is_polytope}
Assume $|\mathcal{V}_{\RA}(I)|<\infty$.
Then
\[
  \K
  =
  \mathrm{conv}\Bigl(\left\{\lf_{\mathfrak m} : \mathfrak m \in \mathcal{V}_{\RA}(I)\right\}\Bigr).
\]
Equivalently, $\K$ is a polytope with vertex set
$\left\{\lf_{\mathfrak m} : \mathfrak m \in \mathcal{V}_{\RA}(I)\right\}$.
\end{theorem}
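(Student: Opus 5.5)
We prove the two inclusions separately. For ``$\supseteq$'', recall the example after Definition~\ref{Def:positive_linear_forms}: each $\lf_{\mathfrak m}$ with $\mathfrak m\in\mathcal{V}_{\RA}(I)$ is positive. It also satisfies $\lf_{\mathfrak m}(1)=1$, and $\lf_{\mathfrak m}(p)=0$ for $p\in I\subseteq\mathfrak m$. Hence $\lf_{\mathfrak m}\in\K$. Since $\K$ is convex (all three defining conditions are preserved under convex combinations, and PSD matrices form a convex cone), the whole convex hull lies in $\K$.

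For ``$\subseteq$'', fix $\lf\in\K$ and write $\mathcal{V}_{\RA}(I)=\{\mathfrak m_1,\dots,\mathfrak m_r\}$. The rank bound preceding Definition~\ref{DefinitionGenericLF} gives $\rk(M(\lf))\le r<\infty$. By Lemma~\ref{LemmaKernelQF}, $\ker(M(\lf))$ is real radical, so Theorem~\ref{TheoremFiniteRankMomentMatrix} applies. It yields $\lf=\sum_{i=1}^{s}\lambda_i\lf_i$ with $s=\rk(M(\lf))$, $\lambda_i>0$, pairwise distinct $\lf_i\in\mathrm{Sper}(\RA)$, and $\mathcal{V}_{\RA}(\ker(M(\lf)))=\{\lf_1,\dots,\lf_s\}$. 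Next we check that each $\lf_i$ is a point of $\mathcal{V}_{\RA}(I)$. Since $\lf$ vanishes on $I$ and $I$ is an ideal, $\lf(fp)=0$ for all $f\in\RA$ and $p\in I$, so $I\subseteq\ker(M(\lf))\subseteq\ker(\lf_i)$. Thus $\ker(\lf_i)\in\mathcal{V}_{\RA}(I)$. Finally, evaluating at $1$ gives $1=\lf(1)=\sum_i\lambda_i$, so $\lf$ is a convex combination of the $\lf_{\mathfrak m_j}$.

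For the vertex statement, we must show that no $\lf_{\mathfrak m_j}$ is a convex combination of the others. The evaluations at distinct points are linearly independent: by the Chinese remainder argument in Lemma~\ref{LemmaEvaluationsAndKernels}, there are $p_i$ with $\lf_{\mathfrak m_j}(p_i)=\delta_{ij}$. So the $\lf_{\mathfrak m_j}$ are affinely independent, $\K$ is a simplex, and every $\lf_{\mathfrak m_j}$ is a vertex. Alternatively, each $\lf_{\mathfrak m_j}$ is extreme because it has rank one while the other combinations have rank at least two.

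The main obstacle is the $\subseteq$ direction, and specifically the step showing that every $\lf\in\K$ has finite rank so that Theorem~\ref{TheoremFiniteRankMomentMatrix} applies. This rests on the bound $\rk(M(\lf))\le\dim(\RA/\rIRA)$. That bound in turn requires $\rIRA\subseteq\ker(M(\lf))$, which we get from Lemma~\ref{LemmaKernelQF} together with $I\subseteq\ker(M(\lf))$, and then Lemma~\ref{lem:rank_equals_dim_cooredinate_ring}. I would state this chain explicitly rather than cite the display.
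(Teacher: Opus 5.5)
Your proof is correct and follows the same route as the paper. The easy inclusion comes from positivity of the evaluations. The reverse inclusion uses Lemma~\ref{LemmaKernelQF}, the finite-rank decomposition of Theorem~\ref{TheoremFiniteRankMomentMatrix}, the fact that the support lies in $\mathcal{V}_{\RA}(I)$, and normalization by $\lf(1)=1$. Beyond the paper, you spell out why $\rk(M(\lf))<\infty$, which the paper only asserts. You also justify the vertex claim through linear independence of the evaluations, which the paper leaves implicit.
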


\begin{proof}
The inclusion ``$\supseteq$'' is immediate: each $\lf_{\mathfrak m}$ satisfies
$\lf_{\mathfrak m}(1)=1$, is positive, and vanishes on $I$ whenever
$I\subseteq \mathfrak m$; hence every convex combination lies in $\K$.

For the reverse inclusion, let $\lf\in\K$.
Since $I\subseteq\ker(M(\lf))$ and the latter ideal is real radical
by Lemma~\ref{LemmaKernelQF}, we have
$\rIRA\subseteq\ker(M(\lf))$.
Since $M(\lf)\succeq 0$ and $\rk(M(\lf))<\infty$,
Theorem~\ref{TheoremFiniteRankMomentMatrix} implies that $\lf$ is a conic
combination of finitely many elements of $\mathrm{Sper}(\RA)$. Its
support is $\mathcal V_{\RA}(\ker (M(\lf)))$, which is contained in
$\mathcal{V}_{\RA}(I)$.
Normalizing by $\lf(1)=1$ yields a convex combination, hence
$\lf\in \mathrm{conv}(\{\lf_{\mathfrak m} : \mathfrak m \in \mathcal{V}_{\RA}(I)\})$.
\end{proof}

\begin{example} (Continuation of Example \ref{exp:Dim_not_equal_num_sol}.)
\label{ExampleDefinitionf1f2f3}
Let $\A=\{(2,1),(1,2)\}=\{\alpha_1,\alpha_2\}$ and consider
\begin{align*}
f_1 & = x^{2 \alpha_1 + \alpha_2},\\
f_2 & = x^{2 \alpha_1 + \alpha_2} - x^{\alpha_2} + x^{\alpha_1} + 1,\\
f_3 & = x^{2\alpha_1} - x^{\alpha_1},\\
f_4 & = x^{2\alpha_2} - x^{\alpha_2} + x^{2 \alpha_1 + 4 \alpha_2},
\end{align*}
and the ideal $J:=\langle f_1,f_2,f_3,f_4\rangle$.
Let $\lf=\frac{1}{2}(\lf_{\mathfrak m_1}+\lf_{\mathfrak m_2})$.
Then $\lf\in\K$ is generic, and therefore
\[
\sqrt[\R]{J}   = \ker(M(\lf)) = I = \bigl\langle
  x^{\alpha_1 + \alpha_2},\;
  x^{\alpha_1 + \alpha_2} - x^{\alpha_1} - x^{\alpha_2} + 1
  \bigr\rangle.
\]
\end{example}

\subsection{Sparse flat extension theorem}
Even though the polytope $\K$ is finite dimensional, its ambient space is infinite dimensional.
To recover the real solutions of the system, we would like to work on finite-dimensional spaces, so we will truncate the moment matrices and $\K$.
In what follows, we focus on truncated moment matrices and prove that the classical flat extension theorem \cite{curto1998flat} holds in the sparse setting. 
For this, we adapt the construction of
\cite[Section~2]{laurent2009generalized} to the toric coordinate
ring $\RA$. The main difference with our construction is that we need
to verify that the multiplication maps respect the relations between
the monomials in $\RA$.

Let $\A=\{\alpha_1,\dots,\alpha_m\}$. Consider
the map $\psi_\A : \R[z_1,\dots,z_m] \rightarrow \RA$ defined in \eqref{eq:kodairaMap} which maps $z_i \mapsto \psi_\A(z_i) := x^{\alpha_i}$.
Consider the kernel of this map
$T_\A:=\ker(\psi_\A) \subseteq \R[z_1,\dots,z_r]$ and fix a finite generating set $G_\A$ of $T_\A$. We
define
\begin{align}\label{eq:rhoA}
 \rho_\A:=\max\{\deg(g):g\in G_\A\},
\end{align}
with $\rho_\A=0$ if $T_\A= \langle 0 \rangle$. 
\begin{theorem}[Sparse flat extension]
\label{TheoremSparseFlatExtension}
Let $s\geq\rho_\A$ and let $\lf\in\RA_{2s}^*$ satisfy
\[
 M_s(\lf)\succeq0,
 \qquad
 \rk (M_s(\lf))=\rk (M_{s-1}(\lf))=:r.
\]
Then there is a unique $\widetilde\lf\in\RA^*$ such that
\[
 \widetilde\lf|_{\RA_{2s}}=\lf,
 \qquad M(\widetilde\lf)\succeq0,
 \qquad \rk (M(\widetilde\lf))=r.
\]
\end{theorem}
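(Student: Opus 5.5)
The plan is to follow the Curto--Fialkow/Laurent--Mourrain route: build a finite-dimensional quotient algebra from $M_s(\lf)$, show it is a genuine commutative $\RA$-algebra, and then define $\widetilde\lf$ through it.

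First, flatness gives a basis. Since $\rk M_s(\lf)=\rk M_{s-1}(\lf)=r$, choose a set $\B\subseteq\A_{s-1}$ of monomials indexing a maximal linearly independent set of columns of $M_{s-1}(\lf)$. Because $M_s(\lf)\succeq 0$, linear dependencies among the columns of the principal submatrix extend to the full matrix. Hence $\B$ also indexes a column basis of $M_s(\lf)$. It follows that every $x^\gamma$ with $\gamma\in\A_s$ can be written as $x^\gamma \equiv r_\gamma \bmod \ker M_s(\lf)$, with $r_\gamma\in\mathrm{span}(\B)$.

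Next, define multiplication operators. Let $V:=\RA_s/\ker M_s(\lf)$, which has dimension $r$ and basis $\B$. For each $\alpha_i\in\A$, define $X_i:V\to V$ on the basis by $x^\beta\mapsto [x^{\beta+\alpha_i}]$. This is legitimate since $\beta\in\A_{s-1}$ gives $\beta+\alpha_i\in\A_s$. The standard PSD argument (as in \cite[Section~2]{laurent2009generalized}) gives two facts. First, $\ker M_s(\lf)\cap\RA_{s-1}$ is stable under multiplication by $x^{\alpha_i}$ modulo $\ker M_s(\lf)$. Second, each $X_i$ is self-adjoint for the nondegenerate form $\langle f,g\rangle=\lf(fg)$ on $V$. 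These yield pairwise commutation $X_iX_j=X_jX_i$, checked on $\B$ using the connectedness-to-one property. Here that property holds automatically, because we may take $\B$ closed under the relevant divisibility when it is picked greedily by degree.

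The new ingredient, and the main obstacle, is to check that the $X_i$ satisfy the toric relations. We must show $g(X_1,\dots,X_m)=0$ for every $g\in G_\A$, so that $z_i\mapsto X_i$ factors through $\psi_\A$ and yields an algebra homomorphism $\Phi:\RA\to\mathrm{End}(V)$. This is where $s\ge\rho_\A$ enters. For $g\in G_\A$ we have $\deg g\le s$. Write $g=\sum_\mu c_\mu z^\mu$ with $\psi_\A(g)=\sum_\mu c_\mu x^{\sum\mu_i\alpha_i}=0$. Then $g(X)$ applied to $[1]$ equals $\sum c_\mu[x^{\mu\cdot\alpha}]$, which is the class of $\psi_\A(g)=0$. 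To justify the identity $z^\mu(X)[1]=[x^{\mu\cdot\alpha}]$ for $|\mu|\le s$, I would use induction on $|\mu|$ together with commutativity and the fact that the $X_i$ act as multiplication on classes of degree at most $s-1$. Once $g(X)[1]=0$, commutativity gives $g(X)\,h(X)[1]=h(X)g(X)[1]=0$. Since $[1]$ generates $V$ as a module (every $b\in\B$ is $m(X)[1]$ for a monomial $m$), this gives $g(X)=0$. I expect the care here to lie in the induction: one must ensure that intermediate monomials stay in $\A_{s-1}$ or are reduced correctly. A generator of degree exactly $s$ requires the last step to land in $\A_s$ rather than $\A_{s-1}$, which is allowed by the definition of the $X_i$.

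Finally, extension and uniqueness. Set $\widetilde\lf(f):=\langle \Phi(f)[1],[1]\rangle$. This is well defined on $\RA$ by the previous step, and it agrees with $\lf$ on $\RA_{2s}$. To see the agreement, write $x^\gamma=x^\beta x^{\beta'}$ with $\beta,\beta'\in\A_s$ and use self-adjointness. Its kernel contains $\ker\Phi$, and $\RA/\ker M(\widetilde\lf)\cong V$, so $\rk M(\widetilde\lf)=r$ by Lemma~\ref{lem:rank_equals_dim_cooredinate_ring}. Positivity follows from $\widetilde\lf(f^2)=\langle\Phi(f)[1],\Phi(f)[1]\rangle$, where the form is PSD because it is the form induced by $M_s(\lf)\succeq0$. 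For uniqueness, suppose $\lf'$ is another such extension. Then $\B$ still indexes a column basis of $M(\lf')$, since its rank is $r$ and it contains $M_s$. Hence the kernel ideal $\ker M(\lf')$ contains $x^\gamma-r_\gamma$ for all $\gamma$, and this determines all moments recursively from those in $\RA_{2s}$.
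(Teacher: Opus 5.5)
Your proposal is correct and follows essentially the same route as the paper. Both arguments take a column basis $\B\subseteq\RA_{s-1}$ and build self-adjoint commuting multiplication operators on the $r$-dimensional space $V$. Both use the word identity $X^w(e)=\pi(x^w)$ for words of length at most $s$: since $s\ge\rho_\A$, it kills each generator of $T_\A$ on $e$, hence on all of $V$ by commutativity and cyclicity of $e$. Positivity and rank then come from the Gram-matrix argument, and uniqueness from the kernel relations $x^{\alpha_i}b-X_i(b)$ and $1-e$.

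The one superfluous step is your appeal to connectedness-to-one, which you can simply drop. Commutation already follows from
$\langle X_iX_jp,q\rangle_\lf=\lf(x^{\alpha_j}p\,x^{\alpha_i}q)=\lf(x^{\alpha_i+\alpha_j}pq)$ for $p,q\in\mathrm{span}(\B)\subseteq\RA_{s-1}$, and the cyclicity of $[1]$ comes from the word identity. Moreover, as stated, that remark is not quite true: greedy selection ``by degree'' yields a connected-to-one basis only if ties are broken by an order compatible with multiplication.
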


\begin{proof}
Let $K_s:=\ker(M_s(\lf))$. As $\rk(M_s(\lf))=\rk(M_{s-1}(\lf))$, we can choose a monomial column basis
$\B\subseteq\RA_{s-1}$ of $M_s(\lf)$.
We consider the $\R$-vector space $V$ spanned by the monomials in $\B$.
Every element in $\RA_s$ has a projection to $V$ given by the map be $\pi:\RA_s\longrightarrow V$ which sends each $p\in\RA_s$ to its unique projection $\pi(p)\in V$ characterized by
\begin{equation}
  p-\pi(p)\in K_s.                                  
  \label{eq:direct-reduction}
\end{equation}
Indeed, the residue classes of the monomials in $\B$ form a basis of $\RA_s/K_s$: they are linearly independent by the choice of $\B$, and both spaces have dimension $r$. Thus $\RA_s=V\oplus K_s$.
Observe that, if $h\in K_s\cap\RA_{s-1}$, then
\begin{equation}
 x^{\alpha_i} \ h\in K_s.                                        
 \label{eq:direct-truncated-ideal}
\end{equation}
Indeed, if $q=\pi(q)+p\in\RA_s$ with $p\in K_s$, then
$\lf((x^{\alpha_i}  \ h)\ q)
 =\lf\bigl(h \ x^{\alpha_i} \ \pi(q)\bigr)+\lf\bigl((x^{\alpha_i} \ h) \ p\bigr)=0.$ 
Here the first term vanishes because $h\in K_s$ and $x^{\alpha_i} \, \pi(q)\in\RA_s$, while the second vanishes because $p \in K_s$ and $x^{\alpha_i} \, h\in\RA_s$.

Given $p,q\in V$, consider the bilinear form
$\langle p,q\rangle_\lf:=\lf(pq)$. This is an inner product on $V$.
In fact, its Gram matrix is the principal submatrix of $M_s(\lf)$
indexed by $\B$.
This inner product is nondegenerate and its Gram matrix is positive definite because
$M_s(\lf)\succeq0$ and the corresponding columns are linearly independent.

For $i=1,\dots,m$, we define the multiplication maps
\begin{equation}
  X_i:V\longrightarrow V,\qquad X_i (p) :=\pi(x^{\alpha_i} \, p).       
  \label{eq:direct-multiplication}
\end{equation}
With respect to the inner product induced by $\lf$, these maps are self-adjoint,
\[
 \langle X_i (p),q\rangle_\lf
 =\lf(\pi(x^{\alpha_i} \, p) \ q) = \lf(x^{\alpha_i} \, p \, q)
 =\langle p,X_i(q)\rangle_\lf, \]
They also commute.  For $p,q\in V$,
\begin{align*}
 \langle X_i(X_j(p)),q\rangle_\lf
 =\langle X_j(p),X_i(q)\rangle_\lf
   =\lf(x^{\alpha_j} \, p \ x^{\alpha_i} \, q)
 =\lf(x^{\alpha_i} \, p \ x^{\alpha_j} \, q)
   =\langle X_j(X_i(p)),q\rangle_\lf.
\end{align*}
The nondegeneracy of the inner product gives $X_i \circ X_j=X_j \circ X_i$.

Let $e:=\pi(1)\in V$.  For $k \leq s$ and $w \in \{1\dots m\}^k$, define $x^w= \prod_{j = 1}^k x^{\alpha_{w_j}}$ and $X^w=X_{w_1} \circ \cdots \circ X_{w_k}$.
We claim that
\begin{equation}
  X^w(e)=\pi(x^w).
  \label{eq:direct-word-reduction}
\end{equation}
This can be proved by induction on $k$. The case $k=0$ is immediate. Inductively, if $x^{(w_1,\dots,w_{k-1})}-X^{(w_1,\dots,w_{k-1})}(e) \in K_s\cap\RA_{s-1}$, by \eqref{eq:direct-truncated-ideal}, we have that
$x^{\alpha_{w_k}} (x^{(w_1,\dots,w_{k-1})}-X^{(w_1,\dots,w_{k-1})}(e)) \in K_s$.
Consequently, commutativity of the multiplication maps completes the induction as 
\begin{multline*}
    \pi(x^{(w_1,\ldots,w_k)})= \pi(x^{\alpha_{w_k}} \ x^{(w_1,\dots,w_{k-1})}) = \pi(x^{\alpha_{w_k}} \ X^{(w_1,\ldots,w_{k-1})}(e)) \\ =X_{w_k} \circ X^{(w_1,\ldots,w_{k-1})}(e) = X^{(w_1,\dots,w_{k})}(e).
\end{multline*}

The multiplication maps induce an algebra homomorphism $\Phi : \R[x_1,\dots,x_m] \rightarrow \operatorname{End}_\R(V)$ sending $x_i \mapsto \Phi(x_i) := X_i$.
Since $\RA\simeq\R[x_1,\dots,x_m]/T_\A$, it remains to prove that the multiplication maps respect the relations in $T_\A=\ker(\psi_\A)$.
Recall $G_\A$ is a generating set of $T_\A$.
Fix $g\in G_\A$ and $b \in \B$.
By \eqref{eq:direct-word-reduction}, there is $k\leq s-1$ and $w(b)\in \{1\dots m\}^k$ such that $b=X^{w(b)}(e)$.
Moreover, every monomial of $g$ has degree at most
$\rho_\A\leq s$, so by
\eqref{eq:direct-word-reduction} and $\psi_\A(g)=0$ we have that
\begin{multline*}
  \Phi(g) (b) = g(X_1,\dots,X_m) (b) = g(X_1,\dots,X_m) \circ X^{w(b)}(e) \\ =
   X^{w(b)} \circ g(X_1,\dots,X_m) (e) = 
  X^{w(b)}(\pi(\psi_\A(g))) = 0.
\end{multline*}
Thus $g(X_1,\dots,X_m)=0$ on $V$.  This holds for all $g\in G_\A$, and hence
for the whole toric ideal $T_\A$ as wanted.
Therefore, $\Phi$ factors through the quotient by $T_\A=\ker(\psi_\A)$. 
We abuse notation and consider $\Phi : \RA\simeq\R[x_1,\dots,x_m]/T_\A \rightarrow \operatorname{End}_\R(V)$. Observe that the image of $\Phi$ at any monomial in $\RA$ does not depend on the chosen factorization of that monomial.

We now define the flat extension $\widetilde\lf \in \RA^*$  as \[\widetilde\lf(f):=\lf\bigl(\Phi(f)(e)\bigr).\]
First observe that, as $1-e\in K_s$, for every $v\in V$,
$\lf(v)=\langle v,e\rangle_\lf.$
Moreover, since the maps $X_i$ commute and are self-adjoint, $\Phi(f)$ is self-adjoint for every $f\in\RA$. It follows that, for all $f,g\in\RA$,
\begin{equation}\label{eq:direct-full-gram}
    \widetilde\lf(fg) = \lf(\langle\Phi(fg)(e)) 
    =\bigl\langle\Phi(fg)(e),e\bigr\rangle_\lf
    =\bigl\langle\Phi(f)(e),\Phi(g)(e)\bigr\rangle_\lf.
\end{equation}
We now show that $\widetilde\lf$ extends $\lf$. 
Let $x^\tau\in\RA_{2s}$ be a monomial. We can write $x^\tau=x^\alpha x^\beta$ with $x^\alpha,x^\beta\in\RA_s$. By \eqref{eq:direct-word-reduction},
$\Phi(x^\alpha)(e)=\pi(x^\alpha)$ and $\Phi(x^\beta)(e)=\pi(x^\beta)$.
Therefore,
\[
    \widetilde\lf(x^\tau)
    =\bigl\langle\pi(x^\alpha),\pi(x^\beta)\bigr\rangle_\lf
    =\lf(x^\alpha x^\beta)
    =\lf(x^\tau),
\]
where the middle equality follows because $x^\alpha-\pi(x^\alpha)$ and $x^\beta-\pi(x^\beta)$ belong to $K_s$. By linearity, $\widetilde\lf|_{\RA_{2s}}=\lf$.

Equation \eqref{eq:direct-full-gram} shows that the full moment matrix $M(\widetilde\lf)$ is the Gram matrix of the vectors $\{\Phi(x^\alpha)(e):x^\alpha\in\RA\}$ in $V$ with respect to the inner product $\langle \bullet,\bullet \rangle_\lf$. Hence $M(\widetilde\lf)\succeq0$ and $\rk (M(\widetilde\lf))\leq\dim V=r$. On the other hand, if $b\in\B$, then \eqref{eq:direct-word-reduction} gives $\Phi(b)(e)=b$. Since $\B$ is a basis of $V$, these vectors are linearly independent, and therefore $\rk (M(\widetilde\lf))=r$. Thus, $\widetilde\lf$ is a flat extension of $\lf$.

For uniqueness, let $\widehat\lf\in\RA^*$ be another positive semidefinite extension of $\lf$ with $\rk(M(\widehat\lf))=r$. The columns indexed by $\B$ remain linearly independent and hence form a basis of the column space of $M(\widehat\lf)$. For $b\in\B$ and $i=1,\ldots,m$, set $q_{i,b}:=x^{\alpha_i} \, b-X_i(b)\in K_s$. Since $\widehat\lf$ extends $\lf$ and is positive semidefinite,
\[
 \widehat\lf(q_{i,b}^2)=\lf(q_{i,b}^2)=0
 \quad\Longrightarrow\quad q_{i,b}\in\ker(M(\widehat\lf)).
\]
The same argument gives $1-e\in\ker(M(\widehat\lf))$. Thus, in the basis $\B$, multiplication by $x^{\alpha_i}$ is represented by $X_i$ and the column of $1$ is represented by $e$. Consequently, the column of every monomial $x^\alpha$ is represented by $\Phi(x^\alpha)(e)$, and
\[
 \widehat\lf(x^\alpha)=\bigl\langle\Phi(x^\alpha)(e),e\bigr\rangle_\lf
 =\widetilde\lf(x^\alpha).
\]
By linearity, $\widehat\lf=\widetilde\lf$.
\end{proof}

Morally, the main difference with the previous flat extension theorems is the requirement that condition $s\geq\rho_\A$, which guarantees that
the generators of $T_\A$ are contained in the truncation under consideration.
As our next example shows, this condition can not be avoided.

\begin{example}
Take $\RA=\R[x^2,x^3]$.  For the map $\psi_\A$ from
\eqref{eq:kodairaMap}, we have
$T_\A=\langle x_1^3-x_2^2\rangle$.  At $s=1$, set
\[
 \lf(1)=1,\quad \lf(x^2)=1,\quad \lf(x^3)=2,\quad
 \lf(x^4)=1,\quad \lf(x^5)=2,\quad \lf(x^6)=4.
\]
Then
\[
 M_1(\lf)=
 \begin{bmatrix}1&1&2\\1&1&2\\2&2&4\end{bmatrix}\succeq0,
 \qquad \rk (M_1(\lf))=\rk (M_0(\lf))=1.
\]
Nevertheless, a rank-one full extension would be a character of $\RA$ and
would have to respect $(x^2)^3=(x^3)^2$, whereas the prescribed values give
$1^3\neq2^2$.  The obstruction is invisible at order one because its defining
binomial has degree three.  This is precisely the obstruction removed by
$s\geq\rho_\A$.
\end{example}

The following argument is the toric analogue of
\cite[Proposition~3.6]{lasserre2008semidefinite}.

\begin{corollary}
\label{CorollaryDirectFlatKernel}
Following the notation from Theorem~\ref{TheoremSparseFlatExtension}, we have that
\[
 \ker (M(\widetilde\lf))=\langle\ker (M_s(\lf))\rangle.
\]
Moreover, every monomial column basis
$\B\subseteq\RA_{s-1}$ of $M_s(\lf)$ is a basis of
$\RA/\ker(M(\widetilde\lf))$.
\end{corollary}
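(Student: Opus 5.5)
The plan is to prove the two inclusions $\langle\ker(M_s(\lf))\rangle\subseteq\ker(M(\widetilde\lf))$ and the reverse one, and to get the basis statement from a dimension count.

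\emph{First inclusion.} Let $p\in K_s:=\ker(M_s(\lf))$. Since $\widetilde\lf$ extends $\lf$, we have
\[
\widetilde\lf(p^2)=\lf(p^2)=\vect_s(p)^\top M_s(\lf)\vect_s(p)=0 .
\]
Because $M(\widetilde\lf)\succeq 0$, a vector with zero quadratic form lies in the kernel, so $p\in\ker(M(\widetilde\lf))$. This is the same argument as in the uniqueness part of Theorem~\ref{TheoremSparseFlatExtension}. Since $\ker(M(\widetilde\lf))$ is an ideal, it contains $\langle K_s\rangle$.

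\emph{Basis of the quotient.} Fix a monomial column basis $\B\subseteq\RA_{s-1}$ of $M_s(\lf)$. These columns are still linearly independent in $M(\widetilde\lf)$, and $\rk(M(\widetilde\lf))=r=|\B|$. By Lemma~\ref{lem:rank_equals_dim_cooredinate_ring}, the classes of $\B$ therefore form a basis of $\RA/\ker(M(\widetilde\lf))$. This proves the second claim for every such $\B$.

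\emph{Reverse inclusion.} It suffices to show that the classes of $\B$ span $\RA/\langle K_s\rangle$. Then
\[
\dim\RA/\langle K_s\rangle\le r=\dim\RA/\ker(M(\widetilde\lf)),
\]
and together with $\langle K_s\rangle\subseteq\ker(M(\widetilde\lf))$ this forces equality. I prove by induction on $\deg_\A$ that every monomial $x^\gamma\in\A_\infty$ is congruent modulo $\langle K_s\rangle$ to an element of $V=\mathrm{span}(\B)$.

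For $x^\gamma\in\RA_s$ this is immediate, since $x^\gamma-\pi(x^\gamma)\in K_s$ by \eqref{eq:direct-reduction}. For higher degree, write $x^\gamma=x^{\alpha_i}x^{\gamma'}$ with $\deg_\A(x^{\gamma'})<\deg_\A(x^\gamma)$. By induction, $x^{\gamma'}\equiv v$ for some $v\in V$. Then
\[
x^\gamma\equiv x^{\alpha_i}v\pmod{\langle K_s\rangle}.
\]
Since $v\in V\subseteq\RA_{s-1}$, the product $x^{\alpha_i}v$ lies in $\RA_s$, so $x^{\alpha_i}v\equiv\pi(x^{\alpha_i}v)\in V$. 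Note that congruence modulo the ideal $\langle K_s\rangle$ is preserved under multiplication by $x^{\alpha_i}$.

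\emph{Main obstacle.} The only subtle point is that the choice of factorization of $x^\gamma$ could matter in $\RA$, because monomials have several factorizations. It does not: we only need \emph{some} representative in $V$ to exist, not a canonical one. Well-definedness of the multiplication maps is never used here, since that consistency issue was already handled by $s\ge\rho_\A$ in Theorem~\ref{TheoremSparseFlatExtension}.
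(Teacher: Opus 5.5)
Your proof is correct, but it takes a different route from the paper's.

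The paper works inside the construction of Theorem~\ref{TheoremSparseFlatExtension}, in four steps. It shows $\langle K_s\rangle\subseteq\ker(\Phi)$ using $\Phi(p)(e)=\pi(p)=0$ and the commutativity of the operators $\Phi(f)$. It gets $\RA=V+\langle K_s\rangle$ by the same reduction you use. It obtains $V\cap\langle K_s\rangle=0$ from $v=\Phi(v)(e)$. Finally, it identifies $\ker(M(\widetilde\lf))$ with $\ker(\Phi)$ through the Gram identity \eqref{eq:direct-full-gram}.

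You instead treat $\widetilde\lf$ as a black box: a positive semidefinite extension of $\lf$ of rank $r$. From that you get:
\begin{itemize}
\item the inclusion $\langle K_s\rangle\subseteq\ker(M(\widetilde\lf))$ from positivity, as in the uniqueness part of the theorem;
\item the basis statement directly from Lemma~\ref{lem:rank_equals_dim_cooredinate_ring};
\item the reverse inclusion from the spanning reduction plus a dimension count.
\end{itemize}

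Your version is more elementary and never touches $\Phi$. It also makes the \emph{every} in ``every monomial column basis'' transparent, since your basis step uses nothing about how $\B$ was chosen in the construction. The reduction step needs only one column basis, and any $\B$ works there because $\RA_s=V\oplus K_s$ holds for each one.

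The paper's version yields the extra structural fact $\ker(\Phi)=\langle K_s\rangle$. This says the operators $X_i$ on $V$ are exactly the multiplication maps of $\RA/\ker(M(\widetilde\lf))$ in the basis $\B$, which is what the eigenvalue method in the Appendix relies on. That fact also follows immediately from your result, since $x^{\alpha_i}b-X_i(b)\in K_s$.
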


\begin{proof}
In what follows, we use the same notations and constructions from the proof of Theorem~\ref{TheoremSparseFlatExtension}. Recall $K_s:=\ker(M_s(\lf))$ and let $J:=\langle K_s\rangle$. For $p\in K_s$, we have $\Phi(p)(e)=\pi(p)=0$. The vectors $\{ \Phi(f)(e) : f\in\RA\}$ span $V$ because they include the basis $\B$. Since the maps in the image of $\Phi$ commute, $\Phi(p)(\Phi(f)(e))=\Phi(f)(\Phi(p)(e))=0$ for every $f$, and therefore $J\subseteq\ker(\Phi)$.

Conversely, the relations $1-e\in K_s$ and
\[
 x^{\alpha_i} \, b-X_i(b)=x^{\alpha_i} \, b-\pi(x^{\alpha_i} \, b)\in K_s\qquad(b\in\B)
\]
reduce every monomial modulo $J$ to an element of $V$. Hence $\RA=V+J$. If $v\in V\cap J$, then $v=\Phi(v)(e)=0$, so $\RA=V\oplus J$ and $\ker(\Phi)=J$. Finally, by \eqref{eq:direct-full-gram} and the same spanning argument,
\[
 f\in\ker(M(\widetilde\lf))
 \iff \Phi(f)(e)=0
 \iff \Phi(f)=0.
\]
Thus $\ker(M(\widetilde\lf))=J$, and the residue classes of $\B$ form a basis of $\RA/J$.
\end{proof}

\begin{corollary}
\label{CorollaryDirectFlatAtomic}
The extension $\widetilde{\Lambda} \in \RA^*$ from Theorem~\ref{TheoremSparseFlatExtension} can be decomposed as 
\[
 \widetilde\lf=\sum_{k=1}^r\lambda_k\lf_{\mathfrak m_k},
 \qquad \lambda_k>0,
\]
where $\mathfrak m_1,\dots,\mathfrak m_r$ are distinct elements of
$\mathrm{Sper}(\RA)$. In particular,
$\ker(M(\widetilde\lf))$ is real radical.
\end{corollary}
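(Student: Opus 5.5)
The plan is to deduce the decomposition from Theorem~\ref{TheoremFiniteRankMomentMatrix}, applied to the flat extension $\widetilde\lf$ itself. By Theorem~\ref{TheoremSparseFlatExtension}, $\widetilde\lf\in\RA^*$ satisfies $M(\widetilde\lf)\succeq 0$, so $\widetilde\lf$ is positive in the sense of Definition~\ref{Def:positive_linear_forms}. Moreover, $\rk(M(\widetilde\lf))=r<\infty$. Lemma~\ref{LemmaKernelQF} then says that $\ker(M(\widetilde\lf))$ is real radical. This already gives the final claim of the corollary, and it supplies hypothesis (i) of Theorem~\ref{TheoremFiniteRankMomentMatrix}.

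Applying the implication (i)$\Rightarrow$(ii) of Theorem~\ref{TheoremFiniteRankMomentMatrix}, we obtain
\[
 \widetilde\lf=\sum_{k=1}^r\lambda_k\lf_k,
\]
where $\lambda_k\neq0$ and the $\lf_k\in\mathrm{Sper}(\RA)$ are pairwise distinct. We write $\lf_k=\lf_{\mathfrak m_k}$ via Remark~\ref{rmk:homIsSper}; distinct functionals correspond to distinct maximal real ideals $\mathfrak m_k$. The "furthermore" part of the same theorem then gives $\lambda_k>0$, since $\widetilde\lf$ is positive. Alternatively, one can argue directly: take interpolation polynomials $p_k$ with $\lf_j(p_k)=\delta_{jk}$, which exist by the Chinese remainder theorem as in Lemma~\ref{LemmaEvaluationsAndKernels}. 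Then $\lambda_k=\widetilde\lf(p_k^2)\ge 0$, and it is nonzero.

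There is no genuine obstacle here. The only point requiring care is the order of the steps: positivity of $\widetilde\lf$ must be established first, because it is what turns the kernel into a real radical ideal through Lemma~\ref{LemmaKernelQF}. Only then is the finite-rank characterization applicable, and the equivalence in that theorem does not by itself force the weights to be positive. As a consistency check, Lemma~\ref{LemmaEvaluationsAndKernels} shows that a combination of $r$ distinct points with nonzero weights has rank exactly $r$, so the number of atoms agrees with $\rk(M(\widetilde\lf))=r$. That lemma also yields $\ker(M(\widetilde\lf))=\bigcap_k\mathfrak m_k$, which gives the real-radicality a second time.
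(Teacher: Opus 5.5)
Your proposal is correct and matches the paper's proof. The paper also takes positivity and $\rk(M(\widetilde\lf))=r$ from Theorem~\ref{TheoremSparseFlatExtension}, gets real radicality of the kernel from Lemma~\ref{LemmaKernelQF}, and then obtains the decomposition with $\lambda_k>0$ from Theorem~\ref{TheoremFiniteRankMomentMatrix}; your interpolation argument for $\lambda_k>0$ just makes explicit the ``furthermore'' clause of that theorem.
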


\begin{proof}
By Theorem~\ref{TheoremSparseFlatExtension}, $\widetilde\lf$ is
positive and $\rk(M(\widetilde\lf))=r$. Hence
$\ker(M(\widetilde\lf))$ is real radical by
Lemma~\ref{LemmaKernelQF}, and the result follows directly from
Theorem~\ref{TheoremFiniteRankMomentMatrix}.
\end{proof}

\begin{remark}[Relation with connected-to-one flat extensions]
In \cite{laurent2009generalized}, Laurent and Mourrain consider a finite
monomial set connected to $1$ in a polynomial ring.  This means that every
nonconstant monomial in the set can be obtained from $1$ by successive
multiplication by variables, while remaining inside the set.  Their flat
extension theorem is formulated in terms of the border of such a monomial set.

The construction above also uses multiplication maps, but these maps
must respect the relations in $T_\A$ in order to be well-defined on
$\RA\simeq\R[x_1,\dots,x_m]/T_\A$.  This is the reason for the additional
condition $s\geq\rho_\A$.  Moreover, since $\A$ is allowed to be a subset of
$\Z^n$, the coordinate ring $\RA$ may contain Laurent monomials with negative
exponents.  Thus, the two approaches are related, but in the toric setting one
must additionally verify the compatibility with the relations in $T_\A$.
\end{remark}

\begin{example}[Continuation of Example \ref{ExampleDefinitionf1f2f3}.]
Consider the linear form $\Lambda=\lf_{\mathfrak m_1}+\lf_{\mathfrak m_2}$,
where
$\mathfrak m_1=\langle x^{\alpha_1},\; x^{\alpha_2}-1\rangle$ and
$\mathfrak m_2=\langle x^{\alpha_2},\; x^{\alpha_1}-1\rangle$.
Then
\[  
  M_2(\Lambda)= 
  \begin{array}{c | c c c c c c}
  & 1&  x^{\alpha_1}&  x^{\alpha_2}&  x^{2\alpha_1}&  x^{\alpha_1 + \alpha_2}&  x^{2\alpha_2} \\  \hline
  1 &  2 & 1 & 1 & 1 & 0 & 1 \\
  x^{\alpha_1} &  1 & 1 & 0 & 1 & 0 & 0 \\
  x^{\alpha_2}&   1 & 0 & 1 & 0 & 0 & 1 \\
  x^{2\alpha_1}&  1 & 1 & 0 & 1 & 0 & 0 \\
  x^{\alpha_1 + \alpha_2}&   0 & 0 & 0 & 0 & 0 & 0 \\
  x^{2\alpha_2} &   1 & 0 & 1 & 0 & 0 & 1
  \end{array},
\]
In this case, $\rho_\A = 0$ as the monomials $x^\alpha_1$ and $x^\alpha_2$ are algebraically independent.
Moreover,
$\rk(M_1(\Lambda))=\rk(M_2(\Lambda))=2$, so the flat extension condition holds
and we can extend $\Lambda$ to an infinite moment matrix $M(\widetilde{\Lambda})$.
Concretely, we can express higher monomials modulo the ideal
$\langle \ker(M(\widetilde{\Lambda}))\rangle=\langle \ker(M_2(\Lambda))\rangle$.
For instance,
\[
  x^{3\alpha_1}
  =
  \bigl(x^{2\alpha_1} - x^{\alpha_1}\bigr)
  (x^{\alpha_1}+1) + x^{\alpha_1}.
\]
Since $x^{2\alpha_1} - x^{\alpha_1} \in \ker(M(\widetilde{\Lambda}))$, the column of
$M(\widetilde{\Lambda})$ indexed by $x^{3\alpha_1}$ coincides with the column
indexed by $x^{\alpha_1}$.
Iterating this reduction yields all columns of $M(\widetilde{\Lambda})$ from the
finite block $M_2(\Lambda)$.
\end{example}

\subsection{Computing solutions on toric varieties}
\label{subsec:computing_toric_varieties}

We explain how the algorithm of \cite{laurent2012approach} extends to solving sparse
polynomial systems over the real part of the toric variety associated with $\RA$.
The main idea is to compute the real radical ideal $\rIRA$ from truncated moment
data. A central ingredient is the characterization of real radical ideals via
positive linear forms (Theorem
\ref{TheoremKernMRealRadicalIffLinearCombination}).

To make the method effective, we consider finite-dimensional
truncations $\RA_s\subset \RA$.
Fix an ideal $I=\langle f_1,\dots,f_\mu\rangle \subset \RA$
generated by the polynomials $f_i\in\RA$ we want to solve.
For $t\in\NN$, define the set of \emph{prolongations up to degree $t$} by
\begin{equation}
\label{DefintionOfHt}
\Ht
\coloneqq
\Bigl\{
x^\alpha \,f_i 
\ \Big|\
\alpha\in\A_\infty,\ i=1,\dots,\mu,\ 
\deg_{\A}(x^\alpha) \le t-\deg_{\A}(f_i)
\Bigr\}.
\end{equation}
We also define the truncated analogue of $\K$,
\begin{equation}
\label{DefinitionOfKt}
\K_t
\coloneqq
\Bigl\{
\lf \in \RA_t^*
\ \Big|\
\lf(1)=1,\ 
M_{\lfloor t/2 \rfloor}(\lf)\succeq 0,\ 
\lf(f)=0\ \text{for all } f\in \Ht
\Bigr\}.
\end{equation}

A key observation is the relation between
generic elements of $\K$ and the real radical $\rIRA$ (Lemma
\ref{LemmaGenericLF}). The next lemma provides the truncated analogue.

\begin{lemma}
\label{LemmaGenericLFinKt}
Let $\lf\in\K_t$. The following are equivalent:
\begin{enumerate}
\item[(i)] $M_{\lfloor t/2\rfloor}(\lf)$ has maximum rank among the moment
matrices associated with elements of $\K_t$;
\item[(ii)] $\ker(M_{\lfloor t/2\rfloor}(\lf)) \subseteq
\ker(M_{\lfloor t/2\rfloor}(\lf'))$ for every $\lf'\in\K_t$;
\item[(iii)] $\lf\in\operatorname{relint}(\K_t)$.
\end{enumerate}
\end{lemma}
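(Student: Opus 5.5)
The plan is to reduce the statement to a fact about convex sets of positive semidefinite matrices, since all three conditions only involve the finite-dimensional affine set $\K_t$. Write $s:=\lfloor t/2\rfloor$. The map $\lf\mapsto M_s(\lf)$ is linear. By Definition~\ref{DefinitionOfKt}, $\K_t$ is the intersection of an affine subspace of $\RA_t^*$ with the preimage of the PSD cone, so it is a convex set (a spectrahedron). The key tool is the standard kernel identity already used in Lemma~\ref{LemmaEvaluationsAndKernels}: for positive semidefinite $A,B$ and $\lambda,\mu>0$ we have $\ker(\lambda A+\mu B)=\ker(A)\cap\ker(B)$.

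For (i)$\Rightarrow$(ii), let $\lf$ have maximal rank and take any $\lf'\in\K_t$. The midpoint $\frac12(\lf+\lf')$ lies in $\K_t$ by convexity. Its moment matrix has kernel $\ker M_s(\lf)\cap\ker M_s(\lf')$, which gives
\[
\rk M_s(\tfrac12(\lf+\lf')) \ge \rk M_s(\lf).
\]
Maximality forces equality, so the two kernels coincide, and hence $\ker M_s(\lf)\subseteq\ker M_s(\lf')$.

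For (ii)$\Rightarrow$(i), the kernel inclusion gives $\rk M_s(\lf')\le\rk M_s(\lf)$ directly.

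For (iii)$\Rightarrow$(ii), take $\lf\in\operatorname{relint}(\K_t)$ and any $\lf'\in\K_t$. For small $\varepsilon>0$, the point $\lf'':=\lf+\varepsilon(\lf-\lf')$ still lies in $\K_t$. Then $\lf$ is a strict convex combination of $\lf''$ and $\lf'$, so $\ker M_s(\lf)=\ker M_s(\lf'')\cap\ker M_s(\lf')\subseteq\ker M_s(\lf')$.

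The main obstacle is (ii)$\Rightarrow$(iii): showing that a point with minimal kernel lies in the relative interior. Set $N:=\ker M_s(\lf)$ and take $\lf'\in\K_t$. I must show $\lf-\varepsilon(\lf'-\lf)\in\K_t$ for small $\varepsilon>0$; relative interiority then follows by the standard characterization of the relative interior of a convex set via extendable segments. The affine constraints $\lf(1)=1$ and $\lf(f)=0$ for $f\in\Ht$ are preserved automatically, since $\lf'-\lf$ lies in the direction space. The remaining point is positivity of
\[
M_s(\lf)-\varepsilon\bigl(M_s(\lf')-M_s(\lf)\bigr).
\]
Using (ii), $N\subseteq\ker M_s(\lf')$, so both matrices vanish on $N$. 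I will therefore restrict to $N^\perp$, where $M_s(\lf)$ is positive definite, with smallest eigenvalue $c>0$ there. Choosing $\varepsilon\,\|M_s(\lf')-M_s(\lf)\|<c$ keeps the perturbed matrix PSD on $N^\perp$, and it is zero on $N$. This works because $\K_t$ is finite-dimensional, so only one direction needs checking at a time. To be careful, I would phrase relint via the criterion ``for every $\lf'\in\K_t$ there is $\varepsilon>0$ with $\lf+\varepsilon(\lf-\lf')\in\K_t$'', which is equivalent to membership in the relative interior for convex sets in finite dimension.
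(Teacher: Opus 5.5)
Your proof is correct and follows essentially the same route as the paper. Both use the midpoint argument with $\ker(A+B)=\ker(A)\cap\ker(B)$ for (i)$\Rightarrow$(ii), rank--nullity for the converse, and the segment characterization of the relative interior applied to $\lf+\varepsilon(\lf-\lf')$ for (ii)$\Leftrightarrow$(iii). The differences are minor. You justify the ``sufficiently small $\varepsilon$'' step explicitly via the smallest positive eigenvalue of $M_s(\lf)$ on $\ker(M_s(\lf))^\perp$, which the paper leaves implicit. For (iii)$\Rightarrow$(ii) you write $\lf$ as a strict convex combination and reuse the kernel identity, where the paper instead uses the inequality $0\le-\varepsilon u^{T}Nu$.
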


\begin{proof}
Write $M=M_{\lfloor t/2\rfloor}(\lf)$ and
$N=M_{\lfloor t/2\rfloor}(\lf')$. Since both are positive semidefinite,
\[
 \ker(M+N)=\ker(M)\cap\ker(N).
\]
If $M$ has maximum rank, then $(M+N)/2 \in \K_t$ cannot have larger
rank, so its kernel must equal $\ker (M)$; hence $\ker (M)\subseteq\ker (N)$.
The converse follows immediately from the rank--nullity theorem.

Assume (ii). 
$N$ vanishes on $\ker(M)$, because $\ker (M)\subseteq\ker (N)$. Thus, for sufficiently small
$\varepsilon>0$,
\[
 (1+\varepsilon)M-\varepsilon N\succeq0.
\]
All affine constraints defining $\K_t$ are preserved by the corresponding
functional $\lf+\varepsilon(\lf-\lf')$. The standard segment
characterization of relative interior now gives (iii). Conversely, if
$\lf$ is in the relative interior, the extension $\lf+\varepsilon_{\lf'}(\lf-\lf')$ is contained in $\K_t$ for 
every $\lf'\in\K_t$ and small enough $\varepsilon_{\lf'}$. For $u\in\ker (M)$, positivity of
$(1+\varepsilon)M-\varepsilon N$ gives
$0\le-\varepsilon u^TNu$, hence $Nu=0$. Therefore (ii) holds.
\end{proof}

The set $\K_t$ is a spectrahedron \cite{Spectrahedron}.
This is crucial computationally: standard semidefinite programming (SDP) solvers
can approximate points in $\K_t$.

Observe that by Lemma~\ref{LemmaGenericLFinKt}, for any $\Lambda \in \K_t$ in the relative interior of $\K_t$, we obtain the same vector space $\ker\bigl(M_{\lfloor t/2\rfloor}(\lf)\bigr)$. This motivates the following definition.

\begin{definition}
\label{DefinitionGenericLFandNt}
An element satisfying Lemma~\ref{LemmaGenericLFinKt} is called
\emph{generic}. For a generic $\lf\in\K_t$, set
\[
 \mathcal N_t:=\ker (M_{\lfloor t/2\rfloor}(\lf)) \subseteq\RA.
\]
The space $\mathcal N_t$ is independent of the chosen generic element.
\end{definition}

Generic elements can be obtained by choosing a point in the relative
interior of the spectrahedron. Such an interior point can be computed
via a constant-objective SDP describing the feasible set, that is,
\begin{align}
\mathrm{argmin}_{\lf \in \K_t} 1.
\label{AlignSDP}
\end{align}
In practice, we might use an interior-point procedure to
obtain a maximum-rank solution.

The following lemma shows that any polynomial in $\sqrt[\R]{I}$ is
eventually in the kernel of every linear functional in $\K_t$, when $t$
is big enough.

\begin{lemma}
\label{lem:eventual-radical-kernel}
Let $F\subseteq\sqrt[\R]{I}$ be finite. There is $t_0$ such that, for every
$t\ge t_0$ and $s \in \NN$, we have that
\[
 \{ x^\alpha \ f : f \in F, x^\alpha \in \RA_s\} \subseteq\ker (M_{\lfloor t/2\rfloor + s}(\lf)) \qquad (\forall \lf\in\K_{t+2s}).
\]
\end{lemma}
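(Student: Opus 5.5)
The plan is to reduce everything to the positivity of the truncated moment matrix, through the Cauchy--Schwarz inequality
\[
 \lf(pq)^2\le \lf(p^2)\,\lf(q^2)\qquad (p,q\in\RA_{k},\ M_k(\lf)\succeq0).
\]
Two facts will be used throughout. The first is that restriction is compatible with truncation: if $\lf\in\K_{t+2s}$, then $\lf|_{\RA_{t'}}\in\K_{t'}$ for every $t'\le t+2s$. Indeed, $\mathcal H_{t'}\subseteq\mathcal H_{t+2s}$, and $M_{\lfloor t'/2\rfloor}$ is a principal submatrix of $M_{\lfloor t/2\rfloor+s}$. The second is that, for a positive $\lf$, the identity $\lf(g^2)=0$ with $g$ of degree at most $k$ implies $\lf(gr)=0$ for every $r\in\RA_k$, that is, $g\in\ker(M_k(\lf))$. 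Since $F$ is finite, it suffices to treat a single $f\in F$ and take the maximum of the resulting thresholds $t_0$.

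\emph{Step 1 (certificate).} Since $f\in\rIRA$, fix $m\ge1$ and $h_j\in\RA$ with $f^{2m}+\sum_j h_j^2\in I$. Write this element as $\sum_i u_i f_i$ with $u_i\in\RA$. Choose $t_1$ so large that $\deg_\A(u_i)+\deg_\A(f_i)\le t_1$ for all $i$, and that $f^m,h_j\in\RA_{\lfloor t_1/2\rfloor}$. Then, for every $t\ge t_1$ and $\lf\in\K_t$, the certificate is a linear combination of elements of $\Ht$. Hence $0=\lf(f^{2m})+\sum_j\lf(h_j^2)$, and positivity gives $\lf(f^{2m})=0$. By the second fact above, $f^m\in\ker M_{\lfloor t/2\rfloor}(\lf)$.

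\emph{Step 2 (descent in the exponent).} From $f^m$ in the kernel we get $\lf(f^m f^j)=0$ for all $j\le m$, provided $f^j\in\RA_{\lfloor t/2\rfloor}$. In particular $\lf(f^{2k})=0$ for every $k$ with $m/2\le k\le m$. This yields $f^{\lceil m/2\rceil}\in\ker$, and iterating the halving gives $\lf(f^{2})=0$ and $\lf(f^{4})=0$. This requires enlarging $t_1$ slightly, so that $f^2$ and all intermediate powers lie in $\RA_{\lfloor t/2\rfloor}$. Call the resulting threshold $t_0$. We also ask that $2\deg_\A(f)\le\lfloor t_0/2\rfloor$.

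\emph{Step 3 (shifting by monomials, uniformly in $s$).} This is the step I expect to be the main obstacle, because $t_0$ must not depend on $s$. Multiplying the certificate by $x^{m\alpha}$ raises its degree by $ms$, which the budget $2s$ cannot absorb once $m>1$. So I will not re-run Steps 1--2 for $x^\alpha f$. Instead, I will use Cauchy--Schwarz twice at level $k:=\lfloor t/2\rfloor+s$, for $\lf\in\K_{t+2s}$ and $x^\alpha\in\RA_s$.

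First, write $\lf\bigl((x^\alpha f)^2\bigr)=\lf(f^2\cdot x^{2\alpha})$. Since $f^2\in\RA_{\lfloor t/2\rfloor}\subseteq\RA_k$ and $x^{2\alpha}\in\RA_{2s}$, we get $\lf(f^2x^{2\alpha})^2\le\lf(f^4)\,\lf(x^{4\alpha})$. Here $\lf(f^4)=0$ by Step 2, applied to the restriction $\lf|_{\RA_t}\in\K_t$. One must check that $x^{2\alpha}$ fits inside $\RA_k$. If it does not, I would instead pair $x^\alpha f\cdot x^\alpha f$ as $(f)\cdot(x^{2\alpha}f)$ and balance the degrees using $s\le k-\lfloor t/2\rfloor$; sorting out this degree bookkeeping is precisely the delicate part.

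Second, once $\lf((x^\alpha f)^2)=0$ and $x^\alpha f\in\RA_{\deg_\A f+s}\subseteq\RA_k$, the second fact gives $\lf(x^\alpha f\,q)=0$ for every $q\in\RA_k$. That is, $x^\alpha f\in\ker M_{\lfloor t/2\rfloor+s}(\lf)$, which proves the lemma.
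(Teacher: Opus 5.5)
Your Steps 1 and 2 are fine. Step 1 is even more direct than the paper's route, since you apply $\lf$ to the prolongations instead of first putting each $u_if_i$ into the kernel. Step 3, however, has a genuine gap, exactly where you flag it.

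Both of your Cauchy--Schwarz pairings need a factor of degree about $2s$ inside $\RA_k$, where $k=\lfloor t/2\rfloor+s$.
\begin{itemize}
\item The pairing $f^2\cdot x^{2\alpha}$ needs $x^{2\alpha}\in\RA_k$, and it needs $\lf(x^{4\alpha})$ to be defined, i.e.\ $4s\le t+2s$. Both force $s\le\lfloor t/2\rfloor$ in general.
\item The fallback pairing $f\cdot x^{2\alpha}f$ needs $2s+\deg_\A(f)\le k$, i.e.\ $s+\deg_\A(f)\le\lfloor t/2\rfloor$, which is worse.
\end{itemize}
The lemma fixes $t\ge t_0$ and lets $s\in\NN$ be arbitrary. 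So for any $t_0$ independent of $s$, taking $t=t_0$ and $s>\lfloor t_0/2\rfloor$ breaks both pairings. As written, you only prove the statement for $s$ up to about $t/2$, and no amount of degree bookkeeping fixes a one-shot Cauchy--Schwarz step.

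The missing idea is to move $x^\alpha$ into the kernel one generator at a time. At each step you use the full kernel property at level $k$, not just the vanishing of $\lf(p^2)$.
\begin{enumerate}
\item Step 2 gives $\lf(f^2)=0$. Together with $f\in\RA_k$ and $M_k(\lf)\succeq0$, your second fact yields $f\in\ker M_k(\lf)$.
\item Suppose $p\in\ker M_k(\lf)$ and $\deg_\A(p)\le k-2$, and let $x^{\alpha_i}$ be any generator. Then $x^{2\alpha_i}p\in\RA_k$, so $\lf\bigl((x^{\alpha_i}p)^2\bigr)=\lf\bigl(p\cdot x^{2\alpha_i}p\bigr)=0$. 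By positivity, $x^{\alpha_i}p\in\ker M_k(\lf)$.
\item Iterate step 2 along a factorization of $x^\alpha$ into at most $s$ generators. This only requires $\deg_\A(f)+s\le k-1$, i.e.\ $\deg_\A(f)\le\lfloor t/2\rfloor-1$. That is a condition on $t_0$ alone.
\end{enumerate}
This is the truncated kernel rule in the paper's proof. Each step uses the nearly balanced split $p\cdot(x^{2\alpha_i}p)$, so the degree slack needed stays constant. In your single-step split, the unbalanced factor grows like $2s$. With this replacement, your extra requirement $\lf(f^4)=0$ in Step 2 becomes unnecessary.
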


\begin{proof}
If $F=\emptyset$, the result is immediate, so assume that
$F\ne\emptyset$. Fix $s\in\NN$ and $\lf\in\K_{t+2s}$, and put
$r=\lfloor t/2\rfloor$. We first record the truncated kernel rule used
below. If $p\in\ker (M_r(\lf))$ and
$\deg_\A(p)+\deg_\A(q)\le r-1$, then
$pq\in\ker (M_r(\lf))$.
To prove this directly, first let $q=x^{\alpha_i}$. Since
$x^{2\alpha_i}p\in\RA_r$, we have
\[
 \lf\bigl((x^{\alpha_i}p)^2\bigr)
 =\lf\bigl(p\,x^{2\alpha_i}p\bigr)=0.
\]
Positivity of $M_r(\lf)$ gives
$x^{\alpha_i}p\in\ker (M_r(\lf))$. Iterating this argument proves the
claim when $q$ is a monomial, and the general case follows by linearity.

Fix $f\in F$. By Lemma~\ref{lem:Real_Nullstellensatz_Toric_variety} there are
$m\ge1$ and fixed
polynomials $q_j,u_i\in\RA$ such that
\[
 f^{2m}+\sum_jq_j^2=\sum_i u_i f_i.
\]
For all sufficiently large $t$, the prolongation constraints imply
$f_i\in\ker (M_r(\lf))$: every product of $f_i$ with a row monomial of
$M_r(\lf)$ belongs to $\mathcal H_t\subseteq\mathcal H_{t+2s}$.
Increasing $t$ once more, the truncated
kernel rule gives $u_i f_i\in\ker (M_r(\lf))$. Hence the left-hand side of the
display lies in that kernel. Evaluating it at the row $1$ and using positivity
shows $\lf(f^{2m})=0$, so $f^m\in\ker (M_r(\lf))$. If $m>1$, put
$k=\lceil m/2\rceil$. Since $f^m$ is in the kernel, testing it against
$f^{2k-m}$ gives $\lf(f^{2k})=0$, and positivity gives
$f^k\in\ker (M_r(\lf))$. Repeating this descent yields
$f\in\ker (M_r(\lf))$. All polynomials used here are fixed, so one common
$t_0$ works for the finite set $F$.
Enlarge this $t_0$, if necessary, so that
$\left\lfloor\frac{t_0}{2}\right\rfloor
 \ge 1+\max_{f\in F}\deg_\A(f).$
Put $R:=\lfloor(t+2s)/2\rfloor=r+s$. For every $f\in F$, the inclusion
$f\in\ker (M_r(\lf))$ gives $\lf(f^2)=0$. Since
$M_R(\lf)\succeq0$, it follows that $f\in\ker (M_R(\lf))$. Moreover, for
every monomial $x^\alpha\in\RA_s$,
\[
 \deg_\A(f)+\deg_\A(x^\alpha)
 \le \left\lfloor\frac{t_0}{2}\right\rfloor-1+s
 \le R-1.
\]
Applying the truncated kernel rule at order $R$ gives
$x^\alpha f\in\ker (M_R(\lf))$, which is the desired conclusion.
\end{proof}

The next theorem shows that, when $\lf$ is generic, the sequence $\{\Nt\}_t$ leads to a (convergent) sequence of ideals approximating the real radical of $I$.
Its eventual equality statement is the toric analogue of the
convergence result \cite[Proposition~4.6]{lasserre2008semidefinite}.

\begin{theorem}
\label{TheoremNtEqRealRadicalI}
The spaces $\mathcal N_t$ satisfy
\[
 \mathcal N_0\subseteq\mathcal N_1\subseteq\cdots
 \subseteq\sqrt[\R]{I}.
\]
Consequently the ideals $\langle\mathcal N_t\rangle$ form an increasing
filtration of $\sqrt[\R]{I}$, and
\[
 \langle\mathcal N_t\rangle=\sqrt[\R]{I}
\]
for all sufficiently large $t$.
\end{theorem}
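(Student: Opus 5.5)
The proof splits into three parts: the chain of inclusions, the inclusion $\mathcal N_t\subseteq\sqrt[\R]{I}$, and eventual equality of ideals.

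\emph{Monotonicity.} Let $\lf\in\K_{t+1}$ be generic. Its restriction $\lf|_{\RA_t}$ lies in $\K_t$: normalization and the prolongation constraints $\mathcal H_t\subseteq\mathcal H_{t+1}$ restrict, and $M_{\lfloor t/2\rfloor}(\lf)$ is a principal submatrix of $M_{\lfloor (t+1)/2\rfloor}(\lf)$, hence positive semidefinite. By Lemma~\ref{LemmaGenericLFinKt}(ii), $\mathcal N_t\subseteq\ker (M_{\lfloor t/2\rfloor}(\lf))$. Take $p\in\ker (M_{\lfloor t/2\rfloor}(\lf))$. Then $\lf(p^2)=0$, and positivity of the larger PSD matrix gives $p\in\ker (M_{\lfloor (t+1)/2\rfloor}(\lf))=\mathcal N_{t+1}$.

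\emph{Inclusion in the real radical.} Let $\lf^\ast:=\frac1r\sum_{\mathfrak m\in\mathcal V_{\RA}(I)}\lf_{\mathfrak m}$. If $\mathcal V_{\RA}(I)$ is empty, I will treat that case separately; then $\K_t$ is empty for large $t$. Each $\lf_{\mathfrak m}$ vanishes on $I$, so the truncation of $\lf^\ast$ lies in $\K_t$. By Lemma~\ref{LemmaGenericLFinKt}, $\mathcal N_t\subseteq\ker (M_{\lfloor t/2\rfloor}(\lf^\ast))$. For $p\in\RA_{\lfloor t/2\rfloor}$, membership in this kernel gives $\lf^\ast(p^2)=0$, i.e.\ $\sum_{\mathfrak m}p(\mathfrak m)^2=0$. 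Hence $p$ vanishes on $\mathcal V_{\RA}(I)$, and Proposition~\ref{prop:toric_nullstellen} gives $p\in\sqrt[\R]{I}$. Since the $\mathcal N_t$ increase inside $\sqrt[\R]{I}$, the ideals $\langle\mathcal N_t\rangle$ form an increasing chain contained in $\sqrt[\R]{I}$.

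\emph{Eventual equality.} Here I use that $\RA$ is Noetherian, being a quotient of $\R[z_1,\dots,z_m]$ via $\psi_\A$. Pick a finite generating set $F$ of $\sqrt[\R]{I}$. Lemma~\ref{lem:eventual-radical-kernel} with $s=0$ gives $t_0$ such that $F\subseteq\ker (M_{\lfloor t/2\rfloor}(\lf))$ for every $t\ge t_0$ and every $\lf\in\K_t$, in particular for a generic one. Then $F\subseteq\mathcal N_t$, so $\sqrt[\R]{I}=\langle F\rangle\subseteq\langle\mathcal N_t\rangle$. The reverse inclusion is the previous step, which gives equality for all $t\ge t_0$.

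\emph{Expected obstacle.} The delicate point is the degree bookkeeping in the monotonicity step: one must check that the restriction really lies in $\K_t$ when $\lfloor t/2\rfloor=\lfloor (t+1)/2\rfloor$ and when they differ. One must also handle the case $\mathcal V_{\RA}(I)=\emptyset$, where $\sqrt[\R]{I}=\RA$. There I would apply Lemma~\ref{lem:eventual-radical-kernel} to $F=\{1\}$: it forces $1\in\ker (M_{\lfloor t/2\rfloor}(\lf))$, contradicting $\lf(1)=1$. So $\K_t$ is empty for large $t$, and the statement should be read with the convention that $\mathcal N_t=\RA_{\lfloor t/2\rfloor}$ in that case. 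For the inclusion step itself, the argument via the explicit interior element $\lf^\ast$ avoids needing any flatness.
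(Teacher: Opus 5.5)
Your proof is correct and follows the paper's own argument. Monotonicity comes from restricting a generic element of $\K_{t+1}$ to $\RA_t$. Containment in $\sqrt[\R]{I}$ comes from the truncated evaluations together with Lemma~\ref{LemmaGenericLFinKt} and the toric real Nullstellensatz; your average $\lf^\ast$ is equivalent to using each $\lf_{\mathfrak m}|_{\RA_t}$ separately. Eventual equality comes from Lemma~\ref{lem:eventual-radical-kernel} with $s=0$ applied to a finite generating set. Your positivity argument for $\ker(M_{\lfloor t/2\rfloor}(\lf))\subseteq\mathcal N_{t+1}$ and your handling of $\mathcal V_{\RA}(I)=\emptyset$ spell out steps the paper leaves implicit.
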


\begin{proof}
Let $\lf\in\K_{t+1}$ be generic and restrict it to $\RA_t$. The maximum-kernel
property at truncation $t$ gives
\[
 \mathcal N_t\subseteq
 \ker (M_{\lfloor t/2\rfloor}(\lf|_{\RA_t})) \subseteq \mathcal N_{t+1}.
\]

For every $\mathfrak m\in\mathcal V_{\RA}(I)$, the truncated evaluation
$\lf_{\mathfrak m}|_{\RA_t}$ belongs to $\K_t$. Lemma
\ref{LemmaGenericLFinKt} therefore gives
\[
 \mathcal N_t\subseteq\bigcap_{\mathfrak m\in\mathcal V_{\RA}(I)}
 \ker(\lf_{\mathfrak m})=\sqrt[\R]{I}.
\]
Finally choose a finite generating set $F$ of $\sqrt[\R]{I}$ and apply
Lemma~\ref{lem:eventual-radical-kernel}. For large $t$, every element of $F$
belongs to the kernel of every feasible moment matrix, hence in particular to
$\mathcal N_t$. Thus $\sqrt[\R]{I}\subseteq\langle\mathcal N_t\rangle$, and
the reverse inclusion was just proved.
\end{proof}

In order to make the previous theorem into an algorithm, our next
theorem introduces a sufficient condition to test when the previous
sequence of ideals converged.
We will use this condition as a stopping criterion to detect when we
had computed the real radical. This sufficient condition is based on the flat
extension theorem.
The two rank conditions below are the toric counterparts of those
in \cite[Proposition~4.4]{lasserre2008semidefinite}.

\begin{theorem}
\label{TheoremNtEqualsRealRadicalWithRankCondition}
Let $D=\max_i\deg_\A(f_i)$ and $d=\lceil D/2\rceil$. Let
$\lf\in\K_t$ and suppose that either
\begin{enumerate}
\item[(i)] $\rk (M_s(\lf))=\rk (M_{s-1}(\lf))$ for some
$\max\{D,\rho_\A\}\le s\le\lfloor t/2\rfloor$, or
\item[(ii)] $\rk (M_s(\lf))=\rk(M_{s-d}(\lf))$ for some
$\max\{d,\rho_\A\}\le s\le\lfloor t/2\rfloor$.
\end{enumerate}
Then
\[
 \sqrt[\R]{I}\subseteq J:=\langle\ker (M_s(\lf)) \rangle.
\]
If $\lf$ is generic in $\K_t$, then $J=\sqrt[\R]{I}$, and every monomial set
indexing a column basis of $M_{s-1}(\lf)$ is a basis of
$\RA/\sqrt[\R]{I}$.
\end{theorem}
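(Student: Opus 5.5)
Under either rank condition, the plan is to apply the sparse flat extension theorem (Theorem~\ref{TheoremSparseFlatExtension}) to the truncation of $\lf$ to $\RA_{2s}$. In case (i) this is immediate, since $s\ge\rho_\A$ and $s\le\lfloor t/2\rfloor$ give $M_s(\lf)\succeq0$ with $\rk M_s=\rk M_{s-1}$. In case (ii), $\rk M_{s-d}\le\rk M_{s-1}\le\rk M_s$ together with equality of the outer terms forces $\rk M_s=\rk M_{s-1}$, so the same theorem applies. Here $d\ge1$; if $D=0$ the statement is degenerate and handled separately. We obtain $\widetilde\lf\in\RA^*$ extending $\lf|_{\RA_{2s}}$ with $M(\widetilde\lf)\succeq0$ and $\rk M(\widetilde\lf)=r$. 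By Corollary~\ref{CorollaryDirectFlatKernel}, $\ker M(\widetilde\lf)=\langle\ker M_s(\lf)\rangle=J$, and by Corollary~\ref{CorollaryDirectFlatAtomic}, $J$ is real radical.

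The key step is to show $I\subseteq J$. It then follows that $\sqrt[\R]{I}\subseteq\sqrt[\R]{J}=J$. Since $J$ is an ideal, it suffices to show each $f_i\in\ker M_s(\lf)$. For (i) I will use $\deg_\A f_i\le D\le s$ and $2s\le t$. For every monomial $x^\beta\in\RA_s$, the product $x^\beta f_i$ has degree at most $2s\le t$ only in the favorable case $\deg x^\beta\le t-\deg f_i$, which holds since $s+D\le 2s\le t$; then $x^\beta f_i\in\Ht$. Hence $\lf(x^\beta f_i)=0$, i.e.\ $M_s(\lf)\vect_s(f_i)=0$. For (ii) this direct argument fails when $D>s$, and I expect it to be the main obstacle. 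Instead I will show $f_i\in\ker M(\widetilde\lf)$ via the column basis. By the rank condition, a column basis $\B$ can be chosen inside $\RA_{s-d}$. For $b,b'\in\B$ we have $\deg(b b' f_i)\le 2(s-d)+D\le 2s\le t$, so $\widetilde\lf(bf_i\,b')=\lf(b b' f_i)=0$ by the prolongation constraints. Since $\widetilde\lf$ is positive with finite rank and the classes of $\B$ span $\RA/\ker M(\widetilde\lf)$ (Corollary~\ref{CorollaryDirectFlatKernel}), I want to deduce $f_i\in\ker M(\widetilde\lf)$. Concretely, $\widetilde\lf(f_i g)=0$ for all $g$ follows if $\widetilde\lf(f_i b b')=0$ for all $b,b'\in\B$. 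This in turn follows because every $g$ is congruent modulo the kernel to a combination of $\B$, and the product structure reduces $\widetilde\lf(f_i g)$ to $\langle \Phi(f_i)e,\Phi(g)e\rangle$. So I need to show $\Phi(f_i)e=0$, i.e.\ $\langle\Phi(f_i)b,b'\rangle=0$ for $b,b'\in\B$, which is exactly $\widetilde\lf(f_ibb')=0$. Care is needed so that these values agree with $\lf$: they do because $\widetilde\lf$ extends $\lf$ on $\RA_{2s}$. The same computation also handles (i).

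For the genericity clause, Lemma~\ref{LemmaGenericLFinKt} applied to the truncated evaluations $\lf_{\mathfrak m}|_{\RA_t}\in\K_t$ gives $\ker M_s(\lf)\subseteq\ker\lf_{\mathfrak m}$ for every $\mathfrak m\in\mathcal V_{\RA}(I)$. Strictly, that lemma concerns $M_{\lfloor t/2\rfloor}$, so I pass to the principal submatrix $M_s$ using $\ker M_{\lfloor t/2\rfloor}\cap\RA_s\supseteq$ kernel vectors of $M_s$ extended by positivity. Since $\lf(p^2)=0$ and $M_{\lfloor t/2\rfloor}\succeq0$, $p$ lies in the bigger kernel. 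Hence $J\subseteq\bigcap_{\mathfrak m}\mathfrak m=\sqrt[\R]{I}$ by Proposition~\ref{prop:toric_nullstellen}, which gives equality. The basis statement follows from Corollary~\ref{CorollaryDirectFlatKernel}: since $\rk M_{s-1}=\rk M_s$, a column basis of $M_{s-1}(\lf)$ is a column basis of $M_s(\lf)$ lying in $\RA_{s-1}$, and is therefore a basis of $\RA/J=\RA/\sqrt[\R]{I}$.
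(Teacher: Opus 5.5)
Your proposal is correct. For case (i) and for the genericity and basis clauses it agrees with the paper. You are actually more careful than the paper when passing from $\ker M_{\lfloor t/2\rfloor}(\lf)$ to $\ker M_s(\lf)$ via positivity.

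For case (ii) you take a different route.
\begin{itemize}
\item The paper uses the atomic decomposition $\widetilde\lf=\sum_j\lambda_j\lf_j$ from Corollary~\ref{CorollaryDirectFlatAtomic} and builds interpolation polynomials $p_j\in\lin_\R(\B)$. Applying the prolongation constraints to $p_j^2f_i$ gives $\lambda_j\lf_j(f_i)=0$. This shows $\mathcal V_{\RA}(J)\subseteq\mathcal V_{\RA}(I)$, and the toric real Nullstellensatz then yields $\sqrt[\R]{I}\subseteq J$.
\item You stay in the monomial basis. Using the Gram identity $\widetilde\lf(fg)=\langle\Phi(f)e,\Phi(g)e\rangle_\lf$ from the flat extension proof, you show $\Phi(f_i)=0$, i.e.\ $f_i\in\ker M(\widetilde\lf)=J$. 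So you get $I\subseteq J$ directly, without the atoms or the Nullstellensatz; only real radicality of $J$ is needed at the end.
\end{itemize}
Both arguments examine the same matrix $\bigl(\lf(f_i\,b b')\bigr)$ in different bases. In the interpolation basis it is $\operatorname{diag}(\lambda_j\lf_j(f_i))$, so the paper only needs its diagonal. You need every entry, which costs nothing because every $bb'f_i$ lies in $\Ht\cap\RA_{2s}$. The paper's route has the advantage of identifying the atoms of $\widetilde\lf$ as points of $\mathcal V_{\RA}(I)$, which is what the algorithm ultimately uses. Yours is more elementary and purely linear-algebraic.

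Two small remarks.

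First, you can shrink your test set. Since $\B$ spans $\RA/J$ and $J=\ker M(\widetilde\lf)$, write $g\equiv\sum_b c_b b$ to get $\widetilde\lf(f_ig)=\sum_b c_b\,\lf(f_ib)$. Because $\deg_\A(f_ib)\le D+s-d\le 2s$, testing against single elements $b\in\B$ already suffices. This matters for your aside that ``the same computation also handles (i)''. With $\B\subseteq\RA_{s-1}$, the products $bb'f_i$ can have degree $2(s-1)+D$, which may exceed $2s$ and $t$, so the $bb'$ version does not cover (i). The single-$b$ version does, since there the degree is at most $s-1+D\le 2s-1$.

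Second, your $D=0$ caveat is real, but that case cannot be ``handled''; it has to be excluded. If $d=0$, condition (ii) holds vacuously. For $I=\langle 0\rangle$ the final basis claim then fails, because $\RA/\sqrt[\R]{I}=\RA$ is infinite-dimensional. The paper's proof tacitly assumes $d\ge1$ when it deduces $\rk M_s(\lf)=\rk M_{s-1}(\lf)$ from (ii).
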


\begin{proof}
Since the ranks of the moment matrices are nondecreasing, either condition
implies
$\rk (M_s(\lf))=\rk (M_{s-1}(\lf))$.
Thus, Theorem~\ref{TheoremSparseFlatExtension} and
Corollaries~\ref{CorollaryDirectFlatKernel} and
\ref{CorollaryDirectFlatAtomic} show that $J$ is real radical and that the
flat extension of $\lf$ has the form
\[
 \widetilde\lf=\sum_{j=1}^r\lambda_j\lf_j,
 \qquad \lambda_j>0,
 \qquad \mathcal V_{\RA}(J)=\{\lf_1,\dots,\lf_r\}.
\]

Under (i), for every $q\in\RA_s$ we have
$\deg_\A(qf_i)\le s+D\le2s\le t$, so the prolongation constraints give
$f_i\in\ker (M_s(\lf))$. Hence $I\subseteq J$ and
$\sqrt[\R]{I}\subseteq J$.

Under (ii), choose a monomial column basis
$\B\subseteq\RA_{s-d}$. By Corollary~\ref{CorollaryDirectFlatKernel},
$\B$ is a basis of $\RA/J$, so there are interpolation polynomials
$p_1,\dots,p_r\in\lin_\R(\B)$ such that $\lf_k(p_j)=\delta_{jk}$. Since
\[
 \deg_\A(p_j^2f_i)\le2(s-d)+D\le2s\le t,
\]
the prolongation constraints give
$0=\lf(p_j^2f_i)=\lambda_j\lf_j(f_i)$. Hence
$\mathcal V_{\RA}(J)\subseteq\mathcal V_{\RA}(I)$, and therefore
$\sqrt[\R]{I}\subseteq J$.

If $\lf$ is generic, Lemma~\ref{LemmaGenericLFinKt}, applied to the truncated
evaluations at the points of $\mathcal V_{\RA}(I)$, gives
$\ker (M_s(\lf))\subseteq\sqrt[\R]{I}$. Hence
$J=\sqrt[\R]{I}$, and the last statement follows from
Corollary~\ref{CorollaryDirectFlatKernel}.
\end{proof}

\begin{remark}
Theorem \ref{TheoremNtEqualsRealRadicalWithRankCondition} gives a way to compute a sparse border basis connected-to-one of $\RA/\rIRA$ (Definition~\ref{def:connected-to-one}).
Indeed, once the rank condition holds, any monomial set indexing a column basis
of $M_{s-1}(\lf)$ yields a basis of $\RA/\rIRA$.
By selecting such columns greedily in increasing $\deg_{\A}$ order, one can
construct a border basis connected-to-one. For more details, see the Appendix.
\end{remark}

Putting all together, we obtain
Algorithm~\ref{AlgorithmMomentApproachForLx} which computes the real
solutions of a sparse system, when they are finite. Our algorithm uses
the border basis to approximate the solutions, as detailed in the
Appendix.
This algorithm is an
extension of the one proposed in
\cite[Section~4.4]{lasserre2008semidefinite} to the toric setting, and
specialise to it when $\A$ is the canonical basis of $\Z^n$.

\begin{algorithm}
\caption{Computing roots on toric varieties}
\label{AlgorithmMomentApproachForLx}
\textbf{Input:} A finite $\A\subset\Z^n$ and
$f_1,\dots,f_\mu\in\RA$ with finite $\mathcal V_{\RA}(f_1,\dots,f_\mu)$.
\begin{itemize}
\item[(1)] Compute a generating set of $T_\A$, its bound $\rho_\A$, and set
$D=\max_i\deg_\A(f_i)$, $d=\lceil D/2\rceil$, and
$t=2\max\{D,\rho_\A\}$.
\item[(2)] If $\K_t=\emptyset$, report that there is no solution. Otherwise
compute a generic element $\lf\in\K_t$.
\item[(3)] Test conditions (i) and (ii) of
Theorem~\ref{TheoremNtEqualsRealRadicalWithRankCondition}.
\item[(4)] If neither holds, replace $t$ by $t+1$ and return to step (2).
\item[(5)] Otherwise set $J=\langle\ker (M_s(\lf))\rangle$, compute a monomial
basis of $\RA/J$, and recover all points by the sparse eigenvalue method in the
Appendix.
\end{itemize}
\textbf{Output:} Generators of $\sqrt[\R]{I}$, a basis of
$\RA/\sqrt[\R]{I}$, and $\mathcal V_{\RA}(I)$.
\end{algorithm}

In order to prove the correctness of
Algorithm~\ref{AlgorithmMomentApproachForLx}, the next lemma shows
that the sufficient condition of
Theorem~\ref{TheoremNtEqualsRealRadicalWithRankCondition} is
eventually satisfied.
This lemma adapts the existence and eventual-rank statements of
\cite[Proposition~4.6]{lasserre2008semidefinite} to the semigroup algebra.

\begin{lemma}
\label{TheoremCheckingExistenceOfRoots}
Let $I\subseteq\RA$ be an ideal.
Put $D=\max_i\deg_\A(f_i)$ for the fixed generators
$I=\langle f_1,\dots,f_\mu\rangle$.
\begin{enumerate}
\item[(i)] $\mathcal V_{\RA}(I)=\emptyset$ if and only if $\K_t=\emptyset$
for all sufficiently large $t$.
\item[(ii)] If $1\le|\mathcal V_{\RA}(I)|<\infty$, then there are fixed
integers $s_0$ and $t_0$ such that, for every $t\ge t_0$,
every $s \geq s_0$ and every
$\lf\in\K_{t+2s}$,
\[
 \rk (M_{s}(\lf))=\rk (M_{s-1}(\lf)),
 \qquad s \ge\max\{D,\rho_\A\}.
\]
\end{enumerate}
\end{lemma}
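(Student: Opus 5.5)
Both parts rest on Lemma~\ref{lem:eventual-radical-kernel}, applied to a suitable finite subset of $\sqrt[\R]{I}$, combined with positivity of the truncated moment matrices.

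\emph{Part (i).} If $\mathcal V_{\RA}(I)\neq\emptyset$, then for any $\mathfrak m\in\mathcal V_{\RA}(I)$ the truncation $\lf_{\mathfrak m}|_{\RA_t}$ lies in $\K_t$ for every $t$, so no $\K_t$ is empty. Conversely, if $\mathcal V_{\RA}(I)=\emptyset$, Proposition~\ref{prop:toric_nullstellen} gives $1\in\sqrt[\R]{I}$. Apply Lemma~\ref{lem:eventual-radical-kernel} with $F=\{1\}$ and $s=0$: for $t\ge t_0$ and every $\lf\in\K_t$ we get $1\in\ker(M_{\lfloor t/2\rfloor}(\lf))$. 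Then $\lf(1)=\vect(1)^\top M\vect(1)=0$, which contradicts $\lf(1)=1$. Hence $\K_t=\emptyset$ for all large $t$.

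\emph{Part (ii).} Let $r=|\mathcal V_{\RA}(I)|$. By Lemma~\ref{lem:Real_Nullstellensatz_Toric_variety}, $\dim\RA/\sqrt[\R]{I}=r$. Fix a monomial set $\B$ whose classes form a basis of this quotient, and set $s_1:=\max_{b\in\B}\deg_\A(b)+1$. Put
\[
s_0:=\max\{D,\rho_\A,s_1\},
\]
so the side condition $s\ge\max\{D,\rho_\A\}$ holds automatically for $s\ge s_0$. For each generator $x^{\alpha_i}$ ($\alpha_i\in\A$) and each $b\in\B$, write $x^{\alpha_i}b\equiv\sum_{b'}c_{b'}b'$ modulo $\sqrt[\R]{I}$. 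Also write $1$ in the same way, or include $1$ in $\B$. This produces a finite set $F\subseteq\sqrt[\R]{I}$ of border relations $x^{\alpha_i}b-\sum c_{b'}b'$. Let $t_0$ be the constant that Lemma~\ref{lem:eventual-radical-kernel} provides for this $F$.

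Now take $t\ge t_0$, $s\ge s_0$ and $\lf\in\K_{t+2s}$. The lemma, applied with shift $s$, places every $x^\gamma f$ with $f\in F$ and $x^\gamma\in\RA_s$ into $\ker(M_{\lfloor t/2\rfloor+s}(\lf))$. Since positive semidefiniteness makes kernel vectors of a larger principal block restrict to kernel vectors of a smaller one, each such $x^\gamma f$ lying in $\RA_s$ is in fact in $\ker M_s(\lf)$. We then show by induction on $\deg_\A$ that every monomial $x^\beta\in\RA_s$ is congruent, modulo $\ker M_s(\lf)$, to an element of $\lin_\R(\B)$. For the step, write $x^\beta=x^{\alpha_i}x^{\beta'}$ with $\deg_\A(x^{\beta'})\le s-1$. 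Reduce $x^{\beta'}$ to $\sum c_bb$ modulo the kernel. Multiplying this difference by $x^{\alpha_i}$ keeps it in $\ker M_s(\lf)$ by the truncated kernel rule, because the degrees involved stay at most $s-1$ in the inner part. Finally replace each $x^{\alpha_i}b$ using the border relations. It follows that the columns of $M_s(\lf)$ are spanned by those indexed by $\B\subseteq\RA_{s-1}$, so
\[
\rk(M_s(\lf))\le\rk(M_{s-1}(\lf)).
\]
The reverse inequality holds trivially, which gives equality.

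The main obstacle is the degree bookkeeping in this reduction. The truncated kernel rule requires $\deg_\A(p)+\deg_\A(q)\le R-1$, while the product $x^{\alpha_i}\cdot(\text{difference})$ has degree up to $s$. To handle this, I would work in the larger order $R=\lfloor t/2\rfloor+s$, where all the multiplications are safe once $t_0$ is enlarged so that $\lfloor t_0/2\rfloor$ exceeds the degrees appearing in $F$ and in $\B$. Only at the end would I restrict to the $\RA_s$ block, using the PSD fact that a vector in $\ker M_R$ supported in $\RA_s$ lies in $\ker M_s$.
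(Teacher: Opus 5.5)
Your proof is correct and follows the paper's strategy. You take a monomial basis $\B\subseteq\RA_{s_0-1}$ of $\RA/\sqrt[\R]{I}$, feed a finite set of reduction relations in $\sqrt[\R]{I}$ into Lemma~\ref{lem:eventual-radical-kernel}, and conclude that every column of $M_s(\lf)$ is spanned by columns indexed by $\RA_{s-1}$. For (i), the paper uses the certificate $1+\sum_j q_j^2=\sum_i u_if_i$ directly, which is exactly what your appeal to the lemma with $F=\{1\}$ unpacks to.

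The only difference is bookkeeping. The paper takes a relation $x^\alpha-r_\alpha$ for every monomial $x^\alpha\in\RA_{s_0}$ and applies the lemma once, with parameters $(t+2s_0,\,s-s_0)$, to the factorization $x^\gamma=x^\delta x^\alpha$. This avoids both your induction over border relations and the truncated kernel rule at order $R=\lfloor t/2\rfloor+s$. That degree condition is harmless anyway, since it only needs $\lfloor t/2\rfloor\ge1$ and $\max_{b\in\B}\deg_\A(b)+1\le s$.
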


\begin{proof}
If the variety is empty, Lemma~\ref{lem:Real_Nullstellensatz_Toric_variety} gives a certificate
$1+\sum_jq_j^2=\sum_i u_i f_i$. For large $t$ all terms on the right are
prolongation constraints. Applying any $\lf\in\K_t$ gives
$1+\sum_j\lf(q_j^2)=0$, contradicting positivity. Conversely, a point of the
variety gives a truncated linear form in every $\K_t$.

Assume now that the variety is finite and nonempty. Choose a
monomial basis $\B$ of $\RA/\sqrt[\R]{I}$, put
$c:=\max_{x^\beta\in\B}\deg_\A(x^\beta)$,
and choose $s_0\ge\max\{D,\rho_\A,c+1\}$. For each monomial
$x^\alpha\in\RA_{s_0}$, choose $r_\alpha\in\lin_\R(\B)$ such that
$k_\alpha:=x^\alpha-r_\alpha\in\sqrt[\R]{I}$.
Let $t_0$ be given by Lemma
\ref{lem:eventual-radical-kernel} for the set $F := \{k_\alpha : x^\alpha\in\RA_{s_0}\}$.
Fix $t\ge t_0$, $s\ge s_0$, and $\lf\in\K_{t+2s}$. Apply Lemma
\ref{lem:eventual-radical-kernel}
with $t$ replaced by $t+2s_0$ and $s$ replaced by $s-s_0$.
Since $(t+2s_0)+2(s-s_0)=t+2s$, this gives
\[
 x^\delta k_\alpha\in
\ker\bigl(M_{\lfloor(t+2s_0)/2\rfloor+s-s_0}(\lf)\bigr)
 =\ker\bigl(M_{\lfloor t/2\rfloor+s}(\lf)\bigr)
\]
for every monomial
$x^\delta\in\RA_{s-s_0}$ and every $k_\alpha\in F$.
By splitting a factorization, every monomial $x^\gamma\in\RA_s$ can be
written as $x^\gamma=x^\delta x^\alpha$ with
$x^\delta\in\RA_{s-s_0}$ and
$x^\alpha\in\RA_{s_0}$. Hence its column in
$M_s(\lf)$ equals the column of $x^\delta r_\alpha$, because
$x^\delta k_\alpha\in\ker\bigl(M_s(\lf)\bigr)$. The latter belongs to
$\RA_{s-s_0+c}\subseteq\RA_{s-1}$. Thus every
column of $M_s(\lf)$ is in the
column space of $M_{s-1}(\lf)$, and the reverse inclusion is immediate.
\end{proof}

\begin{theorem}
\label{TheoremAlgoLxTerminates}
Algorithm~\ref{AlgorithmMomentApproachForLx} terminates after finitely many
steps and is correct.
\end{theorem}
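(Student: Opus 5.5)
We prove termination and correctness separately, reducing each to results already established.

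\emph{Termination.} Consider first the case $\mathcal V_{\RA}(I)=\emptyset$. By Lemma~\ref{TheoremCheckingExistenceOfRoots}(i), $\K_t=\emptyset$ for all sufficiently large $t$. Since step (4) increases $t$ by one each time, step (2) eventually reports that there is no solution.

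Now let $1\le|\mathcal V_{\RA}(I)|<\infty$. Take the integers $s_0,t_0$ from Lemma~\ref{TheoremCheckingExistenceOfRoots}(ii). Consider any $t$ with $t\ge t_0+2s_0$, and set $s:=s_0$. Then $t=(t-2s_0)+2s_0$ with $t-2s_0\ge t_0$. Hence every $\lf\in\K_t$, in particular the generic one computed in step (2), satisfies $\rk(M_{s}(\lf))=\rk(M_{s-1}(\lf))$ with $s\ge\max\{D,\rho_\A\}$.

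It remains to check that $s\le\lfloor t/2\rfloor$, so that condition (i) of Theorem~\ref{TheoremNtEqualsRealRadicalWithRankCondition} is really among those tested in step (3). This holds because $t\ge 2s_0$. So the loop stops at the latest once $t$ reaches $\max\{2\max\{D,\rho_\A\},\,t_0+2s_0\}$. Since each pass through steps (2)–(4) is a finite computation (an SDP plus rank tests), the algorithm terminates.

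\emph{Correctness.} Suppose the algorithm stops in step (2). This happens only when $\K_t=\emptyset$. Since evaluations at real toric points lie in every $\K_t$, this forces $\mathcal V_{\RA}(I)=\emptyset$, so the report is correct.

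Suppose instead it stops in step (5) with a generic $\lf$ satisfying (i) or (ii). Theorem~\ref{TheoremNtEqualsRealRadicalWithRankCondition} then gives $J=\langle\ker(M_s(\lf))\rangle=\sqrt[\R]{I}$, and it says that a monomial column basis of $M_{s-1}(\lf)$ is a basis of $\RA/\sqrt[\R]{I}$. By Corollary~\ref{CorollaryDirectFlatAtomic}, the flat extension is a positive combination of the evaluations at the points of $\mathcal V_{\RA}(J)=\mathcal V_{\RA}(I)$.

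To recover the points, we use Lemma~\ref{lem:Real_Nullstellensatz_Toric_variety}: $\RA/\sqrt[\R]{I}\cong\R^r$ with $r=|\mathcal V_{\RA}(I)|$. The multiplication operators by $x^{\alpha_i}$ on this quotient are exactly the maps $X_i$ from the proof of Theorem~\ref{TheoremSparseFlatExtension}. They are commuting and self-adjoint, and they are simultaneously diagonalisable with joint eigenvalues $(\lf_j(x^{\alpha_i}))_i$. These joint eigenvalues determine each point through Lemma~\ref{lem:Max_ideals_in_RA}. So the sparse eigenvalue method returns precisely $\mathcal V_{\RA}(I)$.

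\emph{Main obstacle.} The delicate point is the index bookkeeping in the termination step. The algorithm fixes the order $s$ inside $\K_t$, whereas Lemma~\ref{TheoremCheckingExistenceOfRoots}(ii) is phrased for $\K_{t+2s}$. One has to check that a single admissible $s$ satisfies $s_0\le s\le\lfloor t/2\rfloor$ and $s\ge\max\{D,\rho_\A\}$ simultaneously for all large $t$; the choice $s=s_0$ above does this. A second point needs care: genericity is used only for the equality $J=\sqrt[\R]{I}$, not for reaching the rank condition, since Lemma~\ref{TheoremCheckingExistenceOfRoots}(ii) holds for every element of $\K_t$.
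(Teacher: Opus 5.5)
Your proof is correct and follows the same route as the paper. Lemma~\ref{TheoremCheckingExistenceOfRoots} gives termination (either $\K_t=\emptyset$ or condition (i) holds at high enough order), genericity together with Theorem~\ref{TheoremNtEqualsRealRadicalWithRankCondition} gives $J=\sqrt[\R]{I}$, and the eigenvalue method recovers the points. Your version also spells out three things the paper leaves implicit: the index bookkeeping ($s=s_0$ once $t\ge t_0+2s_0$, so that $\max\{D,\rho_\A\}\le s_0\le\lfloor t/2\rfloor$), the correctness of the ``no solution'' exit, and the recovery step via joint diagonalisation of the commuting self-adjoint $X_i$, which is equivalent to the appendix's single-operator eigenvalue argument.
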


\begin{proof}
Lemma~\ref{TheoremCheckingExistenceOfRoots} shows that, at a sufficiently high
order, either $\K_t$ is empty or condition (i) in
Theorem~\ref{TheoremNtEqualsRealRadicalWithRankCondition} holds. Thus the
loop terminates. At termination the chosen functional is generic, so that
the same theorem identifies $J$ with $\sqrt[\R]{I}$. The quotient basis and
the sparse eigenvalue theorem then recover precisely the points of
$\mathcal V_{\RA}(I)$.
\end{proof}

%% file: Sections/ComputingOneRoot.tex
\section{Computing One Root}
\label{sec:One_root}
Algorithm~\ref{AlgorithmMomentApproachForLx} relies on computing a generic point in $\K_t$. In this section we explore what happens when we pick extreme points of this spectrahedron. We shall show that this procedure allow us to compute a unique real point of the sparse system.

As we saw in Theorem \ref{thm:K_is_polytope}, the spectrahedron $\K$ is, in fact, a polytope, and its vertices correspond to the solutions of the original system $h_1=\dots=h_m=0$.
Analogously, we shall see that, for $t$ sufficiently large, the truncated feasible set $\K_t$ can be projected to a polytope. 
To state this precisely, define the truncation (projection) map
\begin{equation*}
  \pi_s : \RA^* \to \RA_s^*, \qquad \lf \mapsto \lf|_{\RA_s}.
\end{equation*}

\begin{theorem}[Projection to polytope]
\label{thm:KtIsPolytope}
Consider $I := \langle  f_1,\dots, f_\mu \rangle \subset \RA$ such that $1\le|\mathcal V_{\RA}(I)|<\infty$. Let $\K$ be as in \eqref{eq:K_I} and 
$\K_t$ as in \eqref{DefinitionOfKt}.
There are integers $s_0$ and $t_0$
such that,
for every $s\ge s_0$ and every $t\ge t_0+2s$,
one has
\[
 \pi_{2s}(\K_t)=\pi_{2s}(\K).
\]
This common projection is a simplex of dimension $\dim\pi_{2s}(\K_t)=|\mathcal V_{\RA}(I)|-1$ whose vertices are the truncated evaluation functionals
\[
\{\Lambda_{\mathfrak m}|_{\RA_{2s}} : \mathfrak m \in \mathcal V_{\RA}(I) \}.
\]
\end{theorem}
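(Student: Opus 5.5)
The plan is to prove the two inclusions of $\pi_{2s}(\K_t)=\pi_{2s}(\K)$ separately, and then read off the simplex structure from Theorem~\ref{thm:K_is_polytope}.

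The inclusion $\pi_{2s}(\K)\subseteq\pi_{2s}(\K_t)$ holds for every $t\ge 2s$. By Theorem~\ref{thm:K_is_polytope}, every element of $\K$ is a convex combination of evaluations $\lf_{\mathfrak m}$ with $\mathfrak m\in\mathcal V_{\RA}(I)$. Each truncation $\lf_{\mathfrak m}|_{\RA_t}$ lies in $\K_t$: it is normalized, it is positive, and it vanishes on $\Ht\subseteq I$. Since $\K_t$ is convex, the truncation of any element of $\K$ lies in $\K_t$. Composing the truncations $\RA_t^*\to\RA_{2s}^*$ then gives the inclusion.

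For the reverse inclusion, let $s_0,t_0$ be the integers from Lemma~\ref{TheoremCheckingExistenceOfRoots}(ii), enlarged if necessary. Take $s\ge s_0$, $t\ge t_0+2s$, and $\lf\in\K_t$. Restricting $\lf$ to $\RA_{t'}$ with $t'=t-2s\ge t_0$ would require the lemma at $t'+2s=t$, which is exactly our $t$. Hence $\rk M_s(\lf)=\rk M_{s-1}(\lf)$ with $s\ge\max\{D,\rho_\A\}$.

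Apply Theorem~\ref{TheoremSparseFlatExtension} to $\lf|_{\RA_{2s}}$. Its moment matrix $M_s$ is a principal block of $M_{\lfloor t/2\rfloor}(\lf)\succeq0$, so it is positive semidefinite. The theorem gives a flat extension $\widetilde\lf\in\RA^*$ with $\widetilde\lf|_{\RA_{2s}}=\lf|_{\RA_{2s}}$. Corollary~\ref{CorollaryDirectFlatAtomic} then writes $\widetilde\lf=\sum_k\lambda_k\lf_{\mathfrak m_k}$ with $\lambda_k>0$. Since $\widetilde\lf(1)=\lf(1)=1$, this is a convex combination.

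It remains to show that each $\mathfrak m_k$ lies in $\mathcal V_{\RA}(I)$, so that $\widetilde\lf\in\K$. This is the argument of Theorem~\ref{TheoremNtEqualsRealRadicalWithRankCondition}(i). For $q\in\RA_s$ we have $\deg_\A(qf_i)\le s+D\le 2s\le t$, so the prolongation constraints put $f_i\in\ker M_s(\lf)$. By Corollary~\ref{CorollaryDirectFlatKernel}, $\ker M_s(\lf)\subseteq\ker M(\widetilde\lf)=\bigcap_k\mathfrak m_k$. Hence $I\subseteq\mathfrak m_k$ for every $k$. This proves $\pi_{2s}(\K_t)\subseteq\pi_{2s}(\K)$.

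The main obstacle is the bookkeeping of indices. We need $s\le\lfloor t/2\rfloor$, we need $s\ge D$, and Lemma~\ref{TheoremCheckingExistenceOfRoots} must be invoked with the matching shifted parameters. I will fix $s_0\ge\max\{D,\rho_\A\}$, take $t_0$ from that lemma, and check that $t\ge t_0+2s$ suffices.

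Finally, for the simplex claim, write $\mathcal V_{\RA}(I)=\{\mathfrak m_1,\dots,\mathfrak m_r\}$. Linear maps send convex hulls to convex hulls, so Theorem~\ref{thm:K_is_polytope} gives $\pi_{2s}(\K)=\mathrm{conv}\{\lf_{\mathfrak m}|_{\RA_{2s}}\}$. It remains to show that these $r$ points are affinely independent for large $s$. I will enlarge $s_0$ so that $\RA_{s_0}$ contains interpolation polynomials $p_j$ with $\lf_{\mathfrak m_k}(p_j)=\delta_{jk}$; this is possible by the Chinese remainder theorem, as in Lemma~\ref{LemmaEvaluationsAndKernels}. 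Then the $r$ truncated evaluations are linearly independent on $\RA_{2s}$, hence affinely independent. Their convex hull is therefore a simplex of dimension $r-1$ whose vertices are exactly the truncated evaluations.
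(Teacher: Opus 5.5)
Your proposal is correct and follows the paper's proof essentially step for step. Lemma~\ref{TheoremCheckingExistenceOfRoots}(ii), applied with $t-2s\ge t_0$, gives flatness at order $s$. Theorem~\ref{TheoremSparseFlatExtension}, together with the argument of Theorem~\ref{TheoremNtEqualsRealRadicalWithRankCondition}(i) and Corollary~\ref{CorollaryDirectFlatKernel}, places the flat extension in $\K$, and Theorem~\ref{thm:K_is_polytope} plus affine independence of the truncated evaluations yields the simplex. The only cosmetic difference is how you get affine independence: you use CRT interpolation polynomials after enlarging $s_0$, whereas the paper uses that the monomial basis of $\RA/\rIRA$ from the proof of Lemma~\ref{TheoremCheckingExistenceOfRoots} already lies in $\RA_{s_0-1}$; both separate the points by polynomials of bounded degree.
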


\begin{proof}
Take $s_0$ and $t_0$ as in
Lemma~\ref{TheoremCheckingExistenceOfRoots}(ii). Fix $s\ge s_0$,
$t\ge t_0+2s$, and $\lf\in\K_t$. Since
 $t-2s\ge t_0$
 and $\K_t=\K_{(t-2s)+2s}$,
Lemma~\ref{TheoremCheckingExistenceOfRoots}(ii) implies
\[
 \rk\bigl(M_s(\lf)\bigr)=\rk\bigl(M_{s-1}(\lf)\bigr).
\]
As $s\ge\max(\max_i \deg(f_i),\rho_\A)$, Theorem~\ref{TheoremSparseFlatExtension}
extends $\lf|_{\RA_{2s}}$ to a positive flat functional
$\widehat\lf\in\RA^*$.
Theorem~\ref{TheoremNtEqualsRealRadicalWithRankCondition}(i) and
Corollary~\ref{CorollaryDirectFlatKernel} give directly
$I\subseteq\sqrt[\R]{I}\subseteq\ker\bigl(M(\widehat\lf)\bigr)$.
Hence $\widehat\lf\in\K$. Since $\widehat\lf$ extends
$\lf|_{\RA_{2s}}$, this proves
$\pi_{2s}(\K_t)\subseteq\pi_{2s}(\K)$. The reverse inclusion is immediate.

Finally, the proof of Lemma~\ref{TheoremCheckingExistenceOfRoots} provides a
monomial basis of $\RA/\sqrt[\R]{I}$ contained in $\RA_{s_0-1}$. Therefore,
for $s\ge s_0$, the truncated evaluations on $\RA_{2s}$ are affinely
independent. The last claim follows from
Theorem~\ref{thm:K_is_polytope}.
\end{proof}

In general, $\K_t$ need not be a polytope or even a polyhedron.

\begin{example}
\label{exp:K_t_not_polyhedron}
Let $\A=\{e_1\}$, consider $h_1=(x-1) \ x \ (x+1)$, and truncation degree $t=3$.
Then $\RA=\R[x]$ and
\[
  \K=\mathrm{conv}\left(\lf_{1},\lf_{0},\lf_{-1}\right).
\]
Moreover $\mathcal{H}_3=\{h_1\}$, hence
\begin{align*}
\K_3
&= \Bigl\{\lf\in \R[x]_3^* \ \Big|\ \lf(1)=1,\ M_1(\lf)\succeq 0,\ \lf(h_1)=0\Bigr\} \\
&= \Bigl\{\lf\in \R[x]_3^* \ \Big|\ \lf(1)=1,\ \lf(x^2)\ge \lf(x)^2,\ \lf(x^3)=\lf(x)\Bigr\}.
\end{align*}
This shows that $\K_3$ is not even a polyhedron: it is a convex set with infinitely many supporting hyperplanes.
Moreover, the inclusion $\pi_3(\K)\subsetneq \K_3$ is strict.

\begin{figure}[H]
\begin{center}
    \begin{tikzpicture}[scale=1.1]

    \draw[->] (-3.5,0) -- (3.5,0) node[right] {$\lf(x)$};
    \draw[->] (0,-0.5) -- (0,4) node[above] {$\lf(x^2)$};

    \def\xmin{-2}
    \def\xmax{2}

    \begin{scope}
        \clip (\xmin,0) rectangle (\xmax,4); 
        \fill[blue!20,opacity=0.6]
            plot[domain=\xmin:\xmax,samples=200]
                (\x,{(\x)^2})
            -- (\xmax,4)
            -- (\xmin,4)
            -- cycle;
    \end{scope}

    \draw[thick,blue]
        plot[domain=\xmin:\xmax,samples=200]
            (\x,{(\x)^2});

    \coordinate (A) at (-1,1);
    \coordinate (B) at (0,0);
    \coordinate (C) at (1,1);

    \filldraw[fill=red!40,opacity=0.6,draw=red!80,thick]
        (A) -- (B) -- (C) -- cycle;

    \node at (0,0.6) {$\pi(\mathcal{K})$};
\end{tikzpicture}
\caption{The set $\K_3$ is drawn in blue and the projection of $\K$ is drawn in red.}
\end{center}
\end{figure}

\end{example}

Solutions at infinity may even cause $\K_t$ not to be a polyhedron
at any sufficiently high truncation order.

\begin{example}  
For $\A=\{e_1,e_2,e_3\} \subset \NN^3$,  we have $\RA=\R[x,y,z]$ is
the classical polynomial ring. Consider
\[
 {f_1=x^3+x,\qquad f_2=x^3+x+y,\qquad f_3=x^3+x+z.}
\]
Their homogenizations with respect to a new variable $w$ vanish on the
projective line $x=w=0$.

The real variety of this system consists only of the origin. We
claim that $\K_t$ is not a polyhedron for any $t\ge3$. Put
$d=\lfloor t/2\rfloor$ and let $L_t\subseteq\RA_t^*$ be the affine subspace
of the functionals $\lf$ such that $\lf(1)=1$ and which vanish on every
nonconstant monomial except those of degree $2d$ in $y,z$.

For $\lf\in L_t$, we have that $\lf(f) = 0$ for every $f \in \Ht$ as
no polynomial in this vector space contain the monomials $1$ or any
monomial of degree $2d$ only involving $y$ and $z$.
Moreover, up to zero rows and columns,
\[
 M_d(\lf)=1\oplus H_d(\lf),
 \qquad \text{where }
 H_d(\lf)=
 \bigl(\lf(y^{2d-i-j}z^{i+j})\bigr)_{0\le i,j\le d}.
\]
Hence $L_t\cap\K_t$ is affinely isomorphic to the cone of positive
semidefinite Hankel matrices of size $d+1$. This cone is not
polyhedral: it is a classical result that it
has infinitely many distinct extreme rays given by the positive
semidefinite Hankel matrices of rank one,
\[
 \R_{\ge0} \cdot (V_d(u,v) \, V_d(u,v)^{\mathsf T}),
 \qquad
\text{where } V_d(u,v) := (u^d,u^{d-1}v,\dots,v^d)^{\mathsf T},
 \qquad \text{and } [u:v]\in\mathbb P^1(\R).
\]
Since the intersection of a polyhedron with an affine subspace is a
polyhedron, $\K_t$ is not a polyhedron for any $t\ge3$.
\end{example}


\subsection{Randomized computation of one real solution}

In Algorithm~\ref{AlgorithmMomentApproachForLx},
to compute a generic $\lf\in\K_t$, we solve the optimization problem
\begin{equation*}
\mathrm{argmin}_{\lf\in\K_t}\lf(1).
\end{equation*}
and choose a generic minimizer.
However, we could replace the constant objective $1\in\RA_t$
by a polynomial $c\in\RA_{2s}\subseteq
\RA_t$, where $s\geq1$ and $t\geq2s$:
\begin{equation*}
  \mathrm{argmin}_{\lf\in\K_t}\lf(c).
\end{equation*}
If $t$ is sufficiently large and $c\in\RA_{2s}$ is generic, then the
minimum is attained at a vertex of $\pi_{2s}(\K_t) =\pi_{2s}(\K)$,
hence at the restriction of an evaluation at a real solution.
In this case, for $t$ sufficiently large, the optimizer has a rank-one
moment matrix whose kernel corresponds to the real solution associated
to this vertex.
Via Lemma \ref{lem:Max_ideals_in_RA}, we can recover the coordinates of this solution.

In what follows, we present a randomized variant of
Algorithm~\ref{AlgorithmMomentApproachForLx} which allow us to compute
only one real root.
As we shall show, every real root has a positive probability of being
chosen and this probability depends on certain solid angles.
For this, we fix $s \ge 1$, choose a basis of $\RA_{2s}$, and use the
Euclidean inner product for which this basis is orthonormal. We choose
the objective $c$ uniformly on the unit sphere of
$\RA_{2s}$. Equivalently, we may take independent standard Gaussian
coefficients, since normalization does not change the minimizer. The
optimization variable is still a functional in $\RA_t^*$, for
any $t \ge 2s$.

\begin{lemma}
\label{lem:quadratic-objectives-expose-roots}
Assume that $\mathcal V_{\RA}(I)$ is finite. For every $s \ge 1$, each truncated
evaluation $\lf_{\mathfrak m}|_{\RA_{2s}}$ is a vertex of the polytope
$\pi_{2s}(\K)$.
\end{lemma}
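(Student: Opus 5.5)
Since $\K$ is the convex hull of finitely many evaluations (Theorem~\ref{thm:K_is_polytope}) and $\pi_{2s}$ is linear, $\pi_{2s}(\K)$ is the convex hull of the finitely many truncations $\lf_{\mathfrak m}|_{\RA_{2s}}$, $\mathfrak m\in\mathcal V_{\RA}(I)$. Every vertex of this polytope is one of these points. It remains to show that each of them is a vertex, i.e.\ not a convex combination of the others. I will exhibit, for each $\mathfrak m$, an explicit element $c_{\mathfrak m}\in\RA_{2s}$ such that the linear functional $\mu\mapsto\mu(c_{\mathfrak m})$ on $\RA_{2s}^*$ is minimized over $\pi_{2s}(\K)$ uniquely at $\lf_{\mathfrak m}|_{\RA_{2s}}$. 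This is the sense in which ``quadratic objectives expose roots.''

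The construction is a sum of squares of degree-one separating polynomials. By Lemma~\ref{lem:Max_ideals_in_RA}, write $\mathfrak m=\langle x^{\alpha}-\beta_\alpha:\alpha\in\A\rangle$ and set
\[
 c_{\mathfrak m}:=\sum_{\alpha\in\A}\bigl(x^{\alpha}-\beta_\alpha\bigr)^2\in\RA_2\subseteq\RA_{2s},
\]
which is valid because $s\ge1$. For any $\mathfrak m'\in\mathcal V_{\RA}(I)$ we have $\lf_{\mathfrak m'}(c_{\mathfrak m})=\sum_\alpha(\lf_{\mathfrak m'}(x^\alpha)-\beta_\alpha)^2\ge0$, with equality exactly when $\lf_{\mathfrak m'}(x^\alpha)=\beta_\alpha$ for all $\alpha\in\A$. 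Since an $\R$-algebra homomorphism on $\RA$ is determined by its values on the generators $x^\alpha$, $\alpha\in\A$, equality holds iff $\mathfrak m'=\mathfrak m$.

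Now suppose $\lf_{\mathfrak m}|_{\RA_{2s}}=\sum_{\mathfrak m'}\lambda_{\mathfrak m'}\lf_{\mathfrak m'}|_{\RA_{2s}}$ is a convex combination of the points. Evaluating at $c_{\mathfrak m}\in\RA_{2s}$ gives $0=\sum\lambda_{\mathfrak m'}\lf_{\mathfrak m'}(c_{\mathfrak m})$. Every term is nonnegative, and the term is strictly positive whenever $\mathfrak m'\neq\mathfrak m$. Hence all weight sits on $\mathfrak m$. This shows $\lf_{\mathfrak m}|_{\RA_{2s}}$ is an extreme point, and therefore a vertex.

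The only subtle point is the distinctness of the truncations in $\RA_{2s}^*$. This would fail if $\RA_{2s}$ did not contain the generators $x^\alpha$. It is exactly why $s\ge1$ (indeed $\RA_2$) suffices, with no need for the larger $s_0$ of Theorem~\ref{thm:KtIsPolytope}.
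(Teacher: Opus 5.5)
Your proposal is correct and follows the paper's proof essentially verbatim. The paper uses the same objective $q_{\mathfrak m}=\sum_{\alpha_i\in\A}\bigl(x^{\alpha_i}-\lf_{\mathfrak m}(x^{\alpha_i})\bigr)^2\in\RA_2\subseteq\RA_{2s}$, together with $\pi_{2s}(\K)=\operatorname{conv}\{\lf_{\mathfrak n}|_{\RA_{2s}}\}$ and the fact that $q_{\mathfrak m}$ vanishes at $\lf_{\mathfrak m}$ and is positive at every other point, so that $\lf_{\mathfrak m}|_{\RA_{2s}}$ is the unique minimizer; your explicit convex-combination check simply spells out that last step.
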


\begin{proof}
For $\mathfrak m\in\mathcal V_{\RA}(I)$, consider
$
 q_{\mathfrak m}:=
 \sum_{\alpha_i\in\A}
 \left(x^{\alpha_i}-\lf_{\mathfrak m}(x^{\alpha_i})\right)^2\in\RA_2.
 $
 By Theorem~\ref{thm:K_is_polytope}, $\pi_{2s}(\K) =
\operatorname{conv}\left\{
 \lf_{\mathfrak n}|_{\RA_{2s}}:\mathfrak n\in\mathcal V_{\RA}(I)
 \right\}.
$
Then $\lf_{\mathfrak m}(q_{\mathfrak m})=0$, whereas
$\lf_{\mathfrak n}(q_{\mathfrak m})>0$ for every
$\mathfrak n\ne\mathfrak m$, since the algebra generators separate distinct
points. Thus $q_{\mathfrak m}$ uniquely minimizes at
$\lf_{\mathfrak m}|_{\RA_{2s}}$ because $\RA_2\subseteq\RA_{2s}$.
\end{proof}

From now on, we fix a norm for the space of polynomial
$\RA_{2s}$.
For each solution $\mathfrak m\in\mathcal V_{\RA}(I)$, we define its
normal cone by
\[
 N_{2s}(\mathfrak m):=
 \left\{c\in\RA_{2s}:\lf_{\mathfrak m}(c)
 \le\lf_{\mathfrak n}(c)\text{ for every }
 \mathfrak n\in\mathcal V_{\RA}(I)\right\}.
\]
Its normalized solid angle is
\[
 \omega_{2s}(\mathfrak m):=
 \frac{\sigma\bigl(N_{2s}(\mathfrak m)\cap S(\RA_{2s})\bigr)}
 {\sigma\bigl(S(\RA_{2s})\bigr)},
\]
where $S(\RA_{2s})$ is the unit sphere and $\sigma$ denotes its spherical
measure.

\begin{algorithm}[H]
\caption{Randomized computation of one real root}
\label{AlgorithmOneRoot}
\textbf{Input:} A finite $\A\subset\Z^n$, polynomials
$f_1,\dots,f_\mu\in\RA$ with finite real toric variety, and an integer
$s\ge1$.
\begin{itemize}
\item[(1)] Compute $\rho_\A$, set $D=\max_i\deg_\A(f_i)$, and put
$t=2\max\{D,\rho_\A,s\}$.
\item[(2)] Choose once $c$ uniformly on the unit sphere of $\RA_{2s}$.
\item[(3)] Minimize $\lf(c)$ over $\lf\in\K_t$. If $\K_t$ is empty, report
that there is no real toric solution. If the minimum is not attained, replace
$t$ by $t+2$ and repeat this step, keeping the same $c$.
\item[(4)] For an optimizer $\lf$, check whether
\[
 \rk\bigl(M_k(\lf)\bigr)=1
\]
for some $\max\{D,\rho_\A\}\le k\le t/2$. If so, return
\[
 \left\langle x^\alpha-\lf(x^\alpha):\alpha\in\A\right\rangle.
\]
Otherwise, replace $t$ by $t+2$ and return to step~(3).
\end{itemize}
\end{algorithm}

\begin{theorem}
\label{thm:one-root-correctness}
Algorithm~\ref{AlgorithmOneRoot} terminates with probability one and is
correct. Moreover, the probability of returning a solution $\mathfrak m$ is
precisely $\omega_{2s}(\mathfrak m)$ and is strictly positive.
\end{theorem}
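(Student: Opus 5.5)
The plan is to reduce everything to the polytope picture of Theorem~\ref{thm:KtIsPolytope}. Since the variety is nonempty, Lemma~\ref{TheoremCheckingExistenceOfRoots}(i) shows that $\K_t$ is never empty, so the algorithm never reports absence of solutions. Fix $s_0,t_0$ as in Theorem~\ref{thm:KtIsPolytope}; enlarging $s_0$ if needed, we may assume $s_0\ge s$ and $s_0\ge\max\{D,\rho_\A\}$.

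\textbf{Step 1: almost-sure uniqueness of the minimizer.} The objective $\lf(c)$ depends only on $\pi_{2s}(\lf)$. Once $t$ is large, $\pi_{2s}(\K_t)$ equals the simplex $P:=\pi_{2s}(\K)$, whose vertices are the $\lf_{\mathfrak m}|_{\RA_{2s}}$. This follows from Theorem~\ref{thm:KtIsPolytope} applied at order $s_0$ and then projected further to $\RA_{2s}$. Hence for $t\ge t_0+2s_0$ the minimum over $\K_t$ is attained, and the minimum value equals $\min_{\mathfrak m}\lf_{\mathfrak m}(c)$.

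The set of $c$ for which two distinct solutions tie is $\bigcup_{\mathfrak m\neq\mathfrak n}\{c:\lf_{\mathfrak m}(c)=\lf_{\mathfrak n}(c)\}$. This is a finite union of hyperplanes in $\RA_{2s}$. Each is a proper hyperplane because $\lf_{\mathfrak m}|_{\RA_{2s}}\neq\lf_{\mathfrak n}|_{\RA_{2s}}$, which holds since the generators $x^{\alpha_i}\in\RA_{2s}$ separate points (compare Lemma~\ref{lem:quadratic-objectives-expose-roots}). Therefore this set has spherical measure zero. Outside it, a unique $\mathfrak m$ minimizes $\lf_{\mathfrak n}(c)$ over the solutions, and $c$ lies in the interior of $N_{2s}(\mathfrak m)$. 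Thus the events ``$\mathfrak m$ is the unique minimizer'' are disjoint, cover the sphere up to a null set, and have probabilities exactly $\omega_{2s}(\mathfrak m)$.

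\textbf{Step 2: the optimizer becomes rank one.} This is the main obstacle. Fix a generic $c$ with unique minimizing solution $\mathfrak m$, and let $\lf$ be any optimizer in $\K_t$ with $t\ge t_0+2k$ for some $k\ge s_0$. By Lemma~\ref{TheoremCheckingExistenceOfRoots}(ii), $M_k(\lf)$ is flat. The flat extension $\widehat\lf$ then lies in $\K$ (this is the argument in the proof of Theorem~\ref{thm:KtIsPolytope}). By Theorem~\ref{thm:K_is_polytope}, $\widehat\lf=\sum_{\mathfrak n}\lambda_{\mathfrak n}\lf_{\mathfrak n}$ is a convex combination.

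Optimality, together with $\pi_{2s}(\lf)=\pi_{2s}(\widehat\lf)$ (since $2s\le 2k$), gives $\sum_{\mathfrak n}\lambda_{\mathfrak n}\lf_{\mathfrak n}(c)=\lf_{\mathfrak m}(c)$. Uniqueness of the minimizing solution then forces $\lambda_{\mathfrak m}=1$, so $\widehat\lf=\lf_{\mathfrak m}$. Hence $\rk M_k(\lf)=1$, and step (4) succeeds once $t$ is large.

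\textbf{Step 3: correctness of the output.} We must check that whenever the test in step (4) passes at some $k\ge\max\{D,\rho_\A\}$, even for smaller $t$, the returned ideal is a genuine solution. Rank one gives $\rk M_k=\rk M_{k-1}=1$. Theorem~\ref{TheoremNtEqualsRealRadicalWithRankCondition}(i) (or its proof) then shows that the flat extension is a single evaluation $\lf_{\mathfrak n}$ with $\mathfrak n\in\mathcal V_{\RA}(I)$ and $\lf(x^\alpha)=\lf_{\mathfrak n}(x^\alpha)$. So the returned ideal is $\mathfrak n$ by Lemma~\ref{lem:Max_ideals_in_RA}.

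We also need that this early stop returns the same $\mathfrak m$ as in Step 2. Since $\pi_{2s}(\lf)=\lf_{\mathfrak n}|_{\RA_{2s}}\in P$ and $\lf$ minimizes over $\K_t\supseteq$ the truncated evaluations, we get $\lf_{\mathfrak n}(c)\le\lf_{\mathfrak m}(c)$. Hence $\mathfrak n=\mathfrak m$ for generic $c$. The output is therefore $\mathfrak m$ with probability exactly $\omega_{2s}(\mathfrak m)$, and the algorithm terminates almost surely.

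\textbf{Step 4: positivity of $\omega_{2s}(\mathfrak m)$.} Lemma~\ref{lem:quadratic-objectives-expose-roots} gives $q_{\mathfrak m}$ with $\lf_{\mathfrak m}(q_{\mathfrak m})<\lf_{\mathfrak n}(q_{\mathfrak m})$ for all $\mathfrak n\ne\mathfrak m$. These strict inequalities are preserved under small perturbations, so $N_{2s}(\mathfrak m)$ has nonempty interior. Its intersection with the sphere is then a nonempty open set, and so $\omega_{2s}(\mathfrak m)>0$.
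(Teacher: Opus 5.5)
Your overall route is the same as the paper's. You project to the simplex $\pi_{2q}(\K)$ with $q=\max\{s,s_0\}$ and discard the measure-zero set of tied objectives. You then show that any optimizer restricts to the vertex $\lf_{\mathfrak m}$ and so has rank one. Correctness comes from Theorem~\ref{TheoremNtEqualsRealRadicalWithRankCondition}(i), and positivity of the probability from $q_{\mathfrak m}$. Your Step~2 (flat extension, then Theorem~\ref{thm:K_is_polytope}) is just the unpacked form of the paper's simplex-decomposition sentence.

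Where you go beyond the paper is Step~3. The paper only shows that $\mathfrak m$ is returned once $t$ is large and does not discuss an earlier stop; you try to show that an earlier stop still returns the $c$-minimizing root. That argument has a hole. Step~(4) accepts any $k$ with $\max\{D,\rho_\A\}\le k\le t/2$, so whenever $s>\max\{D,\rho_\A\}$ the test can pass with $k<s$. Theorem~\ref{TheoremSparseFlatExtension} then only gives $\lf=\lf_{\mathfrak n}$ on $\RA_{2k}$. So neither $\pi_{2s}(\lf)=\lf_{\mathfrak n}|_{\RA_{2s}}$ nor $\lf(c)=\lf_{\mathfrak n}(c)$ follows: the moments of degree above $2k$ that $c$ sees are not controlled by the flat extension.

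A short repair goes as follows.
\begin{itemize}
\item Since $M_k(\lf)$ has rank one and $k\ge1$, we have $\lf\bigl((x^{\alpha_i}-\lf(x^{\alpha_i}))^2\bigr)=0$. Positivity of $M_{t/2}(\lf)$ then puts $x^{\alpha_i}-\lf(x^{\alpha_i})$ in its kernel.
\item Induction on $\deg_\A$ shows that the column $u$ of $M_{t/2}(\lf)$ indexed by $1$ is the evaluation vector of $\mathfrak n$, i.e.\ $uu^\top=M_{t/2}(\lf_{\mathfrak n})$.
\item $M_{t/2}(\lf)-uu^\top$ is the Schur complement of the entry $\lf(1)=1$, so it is positive semidefinite. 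Hence $2\lf-\lf_{\mathfrak n}|_{\RA_t}\in\K_t$.
\item Comparing the optimal value $\lf(c)$ with this functional and with $\lf_{\mathfrak m}|_{\RA_t}$ gives $\lf_{\mathfrak n}(c)\le\lf(c)\le\lf_{\mathfrak m}(c)$. Hence $\mathfrak n=\mathfrak m$ for generic $c$, and the probability claim is fully justified.
\end{itemize}

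Separately, you only treat $\mathcal V_{\RA}(I)\ne\emptyset$. Termination and correctness also need the empty case. This follows from Lemma~\ref{TheoremCheckingExistenceOfRoots}(i) together with your observation that step~(4) can only succeed at a genuine real point.
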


\begin{proof}
If $\mathcal V_{\RA}(I)=\emptyset$, Lemma
\ref{TheoremCheckingExistenceOfRoots}(i) shows that $\K_t$ is empty for all
sufficiently large $t$. Conversely, the evaluation at a real point belongs to
every $\K_t$, so the algorithm cannot incorrectly report emptiness.

If step~(4) succeeds at any order, then
$\rk\bigl(M_k(\lf)\bigr)=\rk\bigl(M_{k-1}(\lf)\bigr)=1$.
Theorem~\ref{TheoremNtEqualsRealRadicalWithRankCondition}(i) and
Corollaries~\ref{CorollaryDirectFlatKernel} and
\ref{CorollaryDirectFlatAtomic} show that
$\langle\ker(M_k(\lf))\rangle$ is the maximal ideal of a real point of
$\mathcal V_{\RA}(I)$.
Thus every possible output of the algorithm is correct.

Assume that the variety is finite and nonempty, take $s_0,t_0$ as in
Theorem~\ref{thm:KtIsPolytope}, and put $q=\max\{s,s_0\}$. For every
sufficiently large $t$ occurring in the algorithm, $t\ge t_0+2q$, and hence
\[
 \pi_{2q}(\K_t)=\pi_{2q}(\K).
\]
Projecting this equality to $\RA_{2s}^*$ gives
$\pi_{2s}(\K_t)=\pi_{2s}(\K)$. Therefore the minimum in step~(3) is attained. The
objectives having more than one minimizer lie in a finite union of proper
linear subspaces, which has spherical volume zero. Hence, with probability
one, $c$ has a unique minimizer on $\pi_{2s}(\K)$, and Lemma
\ref{lem:quadratic-objectives-expose-roots} shows that this minimizer is
$\lf_{\mathfrak m}|_{\RA_{2s}}$ for some real point $\mathfrak m$.

Let $\lf\in\K_t$ be any optimizer. Its restriction to $\RA_{2q}$ belongs
to the simplex $\pi_{2q}(\K)$. Since its projection to $\RA_{2s}^*$ is the
vertex $\lf_{\mathfrak m}|_{\RA_{2s}}$ of $\pi_{2s}(\K)$, its simplex decomposition can
involve only $\lf_{\mathfrak m}$. Thus
\[
 \lf|_{\RA_{2q}}=\lf_{\mathfrak m}|_{\RA_{2q}},
 \qquad
 \rk\bigl(M_q(\lf)\bigr)=1.
\]
Step~(4) therefore succeeds.
Since $\lf$ agrees with $\lf_{\mathfrak m}$ on the generators of
$\RA$, Lemma~\ref{lem:Max_ideals_in_RA} shows that the returned ideal is
$\mathfrak m$.

Finally, outside the boundary of $N_{2s}(\mathfrak m)$, the solution
$\mathfrak m$ is returned exactly when $c\in N_{2s}(\mathfrak m)$. Since $c$
is uniform on the unit sphere, this event has probability
$\omega_{2s}(\mathfrak m)$. The strict separator $q_{\mathfrak m}$ constructed
in the lemma belongs to the interior of this cone, so this probability is
positive.
\end{proof}

\begin{remark}[Choice of $c$ and the norm]
If we choose $c\in\RA_1$, a solution may lie in the convex hull of the other
solutions after projection to $\RA_1^*$, and then its normal cone has solid
angle zero. Lemma~\ref{lem:quadratic-objectives-expose-roots} shows that degree
two is enough to give every solution strictly positive probability.
The actual probabilities depend on the norm chosen on $\RA_{2s}$.
\end{remark}

Increasing $s$ adds coordinates to the truncated evaluation vectors
and therefore changes the angles between the edges of the resulting
polytope. It does not follow that the probabilities become equal, or
that the probability of a fixed solution increases.
Moreover, affine translations of the points do not preserve the
solid angles neither.
The following
examples make both points explicit.

\begin{example}
\label{exp:unequal-limiting-solid-angles}
Let $\RA=\R[x]$, $I=\langle x(x-1)(x-2)\rangle$, and take the standard
monomial basis of $\RA_{2s}$ to be orthonormal. The evaluation vectors are
\[
 v_a^{(s)}=(1,a,\ldots,a^{2s}),\qquad a\in\{0,1,2\}.
\]
After projection to $\RA_1^*$, the evaluation at $1$ lies between the other
two evaluations, so its probability is zero for linear objectives. For
$s\geq1$, let $\theta_a^{(s)}$ be
the angle of the triangle formed by these vectors at $v_a^{(s)}$. Then
the probability at $a$ is $(\pi-\theta_a^{(s)})/(2\pi)$. A direct
computation of the scalar products between the edges shows that
$\theta_0^{(s)},\theta_1^{(s)}\to\pi/2$. Hence
$\theta_2^{(s)}\to0$, and therefore
\[
 (\omega_{2s}(0),\omega_{2s}(1),\omega_{2s}(2))
 \longrightarrow(1/4,1/4,1/2).
\]
In particular, increasing the objective degree does not make the
probabilities equal.
\end{example}

\begin{example}
\label{exp:translated-solid-angles}
Let $I=\langle x(x-1)(x+1)\rangle$, with the same orthonormal monomial basis.
For $a\in\{-1,0,1\}$, set $v_a^{(s)}=(1,a,\ldots,a^{2s})$. A direct
scalar-product computation gives the angles $\pi/4,\pi/2,\pi/4$, and hence
\[
 (\omega_{2s}(-1),\omega_{2s}(0),\omega_{2s}(1))
 =(3/8,1/4,3/8)
\]
for every $s\geq1$. Although $\{-1,0,1\}$ and $\{0,1,2\}$ have the same
pairwise distances, their probabilities are different. Thus these
probabilities depend on the geometry after the monomial embedding.
\end{example}

\begin{remark}[On numerical accuracy]
Even though theoretically we can recover the coordinates of the solutions associated to the vertex without needed to compute the border basis of the ideal, in some examples we observed that, because $\lf$ is an approximation, applying the eigenvalue method to reconstruct the corresponding extension lead to better numerical approximations. We observe that, in this case, the eigenvalue problem that we need to solve has (significantly) smaller size than the total number of real solutions.
\end{remark}

\begin{remark}
    [Relation to polynomial optimization]
    
We observe that this approach is closely related to  Lasserre’s hierarchy  \cite{lasserre2001global} for solving polynomial optimization problems over semialgebraic sets. Indeed, Lasserre’s hierarchy is based on the infinite-dimensional cone of moment functionals (in our case, $\K$), and considers its truncations to moments of degree at most $2t$ (which we called $\K_t$). This yields a sequence of semidefinite relaxations providing increasing lower bounds for the minimum of a polynomial function, since not every element of $\K_t$ admits an extension to an element of $\K$.
When the semialgebraic set is a set of points, Theorem~\ref{thm:KtIsPolytope} implies finite convergence for every fixed objective $c\in\RA_{2s}$: if $s\geq s_0$ and $t\geq t_0+2s$, then
\[
 \min_{\lf\in\K_t}\lf(c)=\min_{\lf\in\K}\lf(c).
\]
Indeed, the objective depends only on the restriction to
$\RA_{2s}$, and Theorem~\ref{thm:KtIsPolytope} gives
$\pi_{2s}(\K_t)=\pi_{2s}(\K)$. Notice that this does not assert that every
element of $\K_t$ extends to an element of $\K$; only its restriction to
$\RA_{2s}$ admits such an extension.
If $c$ has a unique minimizer on the real variety, the restriction
to $\RA_{2s}$ of every optimizer in $\K_t$ is the evaluation at this point.
\end{remark}

\paragraph{A unique nonnegative solution.}
Algorithm~\ref{AlgorithmOneRoot}  chooses an objective without restricting the signs of
its coefficients. The following observation is useful when one wants to find
a nonnegative solution by choosing an objective with nonnegative coefficients.

\begin{proposition}
\label{prop:unique-nonnegative-solution}
Let
$ \mathcal V_{\RA}(I)=\{\mathfrak m_+,\mathfrak m_1,\dots, \mathfrak m_{r-1}\} $
and define, for each solution, the vectors
\[
 v(\mathfrak m):=
 \bigl(\lf_{\mathfrak m}(x^{\alpha_1}),\dots,
 \lf_{\mathfrak m}(x^{\alpha_m})\bigr)\in\R^m.
\]
Assume that $v(\mathfrak m_+)\ge0$ coordinatewise and that every
$v(\mathfrak m_j)$ has a negative coordinate. Then there is
$c\in\RA_{r-1}$ with nonnegative coefficients such that
\[
 \lf_{\mathfrak m_+}(c)>
 \lf_{\mathfrak m_j}(c),
 \qquad j=1,\dots,r-1.
\]
Moreover, $c$ may be chosen with all its coefficients strictly positive.
\end{proposition}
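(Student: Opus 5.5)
The plan is to reduce the statement to a linear feasibility problem and then build the separating polynomial explicitly from low-degree products of the generators $x^{\alpha_i}$. Evaluations are multiplicative, so for any monomial $x^{\beta}=x^{\alpha_{i_1}}\cdots x^{\alpha_{i_k}}$ with $k\le r-1$ we have $\lf_{\mathfrak m}(x^\beta)=\prod_\ell v(\mathfrak m)_{i_\ell}$. A polynomial $c\in\RA_{r-1}$ with nonnegative coefficients therefore gives a polynomial $p(z)$ in $z\in\R^m$ with nonnegative coefficients and degree at most $r-1$, with $\lf_{\mathfrak m}(c)=p(v(\mathfrak m))$. The goal becomes finding such a $p$ with $p(v_+)>p(v_j)$ for $j=1,\dots,r-1$, where $v_+:=v(\mathfrak m_+)\ge 0$. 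The final sentence of the statement is the easy part. Once a separating $c$ with nonnegative coefficients is found, we add $\varepsilon\sum_{\beta}x^\beta$ over a monomial basis of $\RA_{r-1}$. For $\varepsilon>0$ small the finitely many strict inequalities survive, and all coefficients become positive.

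For the construction, for each $j$ I fix an index $i_j$ with $v(\mathfrak m_j)_{i_j}<0$. I would first try products of shifted generators
\[
 p_j(z)=z_{i_j}\prod_{k\ne j}\bigl(z_{i_k}+a_k\bigr),
\]
or sums of such terms. Each has degree $\le r-1$, which is what the bound $\RA_{r-1}$ suggests: one factor per competing solution. Taking all $a_k>0$ keeps the coefficients nonnegative, and at $v_+$ every factor is $\ge 0$. The idea is to tune the constants so that at $v(\mathfrak m_j)$ the factor $z_{i_j}$ is negative while the value at $v_+$ stays at least as large. An alternative route is to replace each factor by a power of a positive-weight linear form $\ell(z)=\sum_i w_iz_i$, with weights concentrated on the negative coordinates. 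Then $c=\ell^k$ or $\sum_k \lambda_k\ell^k$ has nonnegative coefficients, and the question reduces to comparing the scalars $\ell(v_+)$ and $\ell(v_j)$.

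As a safety net I would run the dual argument. By Gordan's or Farkas' alternative over the cone of nonnegative coefficient vectors in $\RA_{r-1}$, failure of separation means there is a convex combination $\lf'=\sum_j\mu_j\lf_{\mathfrak m_j}$ with $\lf'(x^\beta)\ge\lf_{\mathfrak m_+}(x^\beta)$ for all monomials $x^\beta\in\RA_{r-1}$. To get a contradiction I would test $\lf'$ against products $\prod_{k\in S}x^{\alpha_{i_k}}$ over subsets $S$ of the $r-1$ chosen indices, possibly multiplied by constants. At $v_+$ these products are nonnegative, while at each $v(\mathfrak m_j)$ with $j\in S$ the factor $x^{\alpha_{i_j}}$ is negative. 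An inclusion–exclusion over $S$ should then force some such product to be strictly smaller under $\lf'$ than under $\lf_{\mathfrak m_+}$. Comparing $\prod_k\bigl(x^{\alpha_{i_k}}+a\bigr)$ for large $a$ with its expansion is the natural way to organize this.

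I expect the main obstacle to be controlling signs at the competitors $v(\mathfrak m_j)$. At such a point several of the chosen coordinates may be negative at once, so a product's sign is $(-1)^{\#\text{negative factors}}$ and is not controlled a priori. Coordinates that are not forced negative can also be large and positive. Getting a single nonnegative-coefficient polynomial of degree exactly $\le r-1$ to beat all $r-1$ competitors simultaneously, rather than one at a time, is where the real work lies, and I am not certain the product construction alone succeeds. If it does not, I would fall back on the Farkas formulation and try to exhibit the separating monomial combination there.
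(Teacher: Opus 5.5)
Your reduction to a nonnegative-coefficient polynomial in the generators and your $\varepsilon$-perturbation for strict positivity are fine and match the paper. The central step, however, is left open, and the concrete candidates you propose do not work as stated.

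\textbf{Where the candidates fail.}
\begin{itemize}
\item In $p_j(z)=z_{i_j}\prod_{k\ne j}(z_{i_k}+a_k)$, the factor $z_{i_j}$ can vanish at $v_+$, since only $v_+\ge0$ is assumed. For example, if $v(\mathfrak m_+)=0$, every such $p_j$, and every sum of them, vanishes at $\mathfrak m_+$. You would then need $p(v_j)<0$ at every competitor, which is exactly the sign control you cannot guarantee.
\item Taking all shifts large makes every factor positive at every competitor, so $\mathfrak m_+$ is no longer distinguished. For instance, with $v_+=(1,1)$, $v_1=(-1,10)$, $v_2=(10,-1)$, the product $(x+a)(y+a)$ is larger at $v_1$ than at $v_+$ for large $a$.
\item The linear-form route $\sum_k\lambda_k\ell^k$ also fails on this example. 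No nonnegative $w\ne0$ gives $\ell(v_+)>\ell(v_1)$ and $\ell(v_+)>\ell(v_2)$ simultaneously. A polynomial in $t$ with nonnegative coefficients is nondecreasing on $[0,\infty)$, so it cannot then separate.
\item The Farkas/Gordan reformulation is correct. But the ``inclusion--exclusion'' meant to produce the contradiction is never supplied.
\end{itemize}

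\textbf{The missing idea.} Choose the shift in the $j$-th factor to be exactly $a_j:=-v(\mathfrak m_j)_{i_j}>0$, and drop the bare factor $z_{i_j}$. That is, take, as the paper does,
\[
 c=\prod_{j=1}^{r-1}\bigl(x^{\alpha_{i_j}}-\lf_{\mathfrak m_j}(x^{\alpha_{i_j}})\bigr)\in\RA_{r-1}.
\]
\begin{itemize}
\item Each factor is a monomial plus a positive constant. Hence $c$ has nonnegative coefficients: collisions of monomials in $\RA$ only add nonnegative numbers.
\item The $j$-th factor vanishes at $\mathfrak m_j$, so $\lf_{\mathfrak m_j}(c)=0$ whatever the signs of the other factors. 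This removes your obstacle entirely: you never need a factor to be negative at a competitor, only one factor to be zero there.
\item At $\mathfrak m_+$, every factor equals $v(\mathfrak m_+)_{i_j}+a_j\ge a_j>0$, so $\lf_{\mathfrak m_+}(c)>0$. The positive constant term $\prod_j a_j$ is what makes $\mathfrak m_+$ win even when $v(\mathfrak m_+)=0$.
\end{itemize}

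The same $c$ would also have closed your dual argument immediately. It gives $\sum_j\mu_j\lf_{\mathfrak m_j}(c)=0<\lf_{\mathfrak m_+}(c)$, contradicting $\sum_j\mu_j\lf_{\mathfrak m_j}\ge\lf_{\mathfrak m_+}$ on nonnegative combinations of monomials.
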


\begin{proof}
  For every $j \in \{1,\dots,r-1\}$, choose $k(j) \in \{1,\dots,m\}$
  such that the $k(j)$-th coordinate of $v(\mathfrak m_j)$ is
  negative. Then, we define
  \[
    c :=
 \prod_{j=1}^{r-1}
 \left(x^{\alpha_{k(j)}} -v(\mathfrak m_j)_{k(j)}\right) \in\RA_{r-1}.
\]
The factor indexed by $j$ vanishes at $\mathfrak m_j$, and hence
$\lf_{\mathfrak m_j}(c)=0$. On the other hand, every factor is strictly
positive at $\mathfrak m_+$, so $\lf_{\mathfrak m_+}(c)>0$.

To make every coefficient strictly positive, fix a monomial basis $\mathcal B$
of $\RA_{r-1}$ and replace $c$ by
\[
 c_\varepsilon:=c+\varepsilon\sum_{b\in\mathcal B}b.
\]
For $\varepsilon>0$ sufficiently small, the strict inequalities above are
preserved. 
\end{proof}

\begin{remark}
  The previous proposition tells us that if an objective
  $c \in \RA_{r-1}$ is sampled from a distribution with positive
  density on the positive part of the unit sphere of $\RA_{r-1}$,
  minimizing the $-c$ returns $\mathfrak m_+$ with positive
  probability.
  However, the previous result is an existence one as we construct an
  specific polynomial $c$ using the coordinates of the solutions.
  It does not provide a procedure to compute such $c$.
\end{remark}

\subsection{Flat extension one root vs. all roots, an example}

We continue Example~\ref{exp:translated-solid-angles} to show that
flat extension at one point may occur earlier than flat extension at a generic
element.

\begin{example}[Continuation of Example~\ref{exp:translated-solid-angles}]
\label{exp:flat-extension-one-root-continuation}
At truncation degree $t=4$, we have that
 \[\mathcal{H}_4=\{x \, (x-1)(x-2) , x^2 \, (x-1)(x-2)\}, \qquad \text{and} \]
\begin{align*}
\K_4
= \Bigl\{\lf\in \R[x]_4^* \ \Big|\ & \lf(1)=1,\ M_2(\lf)\succeq 0,\ \lf(h_1)=0, \lf(x \cdot h_1 ) = 0\Bigr\} \\
= \Bigl\{\lf\in \R[x]_4^* \ \Big|\  &\lf(1)=1,\ \lf(x^2)\ge \lf(x)^2,\ \lf(x^3)=\lf(x), \lf(x^2) = \lf(x^4), \\
 & \underbrace{\lf(x^2)^2 + \lf(x^2)\cdot \lf(x)^2-\lf(x^2)^3-\lf(x)^2}_{= (\lf(x^2)- \lf(x))\cdot (\lf(x^2) + \lf(x)) \cdot (1 - \lf(x^2))} \geq 0 \Bigr\} 
 = \pi_4(\K).
\end{align*}
If we optimize the linear function $c=x^2$ we obtain
\begin{equation*}
    \mathrm{argmin}_{\lf \in \K_{{4}}} c(\lf) = \lf'
\end{equation*}
such that $\lf'(1)=1$ and
$\lf'(x)=\lf'(x^2)=\lf'(x^3)=\lf'(x^4)=0$, which gives
$\rk M_0(\lf')=\rk M_2(\lf')=1$. However, this cannot hold for generic
elements of $\K_4$. For example, consider
$\lf''(x)=\lf''(x^3)=0$ and
$\lf''(x^2)=\lf''(x^4)=2/3$. Then $\lf''$ lies in the relative interior of
$\K_4$, but $\rk M_0(\lf'')=1\ne3=\rk M_2(\lf'')$.
\begin{figure}
\begin{center}
\begin{tikzpicture}[scale=1.1]

    \def\xmin{-2}
    \def\xmax{2}
    \def\ymin{-1}
    \def\ymax{3.5}

    \draw[->] (-3,0) -- (3,0) node[right] {$\lf(x)$};
    \draw[->] (0,\ymin) -- (0,\ymax) node[above] {$\lf(x^2)$};

    \begin{scope}
        \clip (-3,\ymin) rectangle (3,\ymax);
        \fill[green!10,opacity=0.3] (-3,\ymin) rectangle (3,1);
    \end{scope}

    \begin{scope}
        \clip (-3,\ymin) rectangle (3,\ymax);
        \fill[red!10,opacity=0.25]
            (-3,-3) -- (3,3) -- (3,\ymax) -- (-3,\ymax) -- cycle;
    \end{scope}

    \begin{scope}
        \clip (-3,\ymin) rectangle (3,\ymax);
        \fill[blue!10,opacity=0.25]
            (-3,3) -- (3,-3) -- (3,\ymax) -- (-3,\ymax) -- cycle;
    \end{scope}

    \begin{scope}
        \clip (\xmin,\ymin) rectangle (\xmax,\ymax);
        \fill[blue!20,opacity=0.5]
            plot[domain=\xmin:\xmax,samples=200] (\x,{(\x)^2})
            -- (\xmax,\ymax)
            -- (\xmin,\ymax)
            -- cycle;
    \end{scope}

    \draw[thick,blue]
        plot[domain=-1.87:1.87,samples=200] (\x,{(\x)^2});

    \draw[thick,green!60!black,dashed]
        (-3,1) -- (3,1) node[right] {$\lf(x^2)=1$};

    \draw[thick,red!70!black,dashed]
        (-1,-1) -- (3,3) node[right] {$\lf(x^2)=\lf(x)$};

    \draw[thick,blue!70!black,dashed]
        (-3,3) -- (1,-1) node[right] {$\lf(x^2)=-\lf(x)$};

    \coordinate (A) at (-1,1);
    \coordinate (B) at (0,0);
    \coordinate (C) at (1,1);

    \filldraw[fill=orange!40,opacity=0.7,draw=orange!80!black,thick]
        (A) -- (B) -- (C) -- cycle;

    \node at (0,0.55) {$\pi(\mathcal{K})$};

\end{tikzpicture}
\end{center}
\caption{The set $\K_4$ is equal to $\pi(\K)$, which can be seen by the intersection of the halfspaces $ (\lf(x^2)- \lf(x)), (\lf(x^2) + \lf(x)) ,(1-\lf(x^2)) \geq 0$.}
\end{figure}
\end{example}

This example demonstrates something more interesting. Even if we have the polytope structure of $\K_t$ this does not imply that the generic elements have the flat extension property. In particular, every solution in the example above could be obtained with the right linear function, but not for the constant $1$ optimization, since we do not have flat extension for generic elements.

Next, we want to show that at all extreme points of $\K$ are also extreme points of $\K_t$, if $t \geq 2$. Hence, if we can always find the solutions of a polynomial system by shooting in the right direction. Of course, this is not practical, as shown by example \ref{exp:K_t_not_polyhedron}, since the normal cones of the extreme points of $\K_3$ corresponding to the solutions of $h_1$ have combined measure $0$. 

\begin{proposition}
\label{prop:Vertices_are_mapped_to_vertices}
Let $t\ge2$ be even. For every
$\mathfrak m\in\mathcal V_{\RA}(I)$, the truncated evaluation
$\lf_{\mathfrak m}|_{\RA_t}$ is an extreme point of $\K_t$.
\end{proposition}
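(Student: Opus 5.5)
Set $k:=t/2\ge1$; then $M_k(\lf)$ is the moment matrix constraint in $\K_t$, and $\RA_t=\RA_{2k}$. The aim is to show that if $\lf_{\mathfrak m}|_{\RA_t}=\tfrac12(\lf'+\lf'')$ with $\lf',\lf''\in\K_t$, then $\lf'=\lf''=\lf_{\mathfrak m}|_{\RA_t}$. The separating polynomial will be the one from Lemma~\ref{lem:quadratic-objectives-expose-roots},
\[
 q_{\mathfrak m}=\sum_{\alpha_i\in\A}\bigl(x^{\alpha_i}-\lf_{\mathfrak m}(x^{\alpha_i})\bigr)^2\in\RA_2\subseteq\RA_t .
\]
It is a sum of squares of elements $p_i:=x^{\alpha_i}-\lf_{\mathfrak m}(x^{\alpha_i})\in\RA_1\subseteq\RA_k$, so every $\lf\in\K_t$ satisfies $\lf(q_{\mathfrak m})\ge0$. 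Also $\lf_{\mathfrak m}(q_{\mathfrak m})=0$. Hence $0=\tfrac12(\lf'(q_{\mathfrak m})+\lf''(q_{\mathfrak m}))$ with both terms nonnegative, so $\lf'(p_i^2)=\lf''(p_i^2)=0$ for all $i$. By positive semidefiniteness of $M_k(\lf')$, each $p_i$ lies in $\ker M_k(\lf')$, and likewise for $\lf''$. This uses that $p_i\in\RA_k$, which is where $t\ge2$ is needed.

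Next, I will show that any $\lf\in\K_t$ with $p_1,\dots,p_m\in\ker M_k(\lf)$ agrees with $\lf_{\mathfrak m}$ on all of $\RA_{2k}$. It suffices to treat monomials $x^w=x^{\alpha_{w_1}}\cdots x^{\alpha_{w_\ell}}$ with $\ell\le 2k$. Write $\beta_i:=\lf_{\mathfrak m}(x^{\alpha_i})$. The claim is that $\lf(x^w)=\prod_j\beta_{w_j}$, and I will prove it by peeling off one generator at a time. Split $x^w=u\cdot v$ with $u,v\in\RA_k$ monomials, say $v=x^{\alpha_i}v'$ with $v'\in\RA_{k-1}$. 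Since $p_i\in\ker M_k(\lf)$ and $uv'\in\RA_{2k-1}$, we need $\lf(p_i\,g)=0$ for $g$ up to degree $2k-1$, not just $g\in\RA_k$. For this I would use the truncated kernel rule from the proof of Lemma~\ref{lem:eventual-radical-kernel}, which moves a factor of degree one across. A cleaner route avoids that rule: argue by induction on $\ell$ that $\lf(x^w)=\beta_{w_\ell}\lf(x^{w'})$. Write $x^w=a\cdot x^{\alpha_i}\cdot b$ with $a\in\RA_k$ and $b\in\RA_{k-1}$, so $x^{\alpha_i}b\in\RA_k$. Then $\lf(a\,x^{\alpha_i}b)=\lf(a\,p_i b)+\beta_i\lf(ab)$, and it remains to show $\lf(a\,p_ib)=0$, i.e.\ that $p_ib\in\ker M_k(\lf)$.

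This step, $p_ib\in\ker M_k(\lf)$ for $b\in\RA_{k-1}$, is the main obstacle; it is a truncated ideal property. I would first try the Cauchy--Schwarz trick: $\lf((p_ib)^2)=\lf(p_i\cdot p_ib^2)$, and $p_ib^2$ has degree up to $2k-1$, which can exceed $k$, so the kernel property does not apply directly. The fallback is an induction on $\deg b$ in the style of Lemma~\ref{lem:eventual-radical-kernel}. For $b=x^{\alpha_j}b'$, use $\lf((p_ib)^2)=\lf(p_ib'\cdot x^{2\alpha_j}p_ib')$ together with the inductive hypothesis that $p_ib'$ lies in the kernel, provided $x^{2\alpha_j}p_ib'\in\RA_k$. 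That requires degrees $\le k$, and this degree loss seems to cap the induction. If it fails at the top degrees, I will instead exploit that all $p_i$ are in the kernel simultaneously. Then $x^{\alpha_j}\equiv\beta_j$ modulo $\ker M_k(\lf)$ on columns, so every column of $M_k(\lf)$ indexed by a monomial of degree $\le k$ can be reduced step by step to $\prod\beta\cdot(\text{column of }1)$; each step replaces the last generator using $p_j\in\ker$, whose column is zero, and stays inside the index set $\RA_k$. This shows $M_k(\lf)$ has rank one, with all columns proportional to the column of $1$ via the $\beta$-products. Then every entry $\lf(x^ux^v)$ with $u,v\in\RA_k$ equals $\prod\beta_u\prod\beta_v\lf(1)=\lf_{\mathfrak m}(x^ux^v)$. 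Since every monomial of $\RA_{2k}$ factors as such a product, $\lf=\lf_{\mathfrak m}|_{\RA_t}$.

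Applied to both $\lf'$ and $\lf''$, this shows $\lf'=\lf''=\lf_{\mathfrak m}|_{\RA_t}$, which proves extremality.
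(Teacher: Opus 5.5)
\textbf{There is a genuine gap in your second step.} Your first step is fine: from $\lf_{\mathfrak m}|_{\RA_t}=\tfrac12(\lf'+\lf'')$ and $\lf_{\mathfrak m}(q_{\mathfrak m})=0$ you correctly get $p_i\in\ker M_k(\lf')\cap\ker M_k(\lf'')$. The problem is not a missing justification: the claim that every $\lf\in\K_t$ with $p_1,\dots,p_m\in\ker M_k(\lf)$ agrees with $\lf_{\mathfrak m}$ on $\RA_{2k}$ is false.

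Here is a counterexample. Take $\A=\{e_1\}$, $I=\langle x(x-1)(x-2)(x-3)(x-4)\rangle$, $t=4$ (so $k=2$) and $\mathfrak m=\langle x\rangle$. The generator has degree $5>t$, so $\mathcal H_4=\emptyset$. Define $\lf$ by $\lf(1)=\lf(x^4)=1$ and $\lf(x)=\lf(x^2)=\lf(x^3)=0$. Then $M_2(\lf)=\mathrm{diag}(1,0,1)\succeq0$, so $\lf\in\K_4$, and $p_1=x\in\ker M_2(\lf)$. Yet $\lf(x^4)\neq0$ and $M_2(\lf)$ has rank two.

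Your column-reduction fallback breaks exactly at this point. The relation $p_j\in\ker M_k(\lf)$ only says that the column of $x^{\alpha_j}$ equals $\beta_j$ times the column of $1$. To replace the last generator in the column of $x^{\alpha_j}x^{\gamma'}$ with $\deg_\A(x^{\gamma'})=k-1$, you need $x^{\gamma'}p_j\in\ker M_k(\lf)$. That is the same truncated-ideal property you had flagged as the obstacle; the kernel rule loses one degree. The example shows it really fails at the top degree: the column of $x^2$ is $(0,0,1)^{\top}$, not $0$. The proposition survives only because such an $\lf$ cannot occur in a convex decomposition of $\lf_{\mathfrak m}$. That is precisely the information you discarded after extracting the degree-one relations.

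\textbf{How to repair it.} Run your own sum-of-squares argument with
\[
Q:=\sum_{\gamma\in\A_k}\bigl(x^\gamma-\lf_{\mathfrak m}(x^\gamma)\bigr)^2 \in \RA_{2k}=\RA_t ,
\]
which is a sum of squares of elements of $\RA_k$ and vanishes under $\lf_{\mathfrak m}$. The same reasoning puts every $x^\gamma-\lf_{\mathfrak m}(x^\gamma)$, for $\gamma\in\A_k$, into $\ker M_k(\lf')$. Then each column of $M_k(\lf')$ is directly $\lf_{\mathfrak m}(x^\gamma)$ times the column of $1$. Hence $\lf'(x^\gamma x^\delta)=\lf_{\mathfrak m}(x^\gamma x^\delta)$ for all $\gamma,\delta\in\A_k$, with no induction needed.

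This is the paper's argument in another form. The paper observes that the rank-one matrix $M_{t/2}(\lf_{\mathfrak m})$ is a convex combination of positive semidefinite matrices. So its kernel, which is spanned by exactly these elements $x^\gamma-\lf_{\mathfrak m}(x^\gamma)$, lies in the kernel of each summand. That forces each summand to be a nonnegative multiple of $M_{t/2}(\lf_{\mathfrak m})$, and the normalization at $1$ makes the multiple equal to $1$.
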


\begin{proof}
Suppose
\[
 \lf_{\mathfrak m}|_{\RA_t}=\lambda\lf+(1-\lambda)\lf',
 \qquad 0<\lambda<1,
\]
with $\lf,\lf'\in\K_t$. Then we have that 
$
 M_{t/2}(\lf_{\mathfrak m})
 =\lambda M_{t/2}(\lf)+(1-\lambda)M_{t/2}(\lf').
$
Since $M_{t/2}(\lf_{\mathfrak m})$ has rank one, positivity gives
\[
 \ker\bigl(M_{t/2}(\lf_{\mathfrak m})\bigr)
 \subseteq\ker\bigl(M_{t/2}(\lf)\bigr)
 \cap\ker\bigl(M_{t/2}(\lf')\bigr).
\]
Thus each matrix on the right is a nonnegative multiple of
$M_{t/2}(\lf_{\mathfrak m})$. As all three functionals take the value $1$ at
$1$, these multiples are equal to $1$. Since $t$ is even, every monomial in $\RA_t$ is a
product of two monomials in $\RA_{t/2}$. Thus every value of the functionals
occurs as an entry of these matrices, and hence
$\lf=\lf'=\lf_{\mathfrak m}|_{\RA_t}$.
\end{proof}

The previous proposition shows that every real solution defines an
extreme point of $\K_t$ when $t$ is even. Thus, as in the preceding example,
it may be possible to recover a solution by choosing an appropriate objective
function, even if the generic elements of $\K_t$ do not satisfy the flat
extension property. However, the fact that the point is extreme does not
imply that it is obtained with positive probability when the objective is
chosen at random, since its normal cone may have solid angle zero.

%% file: Sections/Experiments.tex
\section{Experiments}
We perform numerical experiments to validate our approach in two cases.
We begin by analyzing sparse polynomial systems, comparing the dense versus sparse implementation of Algorithm \ref{AlgorithmMomentApproachForLx} in MATLAB \cite{Matlab}.
We show that exploiting sparsity leads to smaller moment matrices and faster termination.
Subsequently, we compare the accuracy of single-root extraction against all-root extraction using \texttt{MomentPolynomialOpt} \cite{MomentPolynomialOpt}, a Julia-based package \cite{bezanson2017julia}. All experiments were performed on a Lenovo IdeaPad 120S-14IAP with an Intel Pentium N4200 quad-core CPU (1.10 GHz), 4 GB RAM, and integrated Intel HD Graphics, running Windows 10 Home on an x86-64 architecture.

\subsection{Computational Experiments: Classical vs. Sparse}
\label{SectionComputationalResults}

\begin{figure}[t]
\begin{subfigure}{0.5\textwidth}
\centering
\begin{tikzpicture}[scale = 0.4]
\draw[help lines] (-1,-1)grid(8,8);
\draw[] (-1,0)-- (8,0);
\draw[] (0,-1)-- (0,8);

\foreach \x/\y/\Name in {0/1/, 1/0/, 0/0/0}
\node[circle, fill= red!100,inner sep=1.8pt, 
label={[anchor=60, text=black]:$\Name$}
]at (\x,\y){};
\foreach \x/\y in {0/2/, 2/0/, 1/1}
\node[circle, fill= red!75,inner sep=1.8pt] at (\x,\y){};
\foreach \x/\y in {0/3/, 3/0/, 2/1,1/2}
\node[circle, fill= red!50,inner sep=1.8pt] at (\x,\y){};
\foreach \x/\y in {0/4/, 4/0/, 3/1, 1/3, 2/2}
\node[circle, fill= red!25,inner sep=1.8pt] at (\x,\y){};
\end{tikzpicture}
\caption{$\mathcal{D} = \{(1,0),(0,1)\}$}
  \label{fig:sub-first}
\end{subfigure}
\begin{subfigure}{.5\textwidth}
\centering
\begin{tikzpicture}[scale = 0.4]
\draw[help lines] (-1,-1)grid(8,8);
\draw[] (-1,0)-- (8,0);
\draw[] (0,-1)-- (0,8);

\foreach \x/\y/\Name in {1/0/, 1/1/, 0/1/, 0/0/0}
\node[circle, fill= blue!100,inner sep=1.8pt, 
label={[anchor=60, text=black]:$\Name$}
]at (\x,\y){};
\foreach \x/\y in {2/2/, 2/1/, 2/0,0/2, 1/2,0/2}
\node[circle, fill= blue!75,inner sep=1.8pt] at (\x,\y){};
\foreach \x/\y in {3/3/, 3/2/, 3/1,3/0,0/3,1/3,2/3}
\node[circle, fill= blue!50,inner sep=1.8pt] at (\x,\y){};
\foreach \x/\y in {4/4/, 4/3/, 4/2, 4/1, 4/0,0/4,1/4,2/4,3/4}
\node[circle, fill= blue!25,inner sep=1.8pt] at (\x,\y){};
\end{tikzpicture}
\caption{$\A_1 = \{(1,0),(0,1),(1,1)\}$}
\end{subfigure}

\begin{subfigure}{.5\textwidth}
\centering
\begin{tikzpicture}[scale = 0.4]
\draw[help lines] (-1,-1)grid(8,8);
\draw[] (-1,0)-- (8,0);
\draw[] (0,-1)-- (0,8);

\foreach \x/\y/\Name in {1/0/, 1/1/, 0/0/0}
\node[circle, fill= blue!100,inner sep=1.8pt, 
label={[anchor=60, text=black]:$\Name$}
]at (\x,\y){};
\foreach \x/\y in {2/2/, 2/1/, 2/0}
\node[circle, fill= blue!75,inner sep=1.8pt] at (\x,\y){};
\foreach \x/\y in {3/3/, 3/2/, 3/1,3/0}
\node[circle, fill= blue!50,inner sep=1.8pt] at (\x,\y){};
\foreach \x/\y in {4/4/, 4/3/, 4/2, 4/1, 4/0}
\node[circle, fill= blue!25,inner sep=1.8pt] at (\x,\y){};
\end{tikzpicture}
\caption{$\A_2 = \{(1,0),(1,1)\}$}
\end{subfigure}
\begin{subfigure}{.5\textwidth}
\centering
\begin{tikzpicture}[scale = 0.4]
\draw[help lines] (-1,-1)grid(8,8);
\draw[] (-1,0)-- (8,0);
\draw[] (0,-1)-- (0,8);

\foreach \x/\y/\Name in {2/1/, 1/1/, 1/2/, 0/0/0}
\node[circle, fill= blue!100,inner sep=1.8pt, 
label={[anchor=60, text=black]:$\Name$}
]at (\x,\y){};
\foreach \x/\y in {4/2/, 2/2/, 3/2,2/3,3/3, 2/4}
\node[circle, fill= blue!75,inner sep=1.8pt] at (\x,\y){};
\foreach \x/\y in {4/3/, 5/3,6/3,3/4,4/4,5/4,4/5,3/6,3/5}
\node[circle, fill= blue!50,inner sep=1.8pt] at (\x,\y){};
\foreach \x/\y in {4/8, 5/7,4/7,6/6,5/6,4/6,7/5,6/5,5/5,8/4,7/4,6/4}
\node[circle, fill= blue!25,inner sep=1.8pt] at (\x,\y){};
\end{tikzpicture}
\caption{$\A_3 = \{(2,1),(1,1),(1,2)\}$}
\end{subfigure}
\caption{The monomials up to degree four for each exponent set are drawn.}
\label{Fig:Exponent_Sets}
\end{figure}

In this section, we present computational experiments comparing Algorithm \ref{AlgorithmMomentApproachForLx} with the classical \cite[Algorithm 1]{laurent2012approach}. Our implementations, written in MATLAB \cite{Matlab}, are available on GitHub via \cite{GitHubSolver} and utilize the SDP solver SeDuMi \cite{SeDuMi}. Since this implementation is written as a prototype, we focus on small problems, specifically \( n=2 \) with two equations. The plots were generated using 20--30 instances out of a total of approximately 50--60 test cases.
The accuracy achieved by both algorithms was comparable, ranging
between \(10^{-3}\) and \(10^{-7}\). In practice, this tolerance could
not be improved further, as doing so often led to instability of the
SDP solver, particularly for dense instances. While higher accuracy
settings often succeeded for sparse instances, they rarely did so for
dense ones, resulting in significantly fewer successful runs and thus
reducing the number of meaningful comparisons. These numerical issues
stem from limitations that are inherent to SDP solvers, as discussed
in \cite{SeDuMiProblems}.
We emphasize that only instances for which at least one algorithm terminated successfully are included in the evaluation. Numerical difficulties were encountered frequently, with only about $50\%$ of all test cases terminating successfully. Moreover, whenever the dense approach terminated, the sparse approach almost always terminated as well, whereas the converse was not true. This indicates substantially poorer robustness of the dense approach.

For this reason, we postpone a detailed discussion of numerical errors to the next section, where we compare the strategies of computing all solutions versus computing only a single solution.

In the following, we consider three performance measures: the solving time of the largest SDP instance encountered by an algorithm, denoted by T(SDP); the total running time, denoted by T(total); and the dimension of the largest SDP instance. The first two quantities are compared using box plots, which also indicate outliers. To visualize the distribution of SDP dimensions, we use a clustered scatter plot in which the size of each circle is proportional to the number of experiments that produced the corresponding dimension.

We consider four different families of sparse systems with exponent sets with varying on degree and sparsity; see Figure \ref{Fig:Exponent_Sets}. The first one, 
\(\mathcal{D}\), corresponds to the dense case and serves as a baseline
for the exponent sets. 
The next one, $\A_1$, corresponds to systems with an affine
bihomogeneous structure.
The other two, \(\A_2\), and \(\A_3\), are families with monomials belonging to certain cones.
As discussed in Section~\ref{sec:realtoricvar}, the solution sets of the systems might change according to where we look for the solutions.
However, all these varieties will share the same torus $(\R^*)^n$. 
Under certain assumptions on the structure of the toric variety and for generic choices of the coefficients of the system, all the solutions will belong to this torus, so we can recover the same solutions regardless from where we look at them.
In particular, these assumptions hold for our families of random examples, so we do not go deeper into this issue.

We shall show with the figures that both the dimension and the number of iterations are significantly smaller for the sparse formulation than for the dense one. 
In particular, these values indicate how many and how difficult are the instances of SDP problems to solve; the smaller they are, the smaller the overall running times.
In these figures, we use the color blue to denote the behavior of the sparse algorithm and the red one for the dense.


We start our discussion by considering sparse polynomials supported on $\A_1$ for degrees two, three, four and five and compare the algorithms for the exponent vectors $\mathcal{D}$ (dense case, represented in red) and $\A_1$ (sparse case, represented in blue), see Figure \ref{fig:A_1}.
Overall, our sparsity-exploiting algorithm reduces problem size and computation time, often enhancing robustness.

\usepgfplotslibrary{groupplots}
\usepgfplotslibrary{statistics}


\begin{figure}[h]
\centering

\begin{tikzpicture}[scale = 0.7]

\pgfplotsset{
  boxplot A/.style={
    draw=blue,
    fill=blue!20,
  },
  boxplot B/.style={
    draw=red,
    fill=red!20,
  }
}

\begin{scope}[xshift=0cm]
\begin{axis}[
    width=10cm,
    height=6.5cm,
    boxplot/draw direction=x,
    xmode=log,
    log basis x=10,
    ytick={1,2,3,4},
    yticklabels={$2 \cdot \A_1$, $3 \cdot \A_1$, $4 \cdot \A_1$, $5 \cdot \A_1$},
    xlabel={T(SDP)},
    ylabel={},
    boxplot/every box/.style={solid, line width=0.4pt},
    boxplot/every whisker/.style={solid, line width=0.4pt},
    boxplot/every median/.style={solid, line width=0.4pt},
    boxplot/every average/.style={solid, line width=0.4pt},
    boxplot/every outlier/.style={
      only marks,
      mark=*,
      mark size=1.5pt,
    },
  ]

    \addplot+[
      boxplot prepared={
        lower quartile=0.011156,
        upper quartile=0.017454,
        median=0.012517,
        lower whisker=0.001709,
        upper whisker=0.026901
      },
      boxplot/draw position=1,
      boxplot A
    ] coordinates {
      (1,0.077755)
      (1,0.426787)
    };

    \addplot+[
      boxplot prepared={
        lower quartile=0.056085,
        upper quartile=0.092597,
        median=0.067855,
        lower whisker=0.00131700000000001,
        upper whisker=0.147365
      },
      boxplot/draw position=1,
      boxplot B
    ] coordinates {
      (1,0.520424)
    };

    \addplot+[
      boxplot prepared={
        lower quartile=0.026562,
        upper quartile=0.031413,
        median=0.028931,
        lower whisker=0.0192855,
        upper whisker=0.0386895
      },
      boxplot/draw position=2,
      boxplot A
    ] coordinates {
      (2,0.039469)
    };

    \addplot+[
      boxplot prepared={
        lower quartile=0.089846,
        upper quartile=0.145077,
        median=0.116379,
        lower whisker=0.00699949999999998,
        upper whisker=0.2279235
      },
      boxplot/draw position=2,
      boxplot B
    ] coordinates {
      (2,1.27706)
    };

    \addplot+[
      boxplot prepared={
        lower quartile=0.019273,
        upper quartile=0.023598,
        median=0.0207735,
        lower whisker=0.0127855,
        upper whisker=0.0300855
      },
      boxplot/draw position=3,
      boxplot A
    ] coordinates {};

    \addplot+[
      boxplot prepared={
        lower quartile=5.953789,
        upper quartile=8.21909025,
        median=7.150725,
        lower whisker=2.555837125,
        upper whisker=11.617042125
      },
      boxplot/draw position=3,
      boxplot B
    ] coordinates {};

    \addplot+[
      boxplot prepared={
        lower quartile=0.086132,
        upper quartile=0.117476,
        median=0.100869,
        lower whisker=0.039116,
        upper whisker=0.164492
      },
      boxplot/draw position=4,
      boxplot A
    ] coordinates {
      (4,0.17554)
    };

    \addplot+[
      boxplot prepared={
        lower quartile=6.632501,
        upper quartile=8.393343,
        median=7.069437,
        lower whisker=3.991238,
        upper whisker=11.034606
      },
      boxplot/draw position=4,
      boxplot B
    ] coordinates {
      (4,30.18719)
      (4,28.501574)
    };

  \end{axis}
\end{scope}

\begin{scope}[xshift=10cm]
\begin{axis}[
    width=10cm,
    height=6.5cm,
    boxplot/draw direction=x,
    xmode=log,
    log basis x=10,
    ytick={1,2,3,4},
    yticklabels={$2 \cdot \A_1$, $3 \cdot \A_1$, $4 \cdot \A_1$, $5 \cdot \A_1$},
    xlabel={T(total)},
    ylabel={},
    boxplot/every box/.style={solid, line width=0.4pt},
    boxplot/every whisker/.style={solid, line width=0.4pt},
    boxplot/every median/.style={solid, line width=0.4pt},
    boxplot/every average/.style={solid, line width=0.4pt},
    boxplot/every outlier/.style={
      only marks,
      mark=*,
      mark size=1.5pt,
    },
  ]

    \addplot+[
      boxplot prepared={
        lower quartile=0.076767,
        upper quartile=0.110222,
        median=0.081773,
        lower whisker=0.0265845,
        upper whisker=0.1604045
      },
      boxplot/draw position=1,
      boxplot A
    ] coordinates {
      (1,0.929868)
      (1,3.423639)
    };

    \addplot+[
      boxplot prepared={
        lower quartile=0.531767,
        upper quartile=0.631474,
        median=0.577068,
        lower whisker=0.4010915,
        upper whisker=0.7697035
      },
      boxplot/draw position=1,
      boxplot B
    ] coordinates {
      (1,3.600385)
      (1,1.451278)
    };

    \addplot+[
      boxplot prepared={
        lower quartile=0.314363,
        upper quartile=0.342645,
        median=0.327693,
        lower whisker=0.27194,
        upper whisker=0.385068
      },
      boxplot/draw position=2,
      boxplot A
    ] coordinates {};

    \addplot+[
      boxplot prepared={
        lower quartile=0.801589,
        upper quartile=0.928889,
        median=0.872876,
        lower whisker=0.610639,
        upper whisker=1.119839
      },
      boxplot/draw position=2,
      boxplot B
    ] coordinates {
      (2,4.73737)
    };

    \addplot+[
      boxplot prepared={
        lower quartile=0.22703075,
        upper quartile=0.24068625,
        median=0.2374635,
        lower whisker=0.2065475,
        upper whisker=0.2611695
      },
      boxplot/draw position=3,
      boxplot A
    ] coordinates {};

    \addplot+[
      boxplot prepared={
        lower quartile=14.2427185,
        upper quartile=16.6255875,
        median=15.0726845,
        lower whisker=10.668415,
        upper whisker=20.199891
      },
      boxplot/draw position=3,
      boxplot B
    ] coordinates {};

    \addplot+[
      boxplot prepared={
        lower quartile=1.029194,
        upper quartile=1.098228,
        median=1.055226,
        lower whisker=0.925643,
        upper whisker=1.201779
      },
      boxplot/draw position=4,
      boxplot A
    ] coordinates {
      (4,1.243324)
      (4,1.341967)
    };

    \addplot+[
      boxplot prepared={
        lower quartile=21.557136,
        upper quartile=26.558802,
        median=23.294806,
        lower whisker=14.054637,
        upper whisker=34.061301
      },
      boxplot/draw position=4,
      boxplot B
    ] coordinates {
      (4,99.360872)
      (4,56.825563)
    };

  \end{axis}
\end{scope}

\begin{scope}[shift={(5cm,-6.5cm)}]
  \newcommand{\bluepoint}[3]{%
    \addplot+[
      only marks,
      mark=*,
      mark size=#1,
      draw=blue!30,
      fill=blue!30,
    ] coordinates {(#2,#3)};
  }

  \newcommand{\redpoint}[3]{%
    \addplot+[
      only marks,
      mark=*,
      mark size=#1,
      draw=red!30,
      fill=red!30,
    ] coordinates {(#2,#3)};
  }

  \begin{axis}[
    width=10cm,
    height=6.5cm,
    xmode=log,
    log basis x=10,
    ytick={1,2,3,4},
    yticklabels={$2 \cdot \A_1$, $3 \cdot \A_1$, $4 \cdot \A_1$, $5 \cdot \A_1$},
    xlabel={Dimension},
    ylabel={},
    cycle list={}, 
  ]


    \bluepoint{8pt}{108}{1}   
    \bluepoint{1pt}{708}{1}
    \bluepoint{1pt}{1419}{1}
    \bluepoint{10pt}{307}{2}
    \bluepoint{10pt}{307}{3}
    \bluepoint{10pt}{708}{4}

    \redpoint{8pt}{509}{1}
    \redpoint{1pt}{1418}{1}
    \redpoint{1pt}{877}{1}
    \redpoint{9pt}{521}{2}
    \redpoint{1pt}{1418}{2}
    \redpoint{10pt}{2180}{3}
    \redpoint{8pt}{2198}{4}
    \redpoint{1pt}{3217}{4}
    \redpoint{1pt}{3237}{4}

  \end{axis}
\end{scope}

\end{tikzpicture}
\caption{Comparison of the sparse (blue) and dense (red) approaches,
  with respect to.~$\A_1$. The sparse approach consistently yields smaller SDP instances and significantly lower computation times, reducing T(SDP) by a factor of 4--10 and T(total) by a factor of 4--17.}
\label{fig:A_1}
\end{figure}



\begin{figure}[H]
\centering

\begin{tikzpicture}[scale = 0.7]

\pgfplotsset{
  boxplot A/.style={
    draw=blue,
    fill=blue!20,
  },
  boxplot B/.style={
    draw=red,
    fill=red!20,
  }
}

\begin{scope}[xshift=0cm]
\begin{axis}[
  width=10cm,
  height=6.5cm,
  boxplot/draw direction=x,
  xmode=log,
  log basis x=10,
  ytick={1,2,3,4},
  yticklabels={$2 \cdot \A_2$, $3 \cdot \A_2$, $4 \cdot \A_2$, $5 \cdot \A_2$},
  xlabel={T(SDP)},
  boxplot/every box/.style={solid, line width=0.4pt},
  boxplot/every whisker/.style={solid, line width=0.4pt},
  boxplot/every median/.style={solid, line width=0.4pt},
  boxplot/every average/.style={solid, line width=0.4pt},
  boxplot/every outlier/.style={
    only marks,
    mark=*,
    mark size=1.5pt,
  },
]


\addplot+[boxplot prepared={
  lower quartile=0.00878275,
  upper quartile=0.012243,
  median=0.0104445,
  lower whisker=0.003592375,
  upper whisker=0.017433375
}, boxplot/draw position=1, boxplot A]
coordinates {(1,0.021476)};

\addplot+[boxplot prepared={
  lower quartile=0.329455,
  upper quartile=0.4754145,
  median=0.358753,
  lower whisker=0.11051575,
  upper whisker=0.69435375
}, boxplot/draw position=1, boxplot B]
coordinates {(1,2.113677)};

\addplot+[boxplot prepared={
  lower quartile=0.00828625,
  upper quartile=0.01209875,
  median=0.00963,
  lower whisker=0.0025675,
  upper whisker=0.0178175
}, boxplot/draw position=2, boxplot A]
coordinates {(2,0.02646) (2,0.022011)};

\addplot+[boxplot prepared={
  lower quartile=0.3403435,
  upper quartile=0.65379125,
  median=0.4345445,
  lower whisker=0.1298281,
  upper whisker=1.123962875
}, boxplot/draw position=2, boxplot B]
coordinates {(2,3.434205)};

\addplot+[boxplot prepared={
  lower quartile=0.018381,
  upper quartile=0.0231955,
  median=0.0199115,
  lower whisker=0.01115925,
  upper whisker=0.03041725
}, boxplot/draw position=3, boxplot A]
coordinates {(3,0.039048)};

\addplot+[boxplot prepared={
  lower quartile=25.0023885,
  upper quartile=38.408275,
  median=32.0111529642857,
  lower whisker=4.89355875,
  upper whisker=58.51710475
}, boxplot/draw position=3, boxplot B]
coordinates {(3,105.908467)};

\addplot+[boxplot prepared={
  lower quartile=0.00998775,
  upper quartile=0.0112495166666667,
  median=0.010427,
  lower whisker=0.00809509999999999,
  upper whisker=0.0131421666666667
}, boxplot/draw position=4, boxplot A]
coordinates {(4,0.01459)};

\addplot+[boxplot prepared={
  lower quartile=22.7832585,
  upper quartile=26.77523875,
  median=23.8628935,
  lower whisker=16.795288125,
  upper whisker=32.763209125
}, boxplot/draw position=4, boxplot B]
coordinates {};


\end{axis}
\end{scope}

\begin{scope}[xshift=10cm]
\begin{axis}[
  width=10cm,
  height=6.5cm,
  boxplot/draw direction=x,
  xmode=log,
  log basis x=10,
  ytick={1,2,3,4},
  yticklabels={$2 \cdot \A_2$, $3 \cdot \A_2$, $4 \cdot \A_2$, $5 \cdot \A_2$},
  xlabel={T(total)},
  boxplot/every box/.style={solid, line width=0.4pt},
  boxplot/every whisker/.style={solid, line width=0.4pt},
  boxplot/every median/.style={solid, line width=0.4pt},
  boxplot/every average/.style={solid, line width=0.4pt},
  boxplot/every outlier/.style={
    only marks,
    mark=*,
    mark size=1.5pt,
  },
]


\addplot+[boxplot prepared={
  lower quartile=0.8245405,
  upper quartile=0.88457875,
  median=0.8546905,
  lower whisker=0.734483125,
  upper whisker=0.974636125
}, boxplot/draw position=1, boxplot A]
coordinates {(1,1.299376) (1,2.764311) (1,1.540671)};

\addplot+[boxplot prepared={
  lower quartile=1.6442425,
  upper quartile=1.9102395,
  median=1.722739,
  lower whisker=1.245247,
  upper whisker=2.309235
}, boxplot/draw position=1, boxplot B]
coordinates {(1,6.33949)};

\addplot+[boxplot prepared={
  lower quartile=0.1176465,
  upper quartile=0.13796125,
  median=0.1249495,
  lower whisker=0.087174375,
  upper whisker=0.168433375
}, boxplot/draw position=2, boxplot A]
coordinates {(2,0.220243) (2,0.17802) (2,0.217115)};

\addplot+[boxplot prepared={
  lower quartile=1.472416,
  upper quartile=2.0556135,
  median=1.658308,
  lower whisker=0.59761975,
  upper whisker=2.93040975
}, boxplot/draw position=2, boxplot B]
coordinates {(2,9.201767) (2,3.581409)};

\addplot+[boxplot prepared={
  lower quartile=2.007106,
  upper quartile=2.19573925,
  median=2.091512,
  lower whisker=1.724156125,
  upper whisker=2.478689125
}, boxplot/draw position=3, boxplot A]
coordinates {};

\addplot+[boxplot prepared={
  lower quartile=55.33304175,
  upper quartile=69.196781,
  median=63.4019397142857,
  lower whisker=34.537432875,
  upper whisker=89.992389875
}, boxplot/draw position=3, boxplot B]
coordinates {(3,230.571102)};

\addplot+[boxplot prepared={
  lower quartile=0.3445685,
  upper quartile=0.357396,
  median=0.3520065,
  lower whisker=0.32532725,
  upper whisker=0.37663725
}, boxplot/draw position=4, boxplot A]
coordinates {(4,0.388936)};

\addplot+[boxplot prepared={
  lower quartile=45.98361525,
  upper quartile=52.1849487,
  median=47.691679,
  lower whisker=36.68161505,
  upper whisker=61.4869489
}, boxplot/draw position=4, boxplot B]
coordinates {(4,85.650473)};


\end{axis}
\end{scope}

\begin{scope}[shift={(5cm,-6.5cm)}]
  \newcommand{\bluepoint}[3]{%
    \addplot+[
      only marks,
      mark=*,
      mark size=#1,
      draw=blue!30,
      fill=blue!30,
    ] coordinates {(#2,#3)};
  }

  \newcommand{\redpoint}[3]{%
    \addplot+[
      only marks,
      mark=*,
      mark size=#1,
      draw=red!30,
      fill=red!30,
    ] coordinates {(#2,#3)};
  }

  \begin{axis}[
    width=10cm,
    height=6.5cm,
    xmode=log,
    log basis x=10,
    ytick={1,2,3,4},
    yticklabels={$2 \cdot \A_2$, $3 \cdot \A_2$, $4 \cdot \A_2$, $5 \cdot \A_2$},
    xlabel={Dimension},
    ylabel={},
    cycle list={}, 
  ]


    \bluepoint{10pt}{130}{1}   
    \bluepoint{10pt}{53}{2}
    \bluepoint{10pt}{272}{3}
    \bluepoint{10pt}{130}{4}

    \redpoint{9pt}{877}{1}
    \redpoint{1pt}{1418}{1}
    \redpoint{8pt}{877}{2}
    \redpoint{1pt}{891}{2}
    \redpoint{1pt}{1418}{2}
    \redpoint{9pt}{3217}{3}
    \redpoint{1pt}{4589}{3}
    \redpoint{9pt}{3217}{4}
    \redpoint{1pt}{3237}{4}

  \end{axis}
\end{scope}

\end{tikzpicture}
\label{fig:A_2}
\caption{Comparison of the sparse (blue) and dense (red) approaches with respect to \(\A_2\). The sparse approach consistently outperforms the dense approach across all experiments, yielding smaller SDP instances and substantially reduced values of both T(SDP) and T(total).}
\end{figure}



\begin{figure}[h]
\centering

\begin{tikzpicture}[scale = 0.8]

\pgfplotsset{
  boxplot A/.style={
    draw=blue,
    fill=blue!20,
  },
  boxplot B/.style={
    draw=red,
    fill=red!20,
  }
}

\begin{scope}[xshift=0cm]
\begin{axis}[
    width=10cm,
    height=3.5cm,
    boxplot/draw direction=x,
    xmode=log,
    log basis x=10,
    ytick={1,2},
    yticklabels={$2 \cdot \A_3$, $3 \cdot \A_3$},
    xlabel={T(SDP)},
    ylabel={},
    boxplot/every box/.style={solid, line width=0.4pt},
    boxplot/every whisker/.style={solid, line width=0.4pt},
    boxplot/every median/.style={solid, line width=0.4pt},
    boxplot/every average/.style={solid, line width=0.4pt},
    boxplot/every outlier/.style={
      only marks,
      mark=*,
      mark size=1.5pt,
    },
  ]

    \addplot+[
      boxplot prepared={
        lower quartile=0.032954,
        upper quartile=0.050902,
        median=0.035153,
        lower whisker=0.00603199999999999,
        upper whisker=0.077824
      },
      boxplot/draw position=1,
      boxplot A
    ] coordinates {
      (1,0.396299)
      (1,0.07866)
    };

    \addplot+[
      boxplot prepared={
        lower quartile=1.954369,
        upper quartile=11.537827,
        median=5.382997,
        lower whisker=0.84083,
        upper whisker=25.913014
      },
      boxplot/draw position=1,
      boxplot B
    ] coordinates {
      (1,62.512684)
    };

    \addplot+[
      boxplot prepared={
        lower quartile=0.010598,
        upper quartile=0.012299,
        median=0.01133,
        lower whisker=0.0080465,
        upper whisker=0.0148505
      },
      boxplot/draw position=2,
      boxplot A
    ] coordinates {
      (2,0.015488)
    };

    \addplot+[
      boxplot prepared={
        lower quartile=7.826219,
        upper quartile=31.572553,
        median=9.065106,
        lower whisker=6.309509,
        upper whisker=67.192054
      },
      boxplot/draw position=2,
      boxplot B
    ] coordinates {};

  \end{axis}
\end{scope}

\begin{scope}[xshift=10cm]
\begin{axis}[
    width=10cm,
    height=3.5cm,
    boxplot/draw direction=x,
    xmode=log,
    log basis x=10,
    ytick={1,2},
    yticklabels={$2 \cdot \A_3$, $3 \cdot \A_3$},
    xlabel={T(total)},
    ylabel={},
    boxplot/every box/.style={solid, line width=0.4pt},
    boxplot/every whisker/.style={solid, line width=0.4pt},
    boxplot/every median/.style={solid, line width=0.4pt},
    boxplot/every average/.style={solid, line width=0.4pt},
    boxplot/every outlier/.style={
      only marks,
      mark=*,
      mark size=1.5pt,
    },
  ]

    \addplot+[
      boxplot prepared={
        lower quartile=2.399732,
        upper quartile=2.578908,
        median=2.497972,
        lower whisker=2.130968,
        upper whisker=2.847672
      },
      boxplot/draw position=1,
      boxplot A
    ] coordinates {
      (1,52.081584)
      (1,54.423109)
    };

    \addplot+[
      boxplot prepared={
        lower quartile=5.635422,
        upper quartile=23.797081,
        median=14.529001,
        lower whisker=3.850995,    
        upper whisker=51.0395695
      },
      boxplot/draw position=1,
      boxplot B
    ] coordinates {
      (1,154.685061)
      (1,53.482702)
    };

    \addplot+[
      boxplot prepared={
        lower quartile=0.847338,
        upper quartile=0.871346,
        median=0.861764,
        lower whisker=0.811326,
        upper whisker=0.907358
      },
      boxplot/draw position=2,
      boxplot A
    ] coordinates {
      (2,0.919059)
    };

    \addplot+[
      boxplot prepared={
        lower quartile=25.985893,
        upper quartile=62.432756,
        median=27.584118,
        lower whisker=22.278503,   
        upper whisker=117.1030505
      },
      boxplot/draw position=2,
      boxplot B
    ] coordinates {};

  \end{axis}
\end{scope}

\begin{scope}[shift={(5cm,-3.5cm)}]
  \newcommand{\bluepoint}[3]{%
    \addplot+[
      only marks,
      mark=*,
      mark size=#1,
      draw=blue!30,
      fill=blue!30,
    ] coordinates {(#2,#3)};
  }

  \newcommand{\redpoint}[3]{%
    \addplot+[
      only marks,
      mark=*,
      mark size=#1,
      draw=red!30,
      fill=red!30,
    ] coordinates {(#2,#3)};
  }

 \begin{axis}[
    width=10cm,
    height=3.5cm,
    xmode=log,
    log basis x=10,
    ytick={1,2},
    yticklabels={$2 \cdot \A_3$, $3 \cdot \A_3$},
    ymin=0.5,      
    ymax=2.5,      
    xlabel={Dimension},
    ylabel={},
    cycle list={}, 
  ]


    \bluepoint{9pt}{427}{1}   
    \bluepoint{1pt}{1072}{1}
    \bluepoint{10pt}{133}{2}

    \redpoint{4pt}{2180}{1}
    \redpoint{3pt}{2198}{1}
    \redpoint{3pt}{1418}{1}
    \redpoint{1pt}{4589}{1}
    \redpoint{7pt}{2198}{2}
    \redpoint{2pt}{3217}{2}
    \redpoint{2pt}{3237}{2}

  \end{axis}
\end{scope}

\end{tikzpicture}

\caption{Comparison of the sparse (blue) and dense (red) approaches with respect to $A_3$. Once again, the sparse approach clearly outperforms the dense approach. In fact, the dense approach failed to solve any degree-four instance, whereas the sparse approach solved all such instances without difficulty. Consequently, no direct comparison for degree-four instances could be included in the figure.
}
\end{figure}

\vspace{\baselineskip}

\subsection{Computing one root vs. all roots}
To evaluate the accuracy of computing a single root versus all roots, we conducted experiments in the dense setting using the Julia package \cite{MomentPolynomialOpt}. For the underlying optimization, we employed the COSMO solver \cite{Garstka_2021}, selected for its ADMM-based approach, which is particularly effective for handling large SDP instance sizes with lower memory requirements compared to interior-point methods. This makes COSMO especially suitable for the size of the SDP problems in our experiments.


The results are suboptimal, as we do not perform a flat extension check; instead, we heuristically select the truncation degree \( t \) (see Algorithm \ref{AlgorithmMomentApproachForLx}) and proceed. Moreover, multiple reruns were often required because the solver frequently found more than one solution---an unexpected outcome, since the optimization over a spectrahedron is typically unique.

In order to measure the numerically accuracy, we use the backward error (BWE), see \cite[Appendix C]{TelenPhD}, of a numerical approximation $\lf_\mathfrak m$ of a solution contained in $\mathcal{V}_{\RA}(f_1,\dots,f_m)$. This measure is defined as 
\begin{equation}
\label{eq:BWE}
    \mathrm{BWE}(\lf_\mathfrak m) = \frac{1}{m} \sum_{i = 1}^m \frac{|\lf_\mathfrak m (f_i)|}{\sum_{\alpha \in \A_\infty}|c_{\alpha,f} \lf_\mathfrak m(x^\alpha)| + 1},
\end{equation}
where $f_i = \sum_{\alpha \in \A_\infty} c_{\alpha,f} x^\alpha$.


We present tables comparing the following metrics: 
\begin{center}
\begin{tabular}{ |p{2.5cm}|p{12cm}| }
\hline
\multicolumn{2}{|c|}{Description of metrics} \\
\hline
D & Degree of the dense polynomials. \\
\hline
\#vars & Number of variables. \\
\hline
t & Truncation degree used in the computations. \\
\hline
time & Total solver time in seconds. \\
\hline
BWE & The BWE computed by applying a random linear functional, compared to the BWE using a constant 1 functional. \\
\hline
\#sol & Number of solutions found in both approaches. \\
\hline
\end{tabular}
\end{center}



The input for the software \texttt{MomentPolynomialOpt.jl} has the following structure. 
Suppose we want to solve the polynomial system \( f_1,\dots,f_m \in \mathbb{R}[x_1,\dots,x_n] \). 
We choose a truncation degree \( t \ge \max_{k=1,\dots,m} \deg(f_k) \) and generate a random polynomial \( c \in \Rx_t \). 
We then consider the following inputs:
\begin{align*}
\texttt{const:}\quad & 
    v_{\texttt{const}},\, M_{\texttt{const}}
    = \texttt{optimize}(:\mathrm{min},\, 1\,,\,[f_1,\dots,f_m],\, [],\,[x_1,\dots,x_n],\, t,\, \mathrm{COSMO}), \\
\texttt{rnd:}\quad &
    v_{\texttt{rnd}} \;\;\;,\, M_{\texttt{rnd}} \;\,\,
    = \texttt{optimize}(:\mathrm{min},\, c\;,\,[f_1,\dots,f_m],\, [],\,[x_1,\dots,x_n],\, t,\, \mathrm{COSMO}).
\end{align*}
The function \texttt{optimize} returns the optimal values \( v_{\texttt{const}} \) and \( v_{\texttt{rnd}} \) for the constant objective \(1\) and for the random objective \(c\), respectively. 
Using the package's function \texttt{get\_measure}, we can recover the solution set of
\[
    f_1 = \dots = f_m = 0
\]
with respect to the chosen objective. 
This can be achieved with the following prompt:
\begin{align*}
    &\_,\, X_{\texttt{const}} = \texttt{get\_measure}(M_{\texttt{const}}), \\
    &\_,\, X_{\texttt{rnd}} \;\;\,  = \texttt{get\_measure}(M_{\texttt{rnd}}).
\end{align*}
The matrices \( X_{\texttt{const}} \) and \( X_{\texttt{rnd}} \) contain, as column vectors, the minimizers of the objective functions \(1\) and \(c\), subject to the constraints \( f_1 = \dots = f_m = 0 \). 
If \( X_{\texttt{rnd}} \) contains exactly one column, we select from \( X_{\texttt{const}} \) the column that minimizes the Euclidean distance to the column of \( X_{\texttt{rnd}} \), and we compare the corresponding BWEs.

\begin{figure}[H]
\centering

\begin{subtable}[t]{0.44\textwidth}
  \centering
  \caption{First test}
  \resizebox{\textwidth}{!}{\begin{tabular}{lccccc}
  \toprule
  (D,\#vars,t) & \multicolumn{2}{c}{T(total)} & \multicolumn{1}{c}{$\approx \mathrm{log}_{10} (\mathrm{BWE})$} & \multicolumn{2}{c}{\#sol}  \\
  \cmidrule(lr){2-3} \cmidrule(lr){4-4} \cmidrule(lr){5-6} 
  & rnd & const  & diff & rnd & const  \\
  \midrule
  $(5,2,9)$ & 0.319 & 0.283 & 0.852 & 1 & 2.558  \\
  $(5,2,12)$ & 1.216 & 0.783 & 0.558 & 1 & 2.558 \\
  $(5,2,15)$ & 4.530 & 3.185 & 0.820 & 1 & 2.538  \\
  $(5,2,17)$ & 7.851 & 8.428 & 0.675 & 1 & 2.3 \\
  \bottomrule
  \end{tabular}}
  \label{tab:test1}
\end{subtable}
\hfill
\begin{subtable}[t]{0.44\textwidth}
  \centering
  \caption{Second test}
  \resizebox{\textwidth}{!}{\begin{tabular}{lccccc}
  \toprule
(D,\#vars,t) & \multicolumn{2}{c}{T(total)} & \multicolumn{1}{c}{$\approx \mathrm{log}_{10} (\mathrm{BWE})$} & \multicolumn{2}{c}{\#sol}  \\
\cmidrule(lr){2-3} \cmidrule(lr){4-4} \cmidrule(lr){5-6} 
& rnd & const  & diff & rnd & const  \\
\midrule
$(7,2,12)$ & 2.182 & 1.495 & 2.081 & 1 & 5.513  \\
$(7,2,15)$ & 4.094 & 3.646 & 1.241 & 1 & 3.379 \\
$(7,2,17)$ & 10.533 & 8.672 & 1.051 & 1 & 3.692  \\
\bottomrule
  \end{tabular}}
  \label{tab:test2}
\end{subtable}

\vspace{1em}

\begin{subtable}[t]{0.44\textwidth}
  \centering
  \caption{Third test}
  \resizebox{\textwidth}{!}{\begin{tabular}{lccccc}
  \toprule
(D,\#vars,t) & \multicolumn{2}{c}{T(total)} & \multicolumn{1}{c}{$\approx \mathrm{log}_{10} (\mathrm{BWE})$} & \multicolumn{2}{c}{\#sol}  \\
\cmidrule(lr){2-3} \cmidrule(lr){4-4} \cmidrule(lr){5-6} 
& rnd & const  & diff & rnd & const  \\
\midrule
$(8,2,12)$ & 2.016 & 0.977 & 1.571 & 1 & 5.607  \\
$(8,2,14)$ & 4.241 & 3.422 & 1.617 & 1 & 6.176 \\
$(8,2,15)$ & 5.451 & 4.001 & 1.833 & 1 & 5.75  \\
\bottomrule
  \end{tabular}}
  \label{tab:test3}
\end{subtable}
\hfill
\begin{subtable}[t]{0.44\textwidth}
  \centering
  \caption{Fourth test}
  \resizebox{\textwidth}{!}{\begin{tabular}{lccccc}
  \toprule
(D,\#vars,t) & \multicolumn{2}{c}{T(total)} & \multicolumn{1}{c}{$\approx \mathrm{log}_{10} (\mathrm{BWE})$} & \multicolumn{2}{c}{\#sol}  \\
\cmidrule(lr){2-3} \cmidrule(lr){4-4} \cmidrule(lr){5-6} 
& rnd & const  & diff & rnd & const  \\
\midrule
$(9,2,13)$ & 3.489 & 1.839 & 2.842 & 1 & 9.631  \\
$(9,2,14)$ & 4.644 & 3.366 & 1.482 & 1 & 8.103 \\
$(9,2,15)$ & 6.614 & 4.181 & 1.928 & 1 & 7.071  \\
$(9,2,16)$ & 9.271 & 5.661 & 1.758 & 1 & 8.586  \\
$(9,2,17)$ & 11.163 & 9.511 & 1.576 & 1 & 7.538  \\
\bottomrule
  \end{tabular}}
  \label{tab:test4}
\end{subtable}

\vspace{1em}

\begin{subtable}[t]{0.44\textwidth}
  \centering
  \caption{Fifth test}
  \resizebox{\textwidth}{!}{\begin{tabular}{lccccc}
  \toprule
(D,\#vars,t) & \multicolumn{2}{c}{T(total)} & \multicolumn{1}{c}{$\approx \mathrm{log}_{10} (\mathrm{BWE})$} & \multicolumn{2}{c}{\#sol}  \\
\cmidrule(lr){2-3} \cmidrule(lr){4-4} \cmidrule(lr){5-6} 
& rnd & const  & diff & rnd & const  \\
\midrule
$(10,2,13)$ & 2.287 & 2.221 & 2.541 & 1 & 14.583  \\
$(10,2,14)$ & 5.650 & 3.161 & 2.235 & 1 & 13.235 \\
$(10,2,15)$ & 6.089 & 4.054 & 2.4 & 1 & 14.2  \\
$(10,2,16)$ & 9.055 & 6.751 & 2.217 & 1 & 9.739  \\
\bottomrule
  \end{tabular}}
  \label{tab:test5}
\end{subtable}
\caption{For each instance of (D, {\#}vars, t), we ran the algorithms 60 times on independently generated random polynomial systems to collect the data. The comparison is restricted to cases where the random approach computed exactly one solution, which happened in approximately 50--60{\%} of all runs. The column \emph{diff} reports the improvement in accuracy of the BWE for the random approach relative to the constant approach. Since we restrict our attention to the setting in which the random approach yields exactly one solution, the difference corresponds precisely to this unique solution.
}
\label{fig:const_vs_rnd}
\end{figure}

We observe in Figure \ref{fig:const_vs_rnd} that the accuracy of a given solution can be improved. However, the improvement is relatively modest, which we suspect is due to limitations of the SDP solvers. Another difficulty is that the random approach frequently computes more than one solution, which limits the number of directly comparable instances.

%% file: Sections/Appendix.tex
\appendix

\section{Solving Sparse Polynomial Systems}

In this appendix, we adapt border-basis to the semigroup framework
needed in the paper and explain the corresponding eigenvalue
method.

\subsection{Sparse Border Bases}
\label{sec:border_basis}

We want to generalize border bases to semigroups generated by a finite
subset $\A = \{\alpha_1,\dots,\alpha_m\}$ contained in
$\mathbb{Z}^n$. This is a generalization of the theory developed in
\cite{ToricBorderBases}.  More precisely, Mourrain and Tr\'ebuchet
introduced toric border bases in a Laurent polynomial ring, using
multiplication by the variables and their inverses
\cite{ToricBorderBases}. Here we adapt their definitions to the
semigroup algebra $\RA$, where the generators are the monomials
$x^{\alpha_i}$ and the multiplication maps must satisfy all the
relations in $\ker(\psi_\A)$. The construction also builds on the
normal-form criterion of Mourrain \cite{NormalForms}.  We begin with
the following definition.  The prolongation and border below
  extend the corresponding notions from
  \cite[Section~1]{ToricBorderBases}, replacing multiplication by
  $x_i^{\pm1}$ with multiplication by the generators $x^{\alpha_i}$.

\begin{definition}
Let $\A = \{\alpha_1,\dots,\alpha_m\}$ be a finite subset contained in $\mathbb{Z}^n$.
Given a set $\B \subset \{x^{\alpha} : \alpha \in \A_\infty\}$ and a linear subspace $\Lin \subset \RA$, define the new sets 
\begin{equation*}
\B_{\A}^{+} \coloneqq \B \cup \bigcup_{i = 1}^m x^{\alpha_i} \B = \B \cup \{x^{\alpha_i} b : b \in \B, i=1,\dots,m\}, \quad \partial \B_{\A} = \B_{\A}^+ \setminus \B,
\end{equation*}
called, respectively, the prolongation of $\B$ and the border of $\B$ with respect to $\A$, as well as the linear subspace
\begin{equation*}
\Lin_{\A}^+ \coloneqq \Lin + x^{\alpha_1} \Lin + \dots + x^{\alpha_m} \Lin.
\end{equation*}
Note that for a monomial subset $\B$ it holds $\mathrm{lin}(\B)_{\A}^+ = \lin(\B_{\A}^+)$.
We denote by $\B_{\A}^d \coloneqq (\B_{\A}^{d-1})_{\A}^+$, the $d$-th power of the operator $'+'$ on $\B$. 
By convention, $\B_{\A}^0 = \B$ and we also write $\B_{\A}^* \coloneqq \bigcup_{d \geq 0} \B_{\A}^d$.
Analogously define $\Lin_{\A}^d \coloneqq (\Lin_{A}^{d-1})_{\A}^+$, where $\Lin_{\A}^0 = \Lin$ and $\Lin_{\A}^* \coloneqq \cup_{d \geq 0} \Lin_{\A}^d$.
It is easy to verify that $\Lin_{\A}^* = \langle \Lin \rangle$, the ideal generated by $\Lin$.
\label{DefinitionProlongationAndBorderWithA}
\end{definition}

We further need to generalize the connectivity property.
Therefore, we need the degree with respect to $\A$, as defined in Definition \ref{DefinitionRA}, which is essentially the same as the index of a polynomial with respect to the subspace $\Lin \coloneqq \lin(1)$ and $\A$.
The monomial condition is the corresponding adaptation of the set
connected to $1$ in \cite[Section~1]{ToricBorderBases}.

\begin{definition} \label{def:connected-to-one}
Let $\B$ be a monomial set.
The set $\B$ is said to be connected-to-one over $\mathcal{A}$ if $1 \in \B$ and each $m \in \B \setminus \{1\}$ can be written as $m = x^{\alpha_{i1}} \cdots x^{\alpha_{ik}}$ with $x^{\alpha_{i1}},x^{\alpha_{i1}} x^{\alpha_{i2}},\dots,x^{\alpha_{i1}} \cdots x^{\alpha_{ik}} \in \B$, and $\alpha_{il} \in \A$.
We say a linear subset $\Lin \subseteq \RA$ is connected to $1 \in \Lin$, if for any $f \in \Lin$ with $\deg_{\A}(f) > 0$, there exists $f_1,\dots,f_m \in \Lin$ such that
\begin{equation}
f = \sum_{i=1}^m x^{\alpha_i} f_i,
\label{EquationConnectedTo1}
\end{equation}
with $f_i$ a multiple of $1$ and $\deg_{\A}(f_i) < \deg_{\A}(f)$.
\label{DefinitionConnectedTo1}
\end{definition}

To generalize the multiplication operators, as we deal with finite subsets $\A \subset \mathbb{Z}^n$, we need some additional constraints for the projection $N$, taking the geometry of $\A$ into account.
This is the analogue of the commutation and inversion conditions in
\cite[Theorem~2.3]{ToricBorderBases}. For a general semigroup algebra, these
conditions are replaced by compatibility with every relation among the
generators.

\begin{definition}
\label{DefinitionMiNProjectionGeneralized}
Consider a finite subset $\A = \{\alpha_1,\dots,\alpha_m \} \subset \mathbb{Z}^n$.
Let $N: \mathcal{L}^+ \rightarrow \mathcal{L}$ be a linear map.
We say that $N$ is a projection with respect to $\A$, if the following two conditions are satisfied:
\begin{itemize}
\item[i)] $N \cdot N = N$,
\item[ii)] Whenever elements $\beta_1,\dots,\beta_j,\gamma_1,\dots,\gamma_k \in \A$ satisfy $\sum_{i = 1}^j \beta_i = \sum_{l = 1}^k \gamma_l$, the following identity holds for all $f \in \Lin$
\begin{equation*}
N(x^{\beta_1} N( \dots N(x^{\beta_{j-1}} N( x^{\beta_j} f)) \dots ) = N(x^{\gamma_1} N( \dots N(x^{\gamma_{k-1}} N( x^{\gamma_k} f)) \dots ).
\end{equation*}
\end{itemize}
Further, for a projection $N$ with respect to $\A$, and $i \in \{1,\dots,m\}$ we define the \emph{multiplication operator} $\mathcal{M}_i$, as the linear map
\begin{equation*}
\mathcal{M}_i : \mathcal{L} \rightarrow \mathcal{L}, \quad f \mapsto N(x^{\alpha_i} f).
\end{equation*}
\end{definition}

The second condition in Definition \ref{DefinitionMiNProjectionGeneralized} for a projection $N$ with respect to $\A$ ensures that we can define the object $f(\mathcal{M})$, $f \in \RA$, despite the fact that we have $m = |\A|$ multiplication maps instead of $k = \dim(\A_\infty)$.
Recall the map 
\begin{equation*}
\psi : \R[x_1,\dots,x_m] \rightarrow \RA, \quad x_i \mapsto x^{\alpha_i},
\end{equation*}
which is surjective.
Now for a polynomial $p = \sum_{\beta \in \mathbb{N}^m} p_\beta x^\beta \in \ker(\psi)$ it holds that $p(\mathcal{M}) = 0$, where $p(\mathcal{M}) = \sum_{\beta \in \mathbb{N}^m} p_\beta \mathcal{M}_1^{\beta_1} \cdots \mathcal{M}_m^{\beta_m}$.
Thus, the object $f(\mathcal{M}) \coloneqq \psi^{-1}(f)(\mathcal{M})$, where $f \in \RA$, is well-defined and does not depend on the chosen element of $\psi^{-1}(f)$.

With this notation, the normal-form and border-basis criteria of
\cite{NormalForms,ToricBorderBases} carry over once one verifies that
the induced multiplication maps factor through the toric ideal
$\ker(\psi_\A)$. We refer to these papers for the original statements
and proofs.


\subsection{Sparse Eigenvalue Method}
\label{SectionEigenvalue}

The theory about sparse border bases allows us to construct a basis for the quotient space $\RA / \rIRA$ and thus the computation of the multiplication operators $\mathcal{M}_i$.
The next theorem is a generalization of \cite[Proposition  4.7]{UsingAlgGeo} and will help us in order to solve a polynomial system by only computing the eigenvalues and eigenvectors of a single multiplication map $m_f: \RA/\rIRA \rightarrow \RA/\rIRA, h \mapsto h \cdot f$, for which the polynomial $f \in \RA$ needs to fulfill some extra properties.
This is the eigenvalue method for polynomial roots, also called the
Stickelberger eigenvalue theorem; see, in particular,
\cite{Mourrain1998MatrixMethods} and \cite[Theorem~3]{laurent2012approach}. We give
the short proof in the notation of the semigroup algebra.

\begin{theorem}
Let $I \subseteq \RA$ be a real radical ideal with $1 \leq r = |\mathcal{V}_{\RA}(I)| < \infty$.
Suppose $f \in \RA$ is chosen such that the values $\lf_{\mathfrak m}(f)$ are distinct for $\mathfrak m \in \VL(I)$.
Further, let $\B = \{x^{\beta_1},\dots,x^{\beta_r} \}$ be a basis of $\RA / I$, where $\beta_i \in \A_\infty$.
Then the left eigenspaces of the matrix $m_f$ are $1$-dimensional and are spanned by the row vectors $(\lf_{\mathfrak m}(x^{\beta_1}),\dots,\lf_{\mathfrak m}(x^{\beta_r}))$ for $\mathfrak m \in \VL(I)$.
\label{TheoremStickelberger2}
\end{theorem}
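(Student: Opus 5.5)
The plan is to reduce everything to the structure of $\RA/I$ given by Lemma~\ref{lem:Real_Nullstellensatz_Toric_variety}. Since $I$ is real radical and $|\mathcal V_{\RA}(I)| = r$ is finite, that lemma gives an isomorphism $\Theta:\RA/I\to\R^r$. Write $\mathcal V_{\RA}(I) = \{\mathfrak m_1,\dots,\mathfrak m_r\}$. Then the components of $\Theta$ are exactly the evaluations $\lf_{\mathfrak m_k}$, which are well defined on the quotient because $I\subseteq\mathfrak m_k$. I would make this explicit via the Chinese remainder theorem: the map $h\mapsto(\lf_{\mathfrak m_1}(h),\dots,\lf_{\mathfrak m_r}(h))$ has kernel $\bigcap_k\mathfrak m_k = \sqrt[\R]{I} = I$ (by Proposition~\ref{prop:toric_nullstellen}), and it is surjective by comaximality. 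Hence it is a ring isomorphism. In particular, for each $k$ there are idempotents $e_k\in\RA/I$ with $\lf_{\mathfrak m_j}(e_k)=\delta_{jk}$.

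Next comes the eigenvector computation. Let $m_f$ be represented in the basis $\B$, with the convention that column $j$ holds the coordinates of $x^{\beta_j}f$. Set $v_k := (\lf_{\mathfrak m_k}(x^{\beta_1}),\dots,\lf_{\mathfrak m_k}(x^{\beta_r}))$. Writing $x^{\beta_j} f \equiv \sum_i (m_f)_{ij} x^{\beta_i}$ modulo $I$ and applying the ring homomorphism $\lf_{\mathfrak m_k}$, which kills $I$, gives
\[
\lf_{\mathfrak m_k}(f)\,\lf_{\mathfrak m_k}(x^{\beta_j}) = \sum_i \lf_{\mathfrak m_k}(x^{\beta_i})(m_f)_{ij}.
\]
That is, $v_k m_f = \lf_{\mathfrak m_k}(f)\, v_k$. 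Each $v_k$ is nonzero, since otherwise $\lf_{\mathfrak m_k}$ would vanish on the whole basis, hence on $\RA/I$, contradicting $\lf_{\mathfrak m_k}(1)=1$.

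Finally, the vectors $v_1,\dots,v_r$ are linearly independent: they are the rows of the matrix of the isomorphism above, expressed in the basis $\B$. So $m_f$ has $r$ independent left eigenvectors. Their eigenvalues $\lf_{\mathfrak m_k}(f)$ are pairwise distinct by hypothesis, so $m_f$ has $r$ distinct eigenvalues on an $r$-dimensional space. Consequently every eigenspace is one-dimensional and is spanned by the corresponding $v_k$.

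The only point needing care is the first step. One has to confirm that the identification of Lemma~\ref{lem:Real_Nullstellensatz_Toric_variety} is realized precisely by the evaluations at the points of $\mathcal V_{\RA}(I)$, and this is where $I$ being real radical, and not merely radical, is used. The remaining steps are routine linear algebra, and the choice of row versus column convention for $m_f$ only needs to be fixed consistently.
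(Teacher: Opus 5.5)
Your proposal is correct and takes essentially the same route as the paper. You identify $\RA/I$ with $\R^r$ via the evaluations at the points of $\mathcal V_{\RA}(I)$, note that the rows $(\lf_{\mathfrak m}(x^{\beta_1}),\dots,\lf_{\mathfrak m}(x^{\beta_r}))$ of this isomorphism's matrix are linearly independent left eigenvectors of $m_f$ with the pairwise distinct eigenvalues $\lf_{\mathfrak m}(f)$, and conclude; you additionally spell out two points the paper leaves implicit: that the isomorphism is realized by the evaluations (Chinese remainder theorem plus the toric real Nullstellensatz), and the entrywise check of $v_k m_f=\lf_{\mathfrak m_k}(f)\,v_k$.
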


\begin{proof}
By Lemma~\ref{lem:Real_Nullstellensatz_Toric_variety}, evaluation induces the
isomorphism
\[
 \Theta:\RA/I\longrightarrow\R^r,
 \qquad [g]\longmapsto
 \bigl(\lf_{\mathfrak m}(g)\bigr)_{\mathfrak m\in\mathcal V_{\RA}(I)}.
\]
Under this isomorphism, multiplication by $f$ corresponds to the diagonal
operator with entries $\lf_{\mathfrak m}(f)$. Hence the rows of the matrix of
$\Theta$ in the basis $\B$, namely
\[
 \bigl(\lf_{\mathfrak m}(x^{\beta_1}),\dots,
 \lf_{\mathfrak m}(x^{\beta_r})\bigr),
\]
are left eigenvectors of $m_f$. Since $\Theta$ is an isomorphism and the
values $\lf_{\mathfrak m}(f)$ are pairwise distinct, these are all the left
eigenspaces and each has dimension one.
\end{proof}

Theorem \ref{TheoremStickelberger2}, which is the generalized version of the Stickelberger Theorem \cite[Theorem 3]{laurent2012approach}, allows us to apply the eigenvalue method for real root-finding problems.
We want to explain how to use Theorem \ref{TheoremStickelberger2} in order to find all  the points $\VL(I)$, if $\VL(I)$ is finite.
Recovering the roots from the coordinates of the eigenvectors is the
standard matrix method described by Mourrain
\cite{Mourrain1998MatrixMethods}. The adaptation below recovers instead the
values of the semigroup generators $x^{\alpha_i}$.

Write $\A = \{\alpha_1,\dots,\alpha_m \}$ and assume that the ideal $I$ is real radical by replacing $I$ with $\rIRA$.
Then, we are interested in a basis $\B$ of the quotient space.
Here $\B = \{ x^{\beta_1},\dots,x^{\beta_r} \}$ is connected-to-one and can be computed by the theory of border basis.
This basis $\B \subset \RA$ is connected-to-one.
Notice that we assume $1 \in \B$ which can only happen if and only if $\VL(I) \neq \emptyset$, so we rule out the trivial case.
Now consider a random linear function 
\begin{equation*}
f = c_1 x^{\alpha_1} + \cdots + c_m x^{\alpha_m},
\end{equation*}
where $c_1,\dots,c_m$ are randomly chosen integers.
Almost every choice of $c_1,\dots,c_m$ will ensure that the values $\lf_{\mathfrak m}(f)$ for $\mathfrak m \in \VL(I)$ are distinct. 
We are able to construct the matrix $m_f$ relative to the monomial basis $\B$ by linear algebra.
This in turn allows us to compute the eigenvalues $\lambda$ and eigenvectors $w$ of the multiplication operator $m_f$.
An eigenvector $w$ combined with the basis $\B$ makes it possible to find the generators $x^{\alpha_i} - \lf_{\mathfrak m}(x^{\alpha_i})$ for a solution $\mathfrak m \in \VL(I)$, cf. Lemma \ref{lem:Max_ideals_in_RA}.

To see how this is done, note that Theorem \ref{TheoremStickelberger2} implies 
\begin{equation}
\label{EquationEigenvectorSolution}
w = c\cdot (\lf_{\mathfrak m}(x^{\beta_1}),\dots,\lf_{\mathfrak m}(x^{\beta_r})),
\end{equation}
for some nonzero constant $c$ and some $\mathfrak m \in \VL(I)$.
Our goal is to compute the generators of $\mathfrak m$ in terms of the coordinates of $w$.
Equation (\ref{EquationEigenvectorSolution}) implies that each coordinate of $w$ is of the form $c \cdot \lf_{\mathfrak m}(x^{\beta_j})$.

The computation of $\mathfrak m$ goes as follows. 
First of all, we have $[1] \in \B$ (meaning $\beta_j = 0$ for some $j$), so that $c$ is a coordinate of $w$.

The $i$-th generator of $\mathfrak m$ can be computed as follows:
If $[x^{\alpha_i}] \in \B$, also $c \cdot \lf_{\mathfrak m}(x^{\alpha_i})$ is a coordinate of $w$. 
Consequently, 
\begin{equation*}
\lf_{\mathfrak m}(x^{\alpha_i}) = \frac{c \cdot \lf_{\mathfrak m}(x^{\alpha_i})}{c}
\end{equation*}
is a ratio of coordinates of $w$.
This way, we get the generators $x^{\alpha_i} - \lf_{\mathfrak m}(x^{\alpha_i})$ of $\mathfrak m$ for all $i$ satisfying $[x^{\alpha_i}] \in \B$.

It remains to study the case when $[x^{\alpha_i}] \not \in \B$.
In this case we have 
\begin{equation}
\label{EquationComputingXjCoordinate}
[x^{\alpha_i}] = \sum_{j = 1}^r \lambda_j [x^{\beta_j}],
\end{equation}
for some scalars $\lambda_j \in \R$ (the linear combination can be extracted from the multiplication operator $\mathcal{M}_i$).
We know that $\frac{w}{c} = (\lf_{\mathfrak m}(x^{\beta_1}),\dots,\lf_{\mathfrak m}(x^{\beta_r}))$, hence we can use $\lf_{\mathfrak m}(x^{\beta_j})$ and plug it in the equation (\ref{EquationComputingXjCoordinate}) to get 
\begin{equation*}
\lf_{\mathfrak m}(x^{\alpha_i}) = \sum_{j = 1}^{r} \lambda_j \lf_{\mathfrak m}(x^{\beta_j}).
\end{equation*}
Thus, we can compute every generator of $\mathfrak m$.


